\documentclass[letterpaper, 10 pt, conference]{ieeeconf}
\usepackage[utf8]{inputenc} 
\IEEEoverridecommandlockouts  
\usepackage{amsmath,amsfonts}
\usepackage{amssymb}
\usepackage{amsthm}
\usepackage{array}
\usepackage{graphicx}
\usepackage{caption}

\usepackage{textcomp}
\usepackage{subfigure}
\usepackage{pifont}
\usepackage[justification=justified, font=small]{caption}
\usepackage{stfloats}
\usepackage{color}
\usepackage{multirow}
\usepackage{url}
\usepackage{verbatim}
\usepackage{tikz}
\usepackage{xcolor}
\usepackage{mdframed}
\usepackage{hyperref}
\let\labelindent\relax
\usepackage{enumitem}
\usepackage{balance}
\usepackage{cite}
\usepackage[linesnumbered,ruled,vlined]{algorithm2e}
\usepackage{booktabs}
\usepackage[normalem]{ulem}
\usepackage{float}
\usepackage{courier}

\usepackage[most]{tcolorbox}
\usepackage{graphicx}
\usepackage{tcolorbox}
\tcbset{
	runningexample/.style={
		colback=white,
		colframe=black,
		width=\linewidth,
		boxrule=0.6pt,
		arc=2pt,
		auto outer arc,
		breakable
	}
}
\usetikzlibrary{arrows.meta, positioning, fit, calc, bending, patterns.meta}
\tikzset{
	circ/.style={
		circle, draw=black, fill=white,
		inner sep=1pt, minimum size=18pt, font=\scriptsize
	},
	infobox/.style={
		draw=black, dotted, rounded corners=3pt, line width=0.5pt,
		fill=none, inner sep=5pt
	},
	arr/.style={-{Stealth[length=3pt, width=2.5pt]}, line width=0.5pt},
	lbl/.style={font=\scriptsize},
	ptitle/.style={font=\scriptsize\bfseries},
}

\newcommand{\Apol}{\sigma^{\text{act}}}

\newcommand{\Opol}{\sigma^{\text{op}}}

\newtheorem{corollary}{Corollary}
\newtheorem{definition}{Definition}
\newtheorem{theorem}{Theorem}

\newtheorem{remark}{Remark}

\newtheorem{assumption}{Assumption}

\newcommand{\rat}{{\mathsf{rat}}}
\newcommand{\rob}{{\mathsf{rob}}}

\renewcommand{\O}{\mathcal{O}}

\title{Opinion-Guided Layered Strategies for Decentralized Coordination}

\author{Shuhao Qi$^{1}$, Zhiyong Sun$^{2}$, Siep Weiland$^{1}$, and Sofie Haesaert$^{1}$%
\thanks{This work is supported by the Horizon Europe project AIGGREGATE under grant
No.~101202457, the European project COVER under grant No.~101086228, and the
European project SymAware under grant No.~101070802.}%
\thanks{$^{1}$S. Qi, S. Weiland, and S. Haesaert are with the Department of
Electrical Engineering, Eindhoven University of Technology, Eindhoven, The
Netherlands.
{\tt\small \{s.qi, s.weiland, s.haesaert\}@tue.nl}}%
\thanks{$^{2}$Z. Sun is with the College of Engineering, Peking University,
Beijing, China.
{\tt\small zhiyong.sun@pku.edu.cn}}%
}

\begin{document}
\maketitle

\begin{abstract}
Autonomous agents increasingly interact with other independent agents, and such interactions typically admit multiple joint behaviors. When two agents prefer different ones, their independent strategies may be mutually incompatible and fail to reach a coordinated outcome; when they are identical, neither can differentiate its role when needed. Ideally, an agent should coordinate with any agent it encounters, regardless of which admissible joint behavior that agent aims to realize. We therefore propose a new form of strategy, the opinion-guided strategy, which keeps all the admissible joint behaviors available and postpones the selection to execution time, when the other agent's behavior reveals which one to realize. To realize this, nonlinear opinion dynamics are leveraged in a layered realization to guide the agent to a common admissible joint behavior in response to the other agent's evolving behavior, even without communication. We formally establish the conditions under which the strategy remains robust to every preference the other agent may hold. This robustness has an important implication: two agents running identical strategies can break symmetry when needed, a capability that conventional strategies lack. Three case studies across different applications show that the opinion-guided strategy coordinates with every randomly encountered agent, as long as it is willing to realize one of the admissible joint behaviors. One of them corresponds to a general-sum game: unlike conventional approaches devoted to finding a unique Nash equilibrium in advance, the opinion-guided strategy keeps every equilibrium open and guarantees the agents reach one, decided by their runtime interaction.
\end{abstract}
\pagestyle{plain} 

\section{Introduction}

As a growing number of autonomous agents are deployed in the real world, each agent acts to accomplish its own tasks and satisfy its safety requirements, while interacting with other independent agents whose strategies are unknown. However, simultaneously meeting these objectives for all interacting agents remains an open problem. An increasing number of pathological coordination failures between autonomous vehicles (AVs) illustrate this challenge: as recently reported, multiple AVs from the same company have been observed becoming stuck and causing deadlocks in narrow, unsignalized intersections~\cite{YouTubeVideo_jam} and in parking lots~\cite{YouTubeVideo_honk}. To understand the underlying cause, consider the scenario in Fig.~\ref{fig:hutong}, where two vehicles encounter one another in a narrow corridor and must swap positions to proceed. Such situations are common in constrained environments such as dense residential areas. Intuitively, two joint behaviors can resolve the deadlock: (i) one AV pulls into the waiting lane while the other proceeds first, and (ii) the roles are reversed. Although the two joint behaviors are symmetric, each requires the AVs to play different roles. If the AVs agree on one of them, the conflict is resolved quickly; in practice, however, different vehicles may independently commit to incompatible choices. Worse still, the reports in~\cite{YouTubeVideo_jam, YouTubeVideo_honk} note that the participating AVs were from the same company and thus equipped with identical strategies, leaving them structurally unable to differentiate their roles in such symmetric situations.

We refer to the above as the \emph{decentralized coordination challenge}: given a set of admissible joint behaviors, interacting agents must realize one of them without knowing the choice of the other agent. The central problem is that, even when every agent is rational and committed to realizing an admissible joint behavior, their independent strategies may be mutually incompatible. This incompatibility takes two forms: (i) different agents may disagree on which joint behavior is preferable; and (ii) identical agents necessarily select the same action in symmetric situations, which cannot realize a joint behavior that requires distinct roles. In practice, an autonomous agent may encounter either kind of opponent, and cannot know in advance which. Ideally, an agent should be able to coordinate with any agent it encounters, as long as that agent is likewise rational and committed to one of the admissible joint behaviors. 


\begin{figure}[t]
    \centering
    \includegraphics[width=0.95\linewidth]{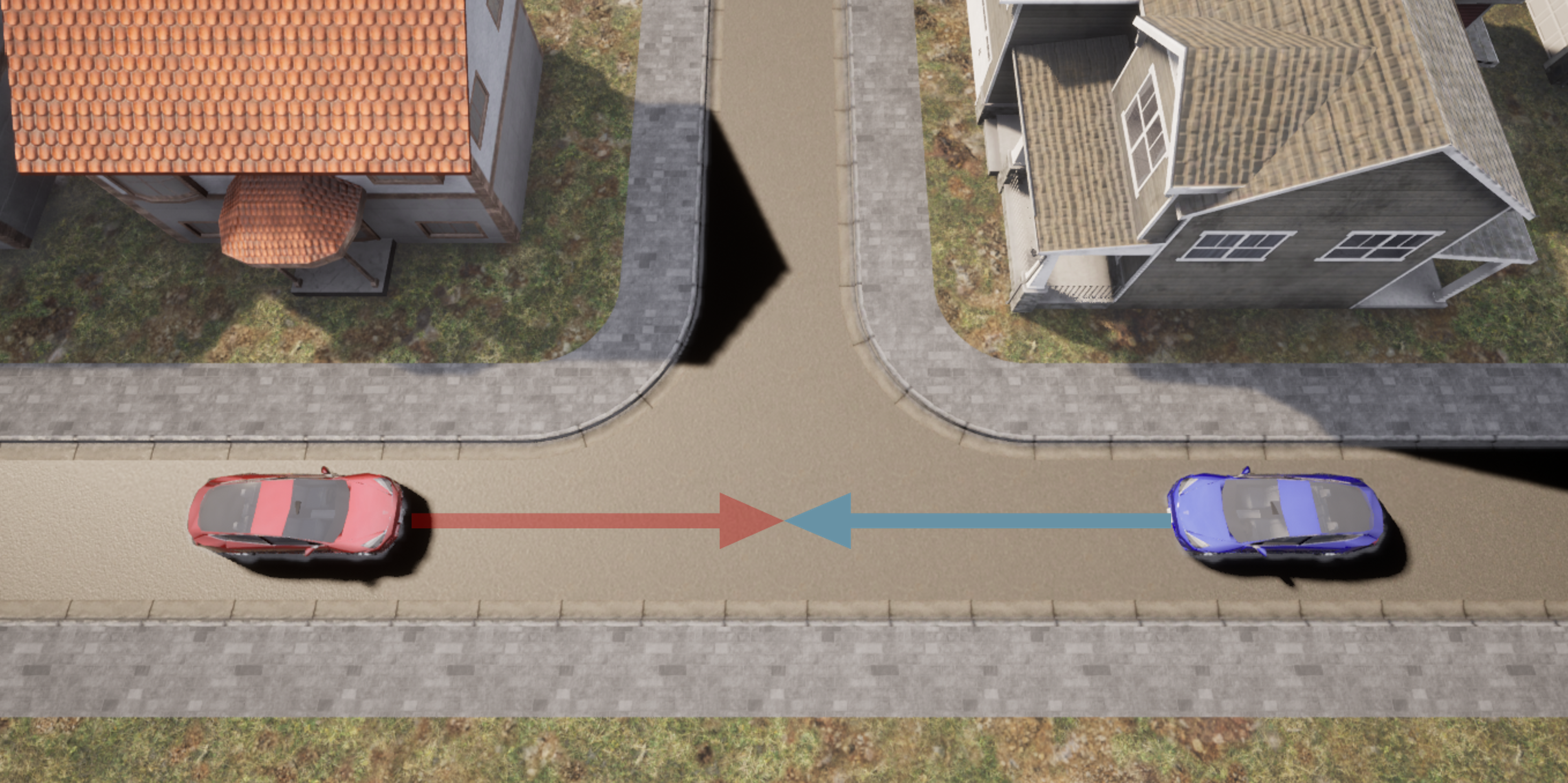}
    \caption{A typical encounter scenario in dense residential regions.}
    \label{fig:hutong}
\end{figure}

\subsection{Related Works}
A well-known instance of this decentralized coordination challenge is the general-sum game, which may admit multiple equilibria~\cite{bernet2002permissive, brice2025permissive}. This multiplicity has prompted much discussion in game theory and multi-agent reinforcement learning (MARL) over how agents should select among coexisting equilibria. When multiple equilibria coexist, decentralized best-response and learning dynamics may fail to converge, cycling among strategy profiles rather than settling on a single equilibrium~\cite{zhang2021multi, zinkevich2005cyclic}. To address this, classical game theory introduces selection criteria such as risk dominance and payoff dominance~\cite{harsanyi1988general}, along with equilibrium refinements
such as trembling-hand perfection~\cite{selten1975reexamination, farina2018practical}, while MARL research designs learning dynamics that steer convergence toward a desired equilibrium~\cite{zhang2024equilibrium, christianos2023pareto}.

In more practical settings, both the game-theoretic and MARL approaches sidestep this challenge and try to select a unique equilibrium in advance by injecting asymmetry, either \emph{explicitly}, through predefined roles and priorities as in Stackelberg games~\cite{zhang2020bi, fisac2019hierarchical}, level-$k$ reasoning~\cite{li2018game}, role-oriented learning~\cite{wangrode}, priority scheduling~\cite{ingebrand2023decentralized}, and ID-conditioned policies~\cite{fu2022revisiting}; or \emph{implicitly}, through jointly trained policies that fix coordination during training. Specifically, the commonly used centralized-training-with-decentralized-execution (CTDE)
paradigm~\cite{oliehoek2016concise, ozdaglar2023independent} settles on a joint behavior at training, which each agent then follows independently at execution. In summary, these explicit and implicit approaches fix the joint outcome at design or training time. Since agents encounter one another at random in practice, as autonomous vehicles do, such predefined asymmetry lacks the flexibility to handle an opponent produced by a different design or training procedure. In addition, the approach of injecting asymmetry overlooks the situation in which the encountered agent runs the identical strategy, where both agents would, for instance, claim priority over the other. 


As in game-theoretic and MARL research, most prior work in multi-agent planning and control also sidesteps the decentralized coordination challenge mentioned above. The existing conflict-resolution methods can be broadly grouped into three categories. (i)~\emph{Fixed rules and priorities}, such as right-hand-priority schemes~\cite{pierson2020weighted}, can be encoded directly into the controller, but they fail when another agent does not follow the same rules and priorities. (ii)~\emph{Perturbation-based methods} inject disturbances to break strict conflict conditions such as deadlock~\cite{wang2017safety}; although practical, they do not guarantee resolution and may violate safety constraints. (iii)~\emph{Communication-based coordination} resolves deadlocks through explicit information exchange. \emph{Centralized} schemes employ a coordinator that jointly manages agent interactions~\cite{mo16cdc}; while effective in small systems, they scale poorly as the number of agents grows~\cite{tomlin98TAC, pritchett2017negotiated}. By contrast, \emph{distributed} schemes rely on local communication among neighbors to negotiate passage, using techniques such as parametric control Lyapunov functions~\cite{weng2022convergence} and rotation controllers for position swaps~\cite{grover2023before, Arul2021iros}. In both cases, their reliance on communication makes them vulnerable to time delays, protocol ambiguity, and disturbances, particularly across heterogeneous agents~\cite{master2020}.

In contrast, human drivers rarely fix on a single plan ahead of time; instead, they keep several possible behaviors open and settle on one only as cues from the ongoing interaction make the right choice clear. This adaptive mechanism is precisely what existing autonomous systems are missing. Drawing on insights from social and biological sciences, the nonlinear opinion dynamics (NOD) framework can model interactions among agents~\cite{bizyaeva2022nonlinear, leonard2024fast}: each agent maintains a continuous opinion state representing its degree of agreement with a decision, updated in response to others' behavior. Moreover, NOD undergoes a supercritical pitchfork bifurcation at a critical attention level, and this bifurcation property enables agents to break symmetry and rapidly converge to a collaborative decision~\cite{bizyaeva2022nonlinear, leonard2024fast}. Crucially, this property targets the coordination challenge identified above. NOD has accordingly been used across multi-agent interaction tasks, such as resolving deadlocks~\cite{cathcart2023proactive}, generating adaptive priorities in narrow corridors~\cite{alghamdi2025opinion}, and stabilizing repeated and differential games~\cite{park2021tuning, hu2023emergent}. In our prior work, we integrated NOD into a safety filter to resolve blocking behaviors in a two-airplane encounter~\cite{qi2025opinion}, without relying on communication or fixed rules. However, NOD is continuous in nature, so existing works have primarily deployed it within a continuous-time controller over a continuous state space. In addition, these integrations are hand-tailored by experts for each specific task. Although the bifurcation property of NOD holds promise for the decentralized coordination challenge, coupling it to a continuous controller in this task-specific manner limits its scalability to general decision-making problems of multi-agent coordination.

\subsection{Contributions}
To address the decentralized coordination challenge, we propose a new form of strategy that lifts the core properties of NOD to the decision-making level. However, such lifting is non-trivial: it requires the resulting framework to preserve the bifurcation and symmetry-breaking properties of NOD while also overcoming the decentralized coordination challenge. In this work, our contributions are fourfold. First, we formalize the decentralized two-agent coordination problem by defining a set-valued \emph{permissive joint strategy} and a \emph{rational individual strategy}, together with the two requirements a strategy must satisfy under independent execution, \emph{robust coordination} and \emph{homogeneous implementability} (Section~\ref{sec:problem}). Second, we propose an \emph{opinion-guided strategy} and analyze the conditions under which it guarantees robust coordination against different types of rational opponents, and homogeneous implementability, even in symmetric situations (Sections~\ref{sec:opinion_guided} and~\ref{sec:analysis}). Third, we propose a layered framework that realizes the opinion-guided strategy, comprising opinion dynamics, a continuous controller, and an opinion estimator, which together enable each agent to adapt in real time to the evolving behavior of others and converge to a common admissible joint behavior while operating fully decentralized, without inter-agent communication (Section~\ref{subsec:comm_free}). Fourth, we evaluate the proposed framework across three case studies, the last of which corresponds to a classic general-sum game (Section~\ref{sec:case_study}).

\section{Background: The Coordination Challenge}

\subsection{Running Example: Charging-Bay Selection}\label{sec:example}

\begin{figure}[thb]
    \centering
    \includegraphics[width=0.55\linewidth]{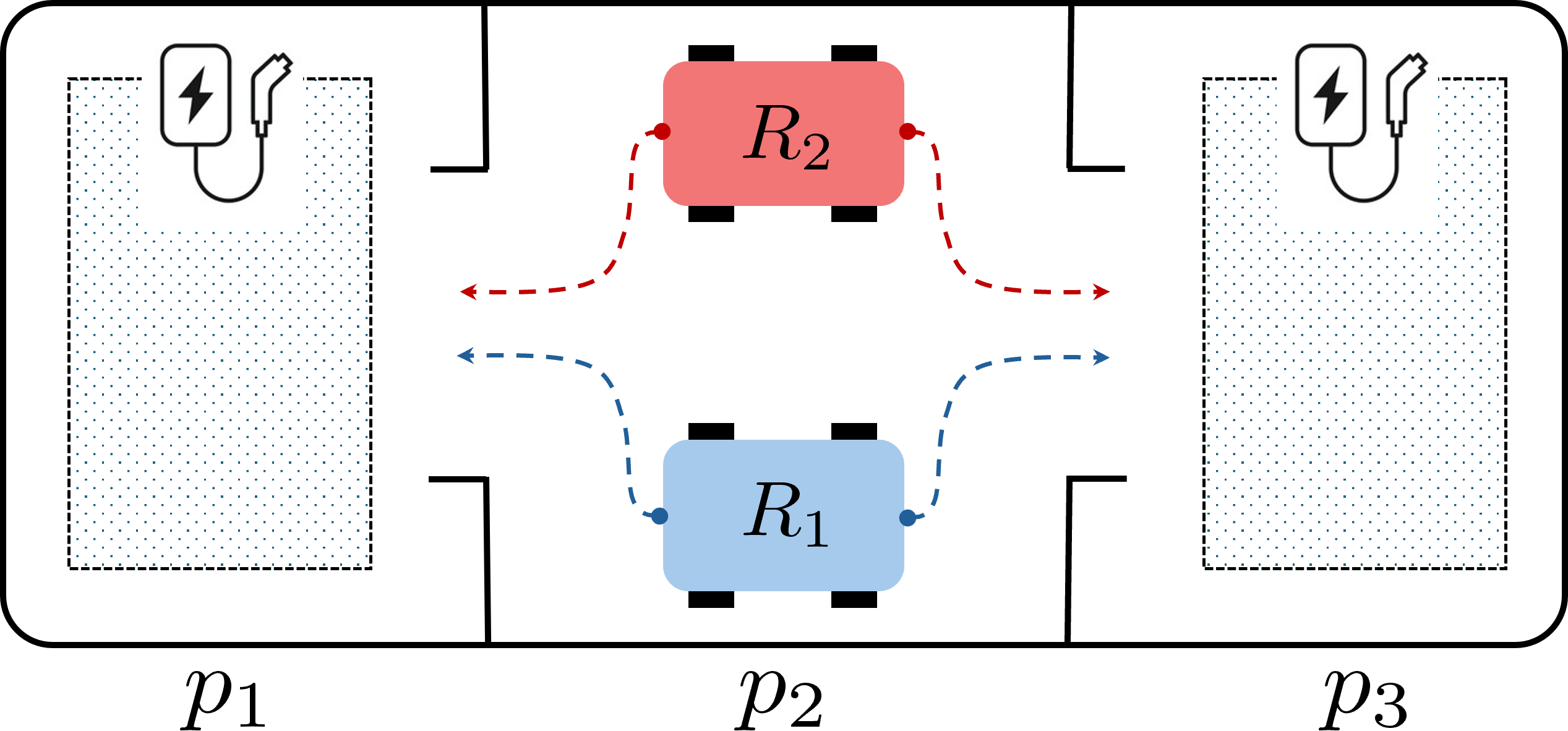}
    \caption{Sketch of the running example, a two-robot charging-bay selection task that instantiates the anti-coordination game. The robots $R_1$ (blue) and $R_2$ (red) start together in the middle zone $p_2$, while a charging bay is available in each of the side zones $p_1$ and $p_3$. The dashed arrows mark the two candidate moves of each robot (left toward $p_1$ or right toward $p_3$), and the task succeeds only when the two robots commit to distinct bays, so that both can recharge simultaneously.}
    \label{fig:example}
\end{figure}

To make the decentralized coordination challenges concrete, we begin with a simple two-robot example that instantiates the classical \emph{anti-coordination game}, of which the ``chicken game'' is a well-known instance~\cite{bramoulle2007anti}. In such a game, the agents are expected to take distinct actions to reach a desired outcome. 

For simplicity, we consider a single-stage game, so that each agent makes only one decision. As shown in Fig.~\ref{fig:example}, the two robots, denoted by $R_1$ and $R_2$, start at the middle zone $p_2$ and must be dispatched simultaneously to the two charging bays located in the target zones $p_1$ and $p_3$, with one robot assigned to each. Since each bay can serve only one robot at a time, and each robot selects its target independently and without knowledge of the other's choice, the task succeeds only when the two robots commit to different zones. Two admissible joint behaviors thus arise: $R_1$ moves to $p_1$ while $R_2$ moves to $p_3$, or vice versa. Since $R_1$ and $R_2$ may in reality be either heterogeneous or homogeneous, this minimal setting exposes two failure modes of decentralized coordination:
\begin{enumerate}[leftmargin=*, align=left]
    \item \textbf{Miscoordination risk:} The robots complete the charging task only when they independently select compatible strategies ($R_1$ to $p_1$, $R_2$ to $p_3$), yet they may just as well select incompatible ones (both to $p_1$ or both to $p_3$), and such compatibility can never be ensured.
    \item \textbf{Symmetry-breaking failure:} If the robots are homogeneous, whatever strategy they adopt, they make identical decisions concurrently in this symmetric scenario (both to $p_1$ or both to $p_3$) and thus can never commit to the distinct zones the task requires.
\end{enumerate}

\subsection{A Promising Approach: Nonlinear Opinion Dynamics}\label{sec:nod}

\begin{figure}[htb]
   \centering 
   \subfigure[Clockwise swap]{ \includegraphics[trim=100 0 100 0, clip, width=0.465\linewidth]{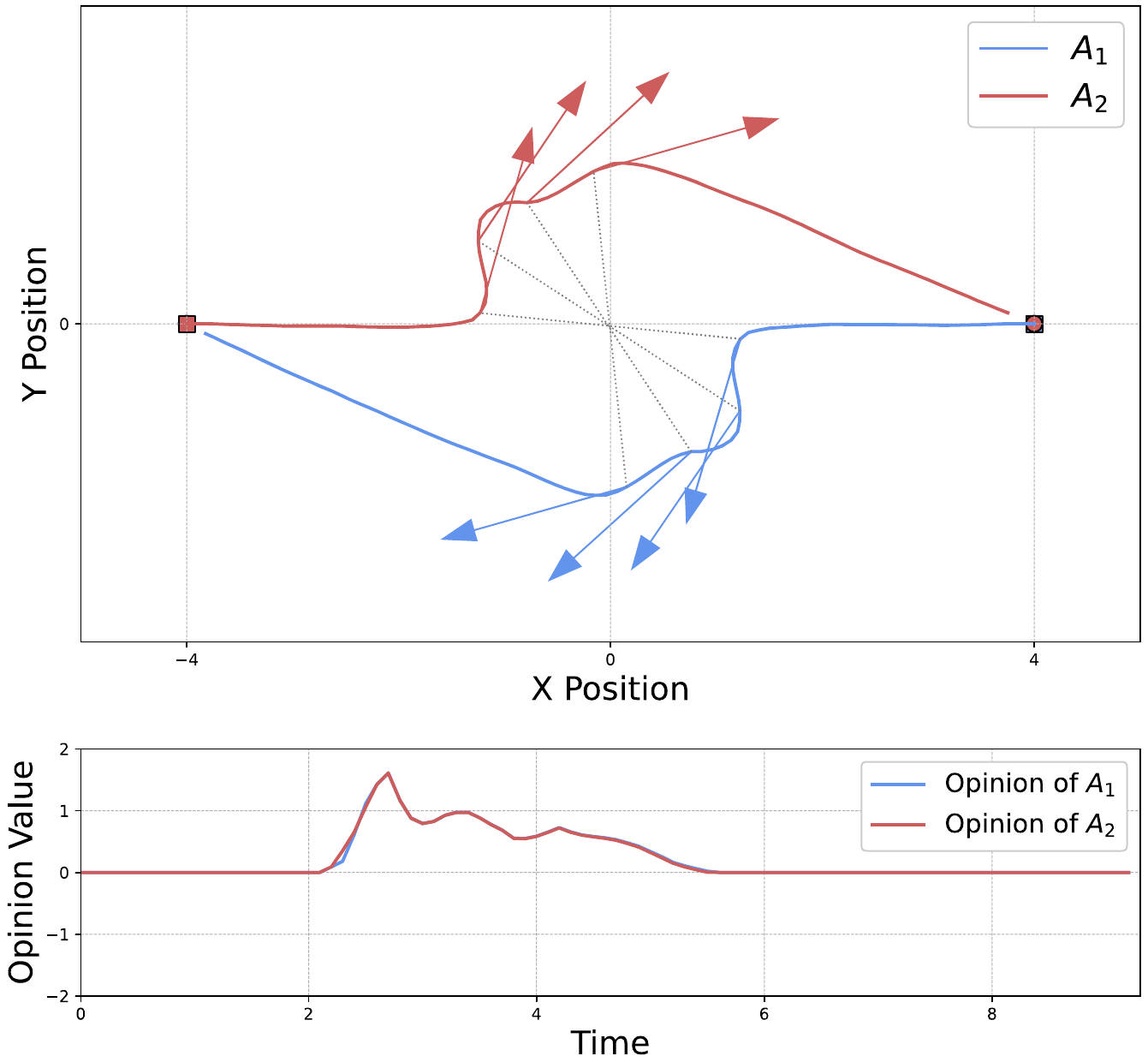} }
    \subfigure[Anti-clockwise swap]{ \includegraphics[trim=100 0 100 0, clip, width=0.465\linewidth]{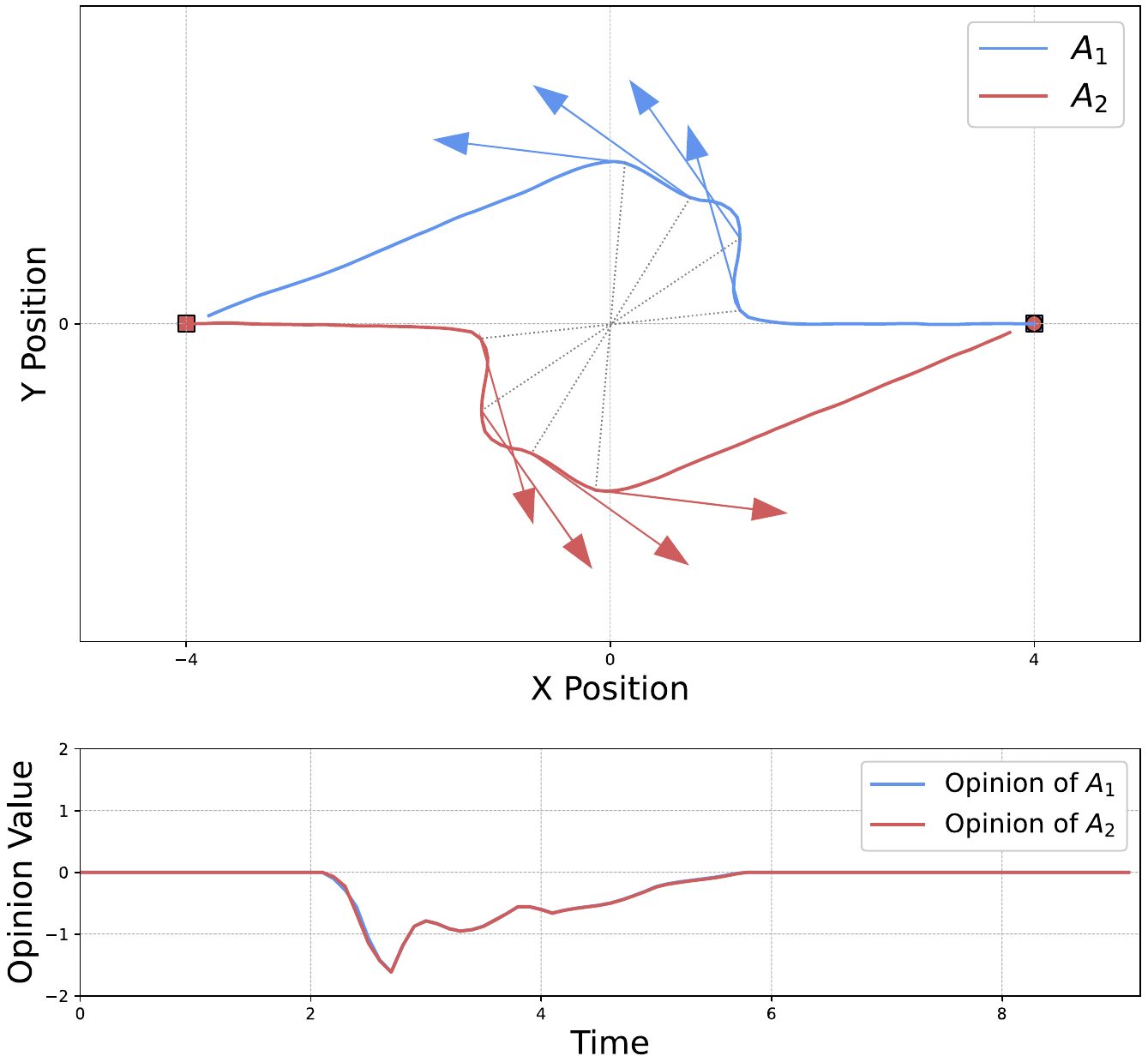} }
    \caption{Two resolution behaviors randomly generated in a two-airplane encounter. The upper panels show the airplanes' trajectories, while the bottom panels depict the evolution of the opinion states.
    }
    \label{fig:airplane_sim}
\end{figure}
In our prior work~\cite{qi2025opinion}, we proposed an approach based on NOD to mitigate the two coordination challenges described above at the continuous-control layer. This approach does not require predefined rules, fixed priorities, or inter-agent communication. Intuitively, an agent's opinion state represents its degree of agreement with a decision. The bifurcation property of NOD~\cite{leonard2024fast} makes each agent highly sensitive to small changes in the other's state, enabling the group to break symmetry when needed and rapidly converge to a collaborative decision. As shown in Fig.~\ref{fig:airplane_sim}, two airplanes are tasked to swap their positions, for which two symmetric joint behaviors are admissible: bypassing on the left or bypassing on the right. Each airplane infers the other's opinion state, which encodes its intent of the bypassing side, and updates its own opinion based on local observations. As a result, if one airplane commits to one bypassing side, the other can adapt to it and reach consensus on that side. If both airplanes are free to choose, the final outcome has two possibilities: clockwise or anticlockwise swapping (Fig.~\ref{fig:airplane_sim}). Which one is realized is determined by the real-time evolution of the opinion states. Therefore, the NOD-based controller can resolve both miscoordination and symmetry-breaking failures in a fully decentralized manner. This paper builds on this insight by lifting the core properties of NOD from the continuous-control layer to the discrete decision-making layer, enabling more general strategic coordination.

\section{Problem Formulation and Statement}\label{sec:problem}
This section formalizes the decentralized coordination problem. To examine this challenge in depth, we focus on two-agent interactions, an important step before considering interactions among more agents.

\subsection{Game Model and Strategies}

\begin{definition}[Two-Agent Game]\label{def:game}
A two-agent game is defined as a tuple
$
\mathcal{G} = (\mathcal{S}, s_0, A_1, A_2, \rightarrow),
$
where:
\begin{itemize}
    \item $\mathcal{S}$ is the set of joint states, with initial state $s_0 \in \mathcal{S}$;
    \item $A_1$ is the set of actions of Agent~1;
    \item $A_2$ is the set of actions of Agent~2; and
    \item $\rightarrow \subseteq \mathcal{S} \!\times\! (A_1 \!\times\! A_2) \!\times\! \mathcal{S}$
    is the transition relation that maps joint state and joint action to a successor state.
\end{itemize}
\end{definition}
\noindent An infinite sequence of joint actions $(a_{1,0}, a_{2,0}), $ $(a_{1,1}, a_{2,1}), (a_{1,2}, a_{2,2}), \ldots$ induces an execution of $\mathcal{G}$:
$$
s_0 \xrightarrow{(a_{1,0}, a_{2,0})} s_1 
\xrightarrow{(a_{1,1}, a_{2,1})} s_2 
\xrightarrow{(a_{1,2}, a_{2,2})} s_3 \ldots
$$
where $(s_t, (a_{1,t}, a_{2,t}), s_{t+1}) \in \rightarrow$ for all $t \ge 0$. We define the set of all possible finite executions (i.e., histories), based on which decisions can be made, as
$\mathcal{H} := (\mathcal{S} \times A_1 \times A_2)^\ast \times \mathcal{S}$,
where $X^\ast$ denotes the Kleene star, i.e., the set of all finite sequences over a set $X$. An element $h \!\in \mathcal{H}$ is a finite sequence of the form
$h = s_0 \xrightarrow{(a_{1,0}, a_{2,0})} s_1
\xrightarrow{(a_{1,1}, a_{2,1})} \cdots
\xrightarrow{(a_{1,t-1}, a_{2,t-1})} s_t$.  

\noindent\emph{Notation.} The labels $1,2$ denote the two specific agents. When describing the interaction between them, we use $i \in \{1,2\}$ to denote a generic agent and $j \in \{1,2\}\setminus\{i\}$ the other agent, so that $\{i,j\}=\{1,2\}$.

 \noindent\textbf{Individual strategy.} For each agent $i \!\in\! \{1, 2\}$, a deterministic individual strategy returns an individual action $a_i \!\in\! A_i$, denoted by $\sigma_i$. A commonly used individual strategy is a mapping from the state history to an action~\cite{hespanha2017noncooperative,kochenderfer2015decision}, i.e., $\sigma_i \!:\! \mathcal{H} \!\rightarrow\! A_i$. The strategy is \emph{memoryless (Markov)} if it is a function $\sigma_i : \mathcal{S} \rightarrow A_i$, meaning that the chosen action depends only on the current state. Given two individual strategies $\sigma_1 : \mathcal{H} \rightarrow A_1$ and $\sigma_2 : \mathcal{H} \rightarrow A_2$, the \emph{joint strategy} is the function $\sigma_1 \,\|\, \sigma_2 : \mathcal{H} \rightarrow A_1 \times A_2$, defined for each history $h \in \mathcal{H}$ by
\begin{equation}
	(\sigma_1 \,\|\, \sigma_2)(h) := \{(\sigma_1(h), \sigma_2(h))\}.
\end{equation}

\noindent\textbf{Permissive joint strategy.} 
A strategy for a multi-agent system is conventionally defined as a single-valued mapping $\mathcal{H} \!\rightarrow\!A_1 \!\times\!A_2$, which is suitable for a centralized approach in which all agents are coordinated to follow one prescribed joint action at each history. As illustrated in the running example, however, a multi-agent decentralized system may admit multiple equally admissible joint actions. Prior work has captured this multiplicity through set-valued strategies, such as permissive strategies in reactive synthesis~\cite{demri2022ltlf}, permissive controllers for interval MDPs~\cite{badings2025robust}, and permissive equilibria in multi-player reachability games~\cite{brice2025permissive}. Following these definitions, we define a \emph{permissive joint strategy} as a set-valued mapping from each history to a set of joint actions, given by
$$
\pi: \mathcal{H} \rightarrow 2^{A_1 \times A_2},
$$
where $\pi(h) \subseteq  A_1 \times A_2$ represents the set of joint actions permitted at history $h \in \mathcal{H}$. Throughout the paper, a joint action in $\pi(h)$, and the joint behavior it induces, are called \emph{admissible}.

\noindent\textbf{Rational actions.}
In practice, the challenge is that each agent should select an action without knowledge of the other agent's choice, such that the resulting joint behavior may fail to be admissible, $(\sigma_1 \,\|\, \sigma_2)(h) \not\subseteq \pi(h)$. In the extreme case where an agent behaves adversarially, coordination cannot be achieved. However, in realistic applications such as traffic scenarios, adversarial behavior is rare, as each vehicle pursues its own navigation objective rather than seeking to attack others. The literature on \emph{rational 
synthesis}~\cite{fisman2010rational,kupferman2016synthesis} has shown that specifications unrealizable against a hostile environment can become realizable once agent rationality is taken into account. Given this motivation, we assume each agent is rational and committed to realizing an admissible joint behavior within the shared permissive strategy $\pi$, which is referred to as \emph{$\pi$-rationality}. The rational action set and individual strategy under $\pi$-rationality are formalized below.

\begin{definition}[$\pi$-Rational Action Set and Strategy]~\label{def:rational}
	Let $\pi$ be a given permissive joint strategy for a two-agent game $\mathcal{G}$ defined in Def.~\ref{def:game}. Given a history $h$ and $i\in\{1,2\}$, an action $a_i \!\in\! A_i$ of agent $i$ is rational with respect to $\pi$ at history $h$ if there exists $a_{j} \in A_{j}$ such that $(a_1, a_{2}) \!\in\! \pi(h)$. The rational action set of agent $i$ at history $h$ with $\pi(h)\!\neq\! \emptyset$ is defined as
	\begin{equation}
		A^\rat_i(h) = \{a_i \in A_i \mid \exists\, a_{j} \in A_{j} : (a_1, a_{2}) \in \pi(h)\}.
	\end{equation}
	An individual strategy $\sigma_i$ is said to be \emph{rational} with respect to $\pi$ if $\sigma_i(h) \in A^\rat_i(h)$ for all $h \!\in\! \mathcal{H}$ with $\pi(h)\!\neq\! \emptyset$.
\end{definition}
Intuitively, a rational individual strategy is one that realizes an admissible joint behavior under $\pi$, provided the other agent behaves cooperatively. We denote the set of all the rational strategies for agent $i$ by $\Sigma_i^\rat$, where $i \in \{1, 2\}$. We use the running example to illustrate these concepts. 

\begin{tcolorbox}[runningexample]

\textbf{Running example}: In the two-robot anti-coordination game, the state of each robot is its location, drawn from $\{p_1, p_2, p_3\}$, where $p_2$ is the initial zone and $p_1$, $p_3$ are the two target zones. The joint state set $\mathcal{S}$ consists of all possible combinations of the robots' positions, with initial state $s_0\!=\!(p_2, p_2)$. Each robot $R_i$ selects an action from $A_i\!=\!\{a_{p_1}, a_{p_2}, a_{p_3}\}$, where $a_{p_k}$ denotes movement toward zone $p_k$ for $k\!\in\!\{1,2,3\}$; in particular, $a_{p_1}$ and $a_{p_3}$ move the robot toward the target zones $p_1$ and $p_3$ respectively, and $a_{p_2}$ keeps the robot stationary at $p_2$. The objective is to reach a terminal state $(p_1, p_3)$ or $(p_3, p_1)$. A permissive joint strategy at the initial state is illustrated in Fig.~\ref{fig:example_strategy}, which includes two valid cooperative action pairs: $\pi(s_0) = \{(a_{p_1}, a_{p_3}),\ (a_{p_3}, a_{p_1})\}$. The corresponding rational action set for each robot is $A^\rat_i(s_0)\!=\!\{a_{p_1}, a_{p_3}\}$ for $i\!\in\!\{1,2\}$.
\begin{center}
	\includegraphics[width=0.62\linewidth]{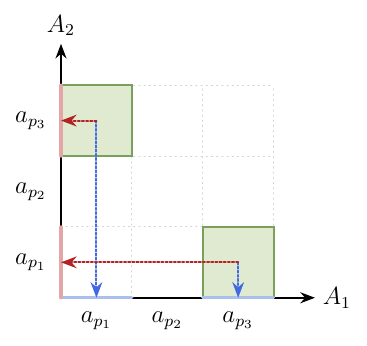}
\end{center}
\captionof{figure}{A permissive joint strategy for the running example at the initial state. The green regions represent the permissive joint action pairs $\pi(s_0)$. The dashed arrows show projections to the rational action sets.}
\label{fig:example_strategy}
\end{tcolorbox}


\subsection{Robust Coordination}

\begin{figure*}[hb] 
	\centering
	\includegraphics[width=\textwidth]{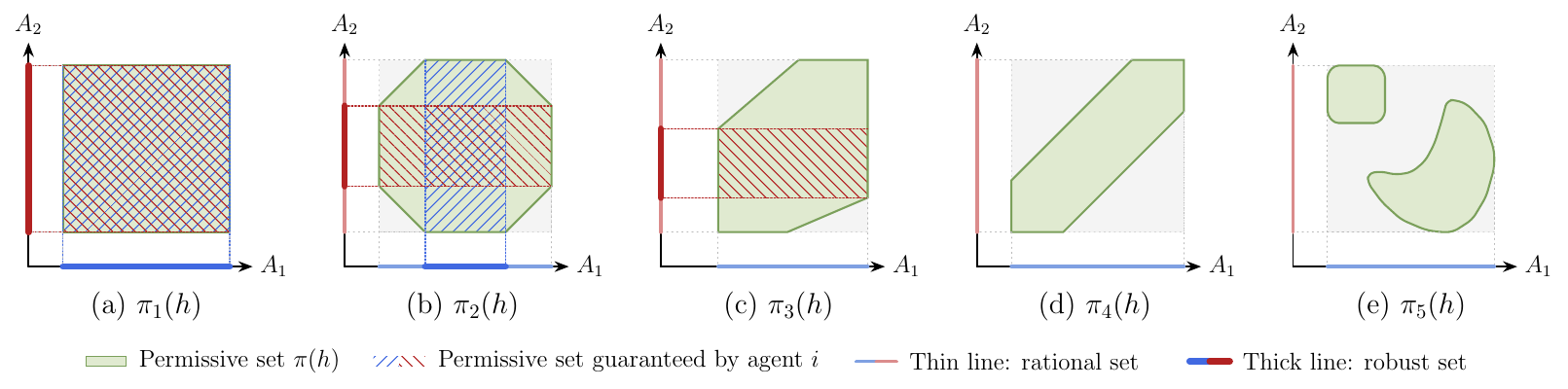}
	\caption{Representative geometries of a permissive joint strategy, $\pi_1(h)$ through $\pi_5(h)$, illustrating cases in which both, one, or neither agent has a non-empty robust set. In every panel the horizontal axis is $A_1$ (Agent~1, blue) and the vertical axis $A_2$ (Agent~2, red). On the axes, thin light lines mark the rational sets $A_i^\rat(h)$ and thick dark lines the robust sets $A_i^\rob(h;\Sigma_j^\rat)$, and the dotted gray box is their product $A_1^\rat(h)\!\times\!A_2^\rat(h)$. The green region is the permissive set $\pi(h)$; blue and red hatching mark the joint actions in which Agent~1 and Agent~2, respectively, commit to their robust set.}
	\label{fig:coordination_analysis}
\end{figure*}

When an agent faces another whose strategy is unknown, an ideal strategy is one that guarantees coordination regardless of the other agent's choice. We refer to such a strategy as a \emph{robust strategy}, which is formally defined below.

\begin{definition}[$(\pi,\Sigma_{j})$-Robust Strategies]~\label{def:robust_coord}
	Given a permissive strategy $\pi$ and a set of strategies $\Sigma_{j}$ of agent $j$, we say that an individual strategy $\sigma_i$ of agent $i$ is  $(\pi,\Sigma_{j})$-robust if for all $h\!\in\!\mathcal{H}$ with $\pi(h)\!\neq\!\emptyset$:
	$$\forall \sigma_j\!\in\!\Sigma_{j}:\; (a_1,a_2)\!\in\!\pi(h),$$ where $a_i\!=\!\sigma_i(h),\, a_j\!=\!\sigma_j(h)$, $i,j\!\in\!\{1,2\}$ and $i\!\neq\!j$.
\end{definition}

At each history $h$, the actions that agent $i$ can select to guarantee coordination regardless of the other agent's choice are called \emph{robust actions}, which are defined as follows.

\begin{definition}[$(\pi,\Sigma_{j})$-Robust Action Set]\label{def:robust_set}
Given a permissive strategy $\pi$ and a set of strategies $\Sigma_{j}$ of agent $j$, let $A^{\Sigma_{j}}(h)\!:=\!\{\sigma_j(h)\mid\sigma_j\!\in\!\Sigma_{j}\}$ denote the set of actions agent $j$ may select at history $h$ under a strategy in $\Sigma_{j}$. The \emph{robust action set} of agent $i$ at a history $h$ with $\pi(h)\!\neq\!\emptyset$, denoted by $A^\rob_i(h; \Sigma_{j})$, is defined as
\begin{equation}
\begin{aligned}
	A^\rob_i(h; \Sigma_{j}) \!:=\! \big\{ a_i\!\in\!A_i \;\big|\; & \forall a_j\!\in\! A^{\Sigma_{j}}(h): \\ & (a_1,a_2)\!\in\!\pi(h) \big\},
\end{aligned}
\end{equation}
where $i, j\!\in\!\{1,2\}$ and $i\!\neq\!j$.
\end{definition}

 If agent $i$ selects an action from a nonempty robust action set $A^\rob_i(h;\Sigma_{j})$ at every history $h$, then coordination is guaranteed regardless of which strategy $\sigma_j \!\in\! \Sigma_{j}$ agent $j$ selects, providing a unilateral guarantee of coordination. In particular, a $(\pi,\Sigma_j^{\rat})$-robust strategy guarantees the coordination prescribed by $\pi$ against every rational strategy of the other agent. When $\Sigma_j \!=\! \Sigma_j^\rat$, we abbreviate $A_i^\rob(h)\!:=\!A_i^\rob(h; \Sigma_j^\rat)$.


Figure~\ref{fig:coordination_analysis} illustrates five representative geometries of the permissive action set $\pi(h)$. Across all subfigures, the light-colored lines along the axes mark the rational action sets $A_i^\rat(h)$ from Def.~\ref{def:rational}, and the dark-colored lines mark the robust action sets $A_i^\rob(h)$ from Def.~\ref{def:robust_set}, $i\in\{1,2\}$. In Fig.~\ref{fig:coordination_analysis}(a), every pair of rational actions is permitted, so $\pi(h)\!=\!A_1^\rat(h)\!\times\!A_2^\rat(h)$ and $A_i^\rat(h)\!=\!A_i^\rob(h)$ for $i\!\in\!\{1,2\}$; coordination is guaranteed regardless of how the agents act.
Fig.~\ref{fig:coordination_analysis}(b) shows a case in which not all pairs of $A_1^\rat(h)\!\times\!A_2^\rat(h)$ lie in $\pi(h)$, yet both robust action sets remain non-empty, so either agent can unilaterally guarantee coordination by restricting its choice to $A_i^\rob(h)$.
In Fig.~\ref{fig:coordination_analysis}(c), only $A_2^\rob(h)$ is non-empty; agent~1 cannot unilaterally guarantee coordination, because no single $a_1\!\in\!A_1^\rat(h)$ satisfies $(a_1,a_2)\!\in\!\pi(h)$ for every $a_2\!\in\!A_2^\rat(h)$. In reality, the permissive action set may have more complex geometries, and it is possible that both robust action sets are empty, as illustrated in Fig.~\ref{fig:coordination_analysis}(d) and~(e). In such cases, neither agent can unilaterally guarantee coordination, even if both agents are rational, and the two agents face the risk of miscoordination: each may select a rational action, yet their joint action can still fall outside the permissive set. It is therefore essential to mitigate this risk by ensuring the existence of robust actions.

\subsection{Homogeneous Implementability}

Beyond the miscoordination risk in the heterogeneous settings, another challenge arises when the two agents are homogeneous~\cite{zinkevich2001symmetry}, as exemplified by the running example. An ideal strategy should therefore not only be robust to the other rational agent, but also remain effective in homogeneous settings, a property we refer to as \emph{homogeneous implementability}.

Building on the notions of homogeneous multi-agent Markov decision processes introduced in~\cite{zinkevich2001symmetry}, we adapt the definition to our two-agent setting. The underlying premise is that the labels ``agent~1'' and ``agent~2'' carry no physical meaning, so every property of the game should be invariant under permuting them. To formalize this, writing the joint state as $\mathcal{S}\!=\!S_1\!\times\! S_2$, we introduce a swap operator $\tau \!: S_1 \!\times\! S_2 \!\to\! S_1 \!\times\! S_2$ as $\tau(s_1, s_2) \!=\! (s_2, s_1).$

\begin{definition}[Homogeneous Two-Agent Game]\label{def:symmetric}
	A two-agent game $\mathcal{G} = (\mathcal{S}, s_0, A_1, A_2, \rightarrow)$ as defined in Def.~\ref{def:game} with $\mathcal{S} = S_1 \times S_2$,
	 $S_1 = S_2$ and $A_1 = A_2$ is called \emph{homogeneous} if for all $s, s' \!\in\! \mathcal{S}$ and $a_1, a_2 \!\in\! A_1$,
		$(s,(a_1,a_2),s') \in\;\rightarrow$ if and only if
		$(\tau(s),(a_2,a_1),\tau(s')) \in\;\rightarrow$.
\end{definition}
\noindent Following~\cite[Def.~14]{zinkevich2001symmetry}, we say that two Markov strategies $\sigma_1$ and $\sigma_2$ are \emph{identical} if the joint action they induce is equivariant under agent relabeling:
$$
(\sigma_1(s), \sigma_2(s)) \;=\; \tau\!\left(\sigma_1(\tau(s)),\, \sigma_2(\tau(s))\right) \quad \forall s \in \mathcal{S}.
$$
This condition is equivalent to the pointwise relation $\sigma_1(s) \!=\! \sigma_2(\tau(s))$ for all $s$. Intuitively, the strategies of two agents are identical when they take the same action from their own first-person view of every state, so that exchanging the labels ``agent~1'' and ``agent~2'' leaves the joint behavior unchanged.

Similarly to $\tau$, we define the permutation of a history,
$h = s_0 \xrightarrow{(a_{1,0}, a_{2,0})} s_1
\xrightarrow{(a_{1,1}, a_{2,1})} \cdots
\xrightarrow{(a_{1,t-1}, a_{2,t-1})} s_t$, as
$\bar{\tau}(h) = \tau(s_0) \xrightarrow{(a_{2,0},a_{1,0})} \tau(s_1)
\xrightarrow{(a_{2,1}, a_{1,1})} \cdots
\xrightarrow{(a_{2,t-1}, a_{1,t-1})} \tau(s_t). $
 This allows us to define identical strategies $\sigma_1$ and $\sigma_2$ as those for which
\begin{equation}\label{eq:homo_identical}
 \sigma_1(h) = \sigma_2(\bar \tau(h)) \quad \forall h\in \mathcal H.
\end{equation}

\begin{definition}[$\pi$-Homogeneous Implementability]\label{def:robust_coord_homo}
Let $\mathcal{G}$ be a homogeneous two-agent game (Def.~\ref{def:symmetric}) and $\pi$ a permissive strategy. An individual strategy $\sigma$ is \emph{$\pi$-homogeneously implementable} if
\begin{equation}\label{eq:homo_implement}
	\forall h \in \mathcal{H} \text{ with } \pi(h)\neq \emptyset: \quad \emptyset\!\neq\! (\sigma \,\|\, \sigma \circ \bar\tau)(h) \subseteq \pi(h),
\end{equation}
where $(\sigma \circ \bar\tau)(h) := \sigma(\bar\tau(h))$. 
\end{definition}

Such identical strategies are known to exist when the two agents are never in the same state~\cite[Theorem~3]{zinkevich2001symmetry}. However, for an individual strategy of the form $\sigma: \mathcal{H} \!\to\! A_1$, the symmetric condition with $\tau(s) \!=\! s$ forces both agents to select the same action, so a $\pi$-homogeneously implementable strategy of this form may not exist.

\begin{tcolorbox}[runningexample]
\textbf{Running example}: In the two-robot example, the initial state $s_0 = (s_{0,1}, s_{0,2})$ satisfies $s_{0,1} \!=\! s_{0,2}$, and $\pi(s_0) = \{(a_{p_1}, a_{p_3}),\ (a_{p_3}, a_{p_1})\}$. Since $\tau(s_0) = s_0$, the homogeneity condition forces both agents to choose the same action, so the feasible joint actions are restricted to the diagonal $\{(a, a) : a \in A_1\}$, none of which lies in $\pi(s_0)$, as shown in Fig.~\ref{fig:example_strategy}.
\end{tcolorbox}

The symmetric situation with $s_1\!=\!s_2$ captures those joint states at which the two agents are indistinguishable. In practice, such symmetry is not rare: two homogeneous agents in a symmetric situation perceive identical first-person observations and therefore act identically, as illustrated by the encounter scenario in Fig.~\ref{fig:hutong}. It is therefore essential that the strategy of an autonomous agent deployed in an interactive environment be homogeneously implementable.

\subsection{Problem Statement}

Consider a two-agent game $\mathcal{G}$ (Def.~\ref{def:game}) with a permissive joint strategy $\pi$ shared by both agents as common knowledge. Each agent selects its action independently. Without loss of generality, take agent~$i$ as the ego agent whose individual strategy $\sigma_i$ is to be designed, and let agent~$j$ be the encountered agent, which follows an unknown $\pi$-rational strategy $\sigma_j\!\in\!\Sigma_j^\rat$ that may either differ from $\sigma_i$ or be identical to it in the sense of~\eqref{eq:homo_identical}. The objective is to design $\sigma_i$ such that, in both cases, the realized joint action lies in $\pi(h)$ at every history $h\!\in\!\mathcal{H}$ with $\pi(h)\!\neq\!\emptyset$. Formally, $\sigma_i$ is required to (i)~achieve $(\pi,\Sigma_{j}^\rat)$-robust coordination (Def.~\ref{def:robust_coord}), and (ii)~be $\pi$-homogeneously implementable (Def.~\ref{def:robust_coord_homo}) whenever $\mathcal{G}$ is homogeneous (Def.~\ref{def:symmetric}).

\section{Opinion-Guided Strategies}\label{sec:opinion_guided}

\subsection{Reactive Strategy}
\begin{figure*}[hbt]
	\centering
	\definecolor{agA}{RGB}{65,105,225}
\definecolor{agB}{RGB}{178,34,34}

\tikzset{
  snodeA/.style={text=agA, rounded corners=2pt, draw=agA!55, fill=agA!8, line width=0.4pt, inner sep=3pt},
  snodeB/.style={text=agB, rounded corners=2pt, draw=agB!55, fill=agB!8, line width=0.4pt, inner sep=3pt},
  gstate/.style={circle, draw=black!55, fill=black!6, line width=0.4pt, inner sep=1pt, minimum size=0.55cm},
}

\subfigure[Independent implementation]{%
\begin{tikzpicture}[node distance=1.4cm]
  \node[gstate] (s) {$\mathcal{G}$};
  \node[snodeA, above right=0.5cm and 0.3cm of s] (p1) {$\sigma_1(s)$};
  \node[snodeB, below right=0.5cm and 0.3cm of s] (p2) {$\sigma_2(s)$};
  \node[gstate, right=1.8cm of s]                  (o)  {$\mathcal{G}$};
  \node[rectangle, above right=0.5cm and 0.3cm of o] (p1b) {$\ldots$};
  \node[rectangle, below right=0.5cm and 0.3cm of o] (p2b) {$\ldots$};

  \draw[arr] (s) -- node[above left]{\small$s$} (p1);
  \draw[arr] (s) -- node[below left]{\small$s$} (p2);
  \draw[arr, draw=agA] (p1) -- node[above right, pos=0.65, inner sep=1pt, text=agA]{\small$a_1$} (o);
  \draw[arr, draw=agB] (p2) -- node[below right, pos=0.65, inner sep=1pt, text=agB]{\small$a_2$} (o);
  \draw[arr] (o) -- node[above left]{\small$s'$} (p1b);
  \draw[arr] (o) -- node[below left]{\small$s'$} (p2b);
\end{tikzpicture}}\hfill
\subfigure[Sequential implementation]{%
\begin{tikzpicture}[node distance=1.4cm]
  \node[gstate] (s) {$\mathcal{G}$};
  \node[snodeA, above right=0.5cm and 0.3cm of s] (ldr) {$\quad\sigma_1(s)\quad$};
  \node[snodeB, below right=0.5cm and 0.3cm of s] (flw) {$\Apol_2(s,a_1)$};
  \node[gstate, right=1.8cm of s]                  (o)  {$\mathcal{G}$};
  \node[rectangle, above right=0.5cm and 0.3cm of o] (p1b) {$\ldots$};
  \node[rectangle, below right=0.5cm and 0.3cm of o] (p2b) {$\ldots$};

  \draw[arr] (s) -- node[above left]{\small$s$} (ldr);
  \draw[arr] (s) -- node[below left]{\small$s$} (flw);
  \draw[arr, draw=agA] (ldr) -- node[left, text=agA]{\small$a_1$} (flw);
  \draw[arr, draw=agB] (flw) -- node[below right, pos=0.65, inner sep=1pt, text=agB]{\small$a_2$} (o);
  \draw[arr, draw=agA] (ldr) -- node[above right, pos=0.65, inner sep=1pt, text=agA]{\small$a_1$} (o);
  \draw[arr] (o) -- node[above]{\small$s'$} (p1b);
  \draw[arr] (o) -- node[above]{\small$s'$} (p2b);
\end{tikzpicture}}\hfill
\subfigure[Mutually implementation]{%
\begin{tikzpicture}[node distance=1.4cm]
  \node[gstate] (s) {$\mathcal{G}$};
  \node[snodeA, above right=0.5cm and 0.3cm of s] (ldr) {$\Apol_1(s, a_2)$};
  \node[snodeB, below right=0.5cm and 0.3cm of s] (flw) {$\Apol_2(s,a_1)$};
  \node[gstate, right=1.8cm of s]                  (o)  {$\mathcal{G}$};
  \node[rectangle, above right=0.5cm and 0.3cm of o] (p1b) {$\ldots$};
  \node[rectangle, below right=0.5cm and 0.3cm of o] (p2b) {$\ldots$};

  \draw[arr] (s) -- node[above left]{\small$s$} (ldr);
  \draw[arr] (s) -- node[below left]{\small$s$} (flw);
  \draw[arr, draw=agA, transform canvas={xshift=-3pt}] (ldr) -- node[left, font=\scriptsize, text=agA]{\small$a_1$} (flw);
  \draw[arr, draw=agB, transform canvas={xshift=3pt}]  (flw) -- node[right, font=\scriptsize, text=agB]{\small$a_2$} (ldr);
  \draw[arr, draw=agB] (flw) -- node[below right, pos=0.65, inner sep=1pt, text=agB]{\small$a_2$} (o);
  \draw[arr, draw=agA] (ldr) -- node[above right, pos=0.65, inner sep=1pt, text=agA]{\small$a_1$} (o);
  \draw[arr] (o) -- node[above]{\small$s'$} (p1b);
  \draw[arr] (o) -- node[above]{\small$s'$} (p2b);
\end{tikzpicture}}\hfill
\subfigure[Opinion-guided implementation]{%
\hspace*{4mm}\begin{tikzpicture}[node distance=1.4cm]
  \node[gstate] (s) {$\mathcal{G}$};
  \node[snodeA, above right=0.5cm and 0.3cm of s] (ldr) {$\Opol_1(s, \tilde{z}_1)$};
  \node[snodeB, below right=0.5cm and 0.3cm of s] (flw) {$\Opol_2(s, \tilde{z}_2)$};
  \node[gstate, right=1.8cm of s]                  (o)  {$\mathcal{G}$};
  \node[rectangle, above right=0.5cm and 0.3cm of o] (p1b) {$\ldots$};
  \node[rectangle, below right=0.5cm and 0.3cm of o] (p2b) {$\ldots$};

  \draw[arr] (s) -- node[above left]{\small$s$} (ldr);
  \draw[arr] (s) -- node[below left]{\small$s$} (flw);
  \draw[arr, draw=agA, transform canvas={xshift=-3pt}] (ldr) -- node[left, xshift=2pt, font=\scriptsize, text=agA]{$\tilde{z}_1$} (flw);
  \draw[arr, draw=agB, transform canvas={xshift=3pt}]  (flw) -- node[right, xshift=-2pt, font=\scriptsize, text=agB]{$\tilde{z}_2$} (ldr);
  \draw[arr, draw=agA] (ldr) -- node[above right, pos=0.65, inner sep=1pt, text=agA]{\small$a_1$} (o);
  \draw[arr, draw=agB] (flw) -- node[below right, pos=0.65, inner sep=1pt, text=agB]{\small$a_2$} (o);
  \draw[arr] (o) -- node[above]{\small$s'$} (p1b);
  \draw[arr] (o) -- node[above]{\small$s'$} (p2b);
\end{tikzpicture}\hspace*{4mm}}
	\caption{Four decision structures for a two-agent game. (a) agents act concurrently with independent strategies $\sigma_1,\sigma_2$. (b) a reactive strategy realized sequentially, where agent~2 observes $a_1$ before selecting $a_2$. (c) mutually reactive strategies, where each agent's action conditions on the other's; this is an abstract specification without a realization mechanism. (d) an opinion-guided implementation, where each agent maintains an opinion state $\tilde{z}_i$ paired to the other's through $\Gamma_i$, and selects its action via $\Opol_i(s, \tilde{z}_i)$.
	\label{fig:decision_structure}}
\end{figure*}

So far we have assumed that the two agents select their actions concurrently and independently, as depicted in panel~(a) of Fig.~\ref{fig:decision_structure}. A common alternative is a sequential (or turn-based) decision structure, shown in panel~(b), in which the agents act one after the other within each step. The second agent observes the first's action and conditions its decision on the history $h$ together with the observed action. The resulting strategy is called a \emph{reactive strategy} and corresponds to the notion of a reaction function in the game-theoretic literature~\cite{bacsar1998dynamic}\footnote{The \emph{reactive} strategy in this paper conditions on the other agent's \emph{action}, unlike reactive strategies in control and robotics, which condition on the \emph{state} of the environment~\cite{kappler2018real,kress2009temporal}.}.

\begin{definition}[Reactive Strategy]\label{def:reactive}
Given a game $\mathcal{G}$, a reactive strategy for agent $i\!\in\!\{1,2\}$ is a mapping $\Apol_i : \mathcal{H} \times A_{j} \rightarrow A_i$, where $\Apol_i(h, a_{j}) \!=\! a_i$. The set of all such strategies is denoted $\Sigma^{\text{act}}_i$. 
\end{definition}
\noindent In contrast, a strategy $\sigma_i: \mathcal{H}\!\rightarrow\!A_i$ that does not condition on the other agent's action is called a \emph{non-reactive strategy}. If only one agent applies a reactive strategy while the other uses a non-reactive strategy, as in panel~(b) of Fig.~\ref{fig:decision_structure}, the joint strategy is defined as
\begin{equation}
(\sigma_1\,\|\,\Apol_{2})(h)\!:=\!\{(\sigma_1(h),\Apol_{2}(h,\sigma_1(h)))\}.
\end{equation}
Consider now the mutually reactive setting, in which both agents employ reactive strategies and each conditions on the other's action, as depicted in panel~(c) of Fig.~\ref{fig:decision_structure}. In this setting, the joint strategy is no longer single-valued but set-valued, defined as
$(\Apol_1\,\|\,\Apol_{2}): \mathcal{H} \rightarrow 2^{A_1 \times A_2}$,
whose value at a history $h$ collects all action pairs in which each agent's action is a reaction to the other's,
\begin{equation}\label{eq:reactive_joint}
\begin{aligned}
(\Apol_1\,\|\,\Apol_{2})(h) := \bigl\{(a_1, a_2) \,|\, a_1 &= \Apol_1(h,\! a_2), \\ a_2 &= \Apol_2(h,\! a_1)\bigr\}.
\end{aligned}
\end{equation}
Depending on $\Apol_1$ and $\Apol_2$, this set may be empty, a singleton, or contain several compatible action pairs. 

Non-reactive and reactive strategies are both \emph{individual strategies}: each returns agent~$i$'s own action $a_i\!\in\!A_i$, in contrast to a joint strategy that returns an action pair for both agents. For any two individual strategies, the joint outcome $(\sigma_i\,\|\,\sigma_{j})(h)\!\subseteq\!A_1\!\times\!A_2$ collects all joint action pairs realized by the two agents. The $\pi$-homogeneous implementability of Def.~\ref{def:robust_coord_homo} is understood in this general sense.

\begin{tcolorbox}[runningexample]
\textbf{Running example}: Recall that at the symmetric initial state $s_0\!=\!(p_2, p_2)$, the permissive joint strategy is $\pi(s_0)\!=\!\{(a_{p_1}, a_{p_3}), (a_{p_3}, a_{p_1})\}$. A reactive strategy can be designed as
\begin{equation}~\label{eq:pi_reactive}
\Apol(s_0, a_{j}) = \begin{cases}
a_{p_1} & \text{if } a_{j} = a_{p_3}, \\
a_{p_3} & \text{if } a_{j} = a_{p_1},
\end{cases}
\end{equation}
where $a_{j}$ denotes the action of the other agent. If one agent applies the reactive strategy while the other applies an arbitrary rational strategy $\sigma_{j}$, then $\sigma_{j}(s_0)\!\in\!A_j^\rat(s_0)\!=\!\{a_{p_1},a_{p_3}\}$, and the reactive response yields the pair $(a_{p_3}, a_{p_1})$ when $\sigma_{j}(s_0)\!=\!a_{p_1}$ and $(a_{p_1}, a_{p_3})$ when $\sigma_{j}(s_0)\!=\!a_{p_3}$, both of which lie in $\pi(s_0)$. Thus, the reactive strategy can achieve robust coordination. When both agents apply this reactive strategy, the resulting set of action pairs is $(\Apol\,\|\,\Apol\circ\bar\tau)(s_0)\!=\!\{(a_{p_3},a_{p_1}), (a_{p_1},a_{p_3})\}\!=\!\pi(s_0)$: each agent reacts to the other, so the only mutually consistent outcomes are those in which one plays $a_{p_1}$ and the other plays $a_{p_3}$. This bypasses the issue caused by symmetry noted earlier, where no homogeneous individual strategy could produce any pair in $\pi(s_0)$.
\end{tcolorbox}

The above example illustrates that reactive strategies can deliver both robust coordination and homogeneous implementation. However, reactive strategies are built on the assumption that each agent can observe the other's intended action before execution, which is unrealistic. The remaining question is how the two agents converge on a common pair in $(\Apol_1\,\|\,\Apol_2)(h)$ without knowing each other's intended actions in advance. Inspired by the properties of opinion dynamics introduced in Section~\ref{sec:nod}, we propose to realize reactive strategies using opinion dynamics.

\subsection{Opinion-Guided Reactive Strategies}

We illustrate this opinion-guided implementation, depicted in panel~(d) of Fig.~\ref{fig:decision_structure}, on the running example.

\begin{tcolorbox}[runningexample]
\textbf{Running example}: at the initial state $s_0$, we equip each agent with a scalar opinion $z_i\!\in\!\mathbb{R}$ evolving under the following nonlinear opinion dynamics:
\begin{equation}\label{eq:od_running}
\dot{z}_i = -3 z_i + 3\tanh\!\left(3(z_i - z_j)\right),
\end{equation}
where $i,j\!\in\!\{1,2\}$ and $i\!\neq\!j$. The individual strategies of both agents follow the same rule
\begin{equation}\label{eq:op_running_example}
\sigma^{\mathrm{op}}(s_0, z_i) = \begin{cases}
a_{p_1} & \text{if } z_i \geq 0, \\
a_{p_3} & \text{if } z_i < 0,
\end{cases}
\end{equation}
 which is homogeneous by construction. The discrete decision is made after a reserved interval $\Delta t_{\mathrm{op}}\!>\!0$, during which the opinion dynamics~\eqref{eq:od_running} are expected to converge. By~\cite[Cor.~IV.1.2]{bizyaeva2022nonlinear}, the neutral equilibrium $(z_1,z_2)\!=\!(0,0)$ is unstable, and two stable equilibria emerge at $(z^{*},-z^{*})$ and $(-z^{*},z^{*})$, with $z^{*}\!\approx\!1$, as shown in the phase portrait of Fig.~\ref{fig:od_running_evolution}(a). Provided $\Delta t_{\mathrm{op}}$ is sufficiently large, the trajectories converge to one of these equilibria within the interval (Fig.~\ref{fig:od_running_evolution}(b)). Under the rule~\eqref{eq:op_running_example}, the joint action ultimately settles on either $(a_{p_1}, a_{p_3})$ or $(a_{p_3}, a_{p_1})$, both of which lie in $\pi(s_0)$ by~\eqref{eq:pi_reactive}.
\noindent\includegraphics[width=1.0\linewidth]{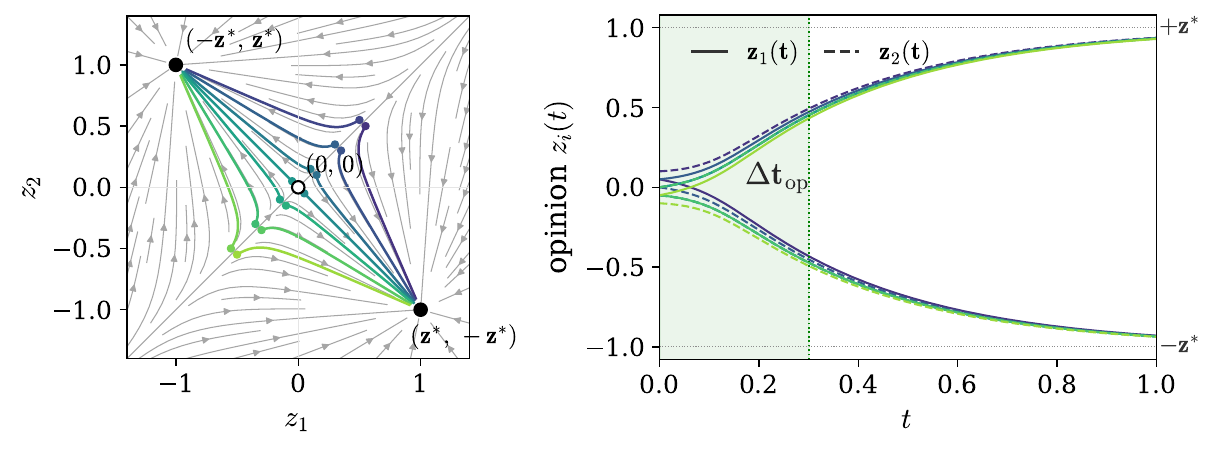}
\captionof{figure}{(a) Phase portrait of~\eqref{eq:od_running}: trajectories from generic initial conditions diverge from $(0,0)$ (open circle) and converge to one of the two stable equilibria (filled circles). (b) Time evolution of $z_1(t)$ (solid) and $z_2(t)$ (dashed) from several initial conditions; the shaded interval marks the reserved window $\Delta t_{\mathrm{op}}$.}
\label{fig:od_running_evolution}
\end{tcolorbox}



The example shows that the decentralized coordination challenges are mitigated by the opinion dynamics. The instability of the symmetric equilibrium $(0,0)$ breaks the initial symmetry between the two homogeneous agents and drives their opinions apart, assigning opposite roles in this anti-coordination scenario. Both stable equilibria pair the two opinions with opposite signs, so each agent is committed to the role complementary to the other's, and its choice thereby becomes adaptive to the other agent's unknown decision behavior.

This opinion-guided implementation is intrinsically two-layered, with the layers operating on separated time scales. On the \emph{continuous opinion layer}, each agent's opinion $z_i$ evolves under mutual coupling and converges to one of the stable equilibria. On the \emph{discrete strategy layer}, the sign of $z_i$ commits the agent to a discrete action through the rule $\sigma^{\mathrm{op}}$ in~\eqref{eq:op_running_example}. In what follows, we first formalize the opinion-guided strategy at the discrete strategy layer, generalizing it from the scalar running example to multi-dimensional opinions, and then present its layered realization through the continuous opinion dynamics, sketched in Fig.~\ref{fig:layered}.

\begin{figure}
	\centering
	\input{plot/layered}
	\vskip-.3cm
	\caption{Layered realization of opinion-guided strategy. Strategy layer: at each decision time $t_k$, both agents query the game $\mathcal{G}$ at the current state $s$ and select opinion-guided strategies $\Opol_1, \Opol_2$ whose action intents are compatible, as required by~\eqref{eq:opinion_joint}. Continuous opinion layer: between consecutive decisions, the coupled opinions $z_1$ and $z_2$ converge to opposite signs within the reserved window $\Delta t_{\mathrm{op}}$. At the end of each window, the converged opinion is abstracted to its discrete sign $\tilde z_i$ and passed up to the strategy layer, which commits one of the compatible action pairs.}
	\label{fig:layered}
 \end{figure}

\subsubsection{Opinion-Guided Strategy}
As the running example suggests, one-dimensional opinion dynamics with a bistable bifurcation reliably converge to one of two stable equilibria~\cite{bizyaeva2022nonlinear}. When the opinion state is multi-dimensional, each component behaves analogously, so we abstract each agent $i$'s continuous opinion into a discrete opinion state $\tilde{z}_i\!\in\!\mathbb{B}^{n_i}$, where $\mathbb{B}\!:=\!\{-1,1\}$ is a binary index set and $n_i\!\in\!\mathbb{N}$ is the dimension of that agent's opinion state. Each entry of $\tilde{z}_i$ records which of the two subsets has been selected along the corresponding component, and this discrete abstraction induces a partition of the agent's action space. For each agent $i\!\in\!\{1,2\}$, we define a map $\O_{i}\!:\!A_{i}\!\to\!\mathbb{B}^{n_i}$ from actions to discrete opinion states. It labels each action with an element of $\mathbb{B}^{n_i}$, thereby partitioning $A_{i}$ into at most $2^{n_i}$ subsets; we write $A_i^{(\tilde{z}_i)}\!:=\!\{a\!\in\!A_{i}\mid\O_i(a)\!=\!\tilde{z}_i\}$ for the set of actions labeled $\tilde{z}_i$. Given the partitioning, the opinion state of agent $i$ is used to select an action that responds appropriately to the action agent $j$ intends. This motivates the following definition of opinion-guided strategies.

\begin{definition}[Opinion-Guided Strategy]\label{def:opinion_guided}
Given a game $\mathcal{G}$ and partition maps $\O_1,\O_2$, an opinion-guided strategy for agent $i\!\in\!\{1,2\}$ is a mapping $\Opol_i\!:\!\mathcal{H}\!\times\!\mathbb{B}^{n_i}\!\rightarrow\!A_i$ from a history $h$ and a discrete opinion state $\tilde{z}_i$ to an action $a_i$. Agent $i$'s opinion state is determined from its opponent's through its own \emph{opinion-pairing map} $\Gamma_i\!:\!\mathcal{H} \times \mathbb{B}^{n_j} \!\rightarrow\! \mathbb{B}^{n_i}$:
\begin{equation}\label{eq:opinion_coupling}
a_i = \Opol_i(h,\tilde{z}_{i}), \quad
\tilde{z}_i = \Gamma_i(h, \tilde{z}_{j}),
\end{equation}
where the selected action lies in $A_i^{(\tilde{z}_i)}$ with $\tilde{z}_i\!=\!\O_i(a_i)$, and $\tilde{z}_{j}\!=\!\O_{j}(a_{j})$ is the opinion state of the other agent $j\!\in\!\{1,2\}\!\setminus\!\{i\}$ corresponding to its intended action $a_j$. The set of all such strategies is denoted $\Sigma^{\text{op}}_i$.
\end{definition}

Each opinion-pairing map $\Gamma_i$ encodes how agent $i$ reacts to the other's opinion state. For simplicity, we assume in this paper that $\Gamma_i$ is linear in the opinion state $\tilde{z}_j$ and that each component of $\tilde{z}_i$ is coupled to at most one component of $\tilde{z}_j$.

\begin{assumption}[Linear Opinion Pairing]\label{asm:linear_pairing}
The opinion-pairing map at $h$ is linear, represented by a \emph{partial signed permutation} matrix $\Gamma_i(h)\!\in\!\{-1,0,+1\}^{n_i\times n_j}$ with at most one nonzero entry per row.
\end{assumption}
\noindent Specifically, the $k$-th row of $\Gamma_i(h)$ governs the $k$-th component of $\tilde{z}_i$: a $+1$ entry couples it in agreement with the matched component of $\tilde{z}_j$, a $-1$ entry couples it in opposition, and an all-zero row leaves it free, coupled to no component of $\tilde{z}_j$. In the running example of Section~\ref{sec:example}, each agent's opinion-pairing map at $s_0$ reduces to the scalar $\Gamma_i(s_0)\!=\!-1$, representing the rule ``if the other agent commits to one target zone, I commit to the other''. To achieve the coordination required by $\pi$, the maps $\Gamma_1$ and $\Gamma_2$ must be designed carefully, which will be discussed in Section~\ref{sec:analysis}.

When agent $1$ employs an opinion-guided strategy $\Opol_1\!\in\!\Sigma^{\text{op}}_1$ and agent $2$ a non-reactive strategy $\sigma_2\!:\!\mathcal{H}\!\rightarrow\!A_2$, the joint outcome is uniquely determined as $(\Opol_1\,\|\,\sigma_2)(h) = \{(\Opol_1(h,\tilde{z}_1),\, a_2)\}$, where $a_2\!=\!\sigma_2(h)$, $\tilde{z}_2\!=\!\O_2(a_2)$, and $\tilde{z}_1\!=\!\Gamma_1(h)\,\tilde{z}_2$. When both agents employ opinion-guided strategies $\Opol_1,\Opol_{2}$, their opinion states must jointly satisfy both couplings $\tilde{z}_1\!=\!\Gamma_1(h)\,\tilde{z}_2$ and $\tilde{z}_2\!=\!\Gamma_2(h)\,\tilde{z}_1$, which in general admit multiple solutions, so the joint outcome is set-valued:
\begin{equation}\label{eq:opinion_joint}
\begin{aligned}
(\Opol_1\,\|\,\Opol_{2})(h) \!:=\! \big\{ & (a_1,a_2) \mid
\exists \tilde{z}_i,\tilde{z}_j:\ a_i\!=\!\Opol_i(h,\tilde{z}_i), \\
& \tilde{z}_i\!=\!\Gamma_i(h)\,\tilde{z}_j,\ \tilde{z}_j\!=\!\O_j(a_j), \\
& i,j\!\in\!\{1,2\},\ i\!\neq\!j \,\big\}.
\end{aligned}
\end{equation}
To realize such joint behavior, the agents need to exchange their opinion states and converge on common pairs of opinions that satisfy the coupling constraints. This is achieved by the continuous opinion dynamics described subsequently.

\begin{remark}\label{rem:fixed_partition}
The fixed partition $\O_{i}\!:\!A_{i}\!\to\!\mathbb{B}^{n_i}$ is restrictive: since $\pi(h)$ varies across histories, a history-dependent partition would be more flexible. For simplicity, we adopt the fixed partition in this paper.
\end{remark}

\subsubsection{Layered Realization} To realize the opinion-guided strategy in Def.~\ref{def:opinion_guided}, agent $i$ maintains a continuous opinion state $z_i(t)\!\in\!\mathbb{R}^{n_i}$ that evolves according to the NOD equation~\cite{leonard2024fast} and is mapped to its discrete counterpart $\tilde{z}_i\!\in\!\mathbb{B}^{n_i}$ via the element-wise sign function $\tilde{z}_i\!:=\!\mathrm{sgn}(z_i)$. Before introducing the communication-free implementation (Section~\ref{sec:case_study}), we first describe the idealized interaction in which each agent receives its opponent's opinion state directly in real time through communication. The layered realization of the opinion-guided strategy~\eqref{eq:opinion_coupling} is given by:

\begin{equation}\label{eq:layered_opinion}
\hspace{-1em}\left\{
\begin{aligned}
\text{Strategy layer:}\, a_i & \!=\! \Opol_i(h, \tilde{z}_i), \; \tilde{z}_i \!=\! \mathrm{sgn}(z_{i}(t_k \!+\! \Delta t_{\mathrm{op}})) \\
\text{Opinion layer:}\, \dot{z}_i & \!=\! -d z_i \!+\! \rho_i   \tanh\!\left(\nu z_i \!+\! \beta  \Gamma_i(h) z_{j} \!+\! b_i\right)
\end{aligned}
\right.
\end{equation}
where $d\!>\!0$ is a damping coefficient, $\tanh$ is a saturation function that enables fast and flexible decision-making, and $\rho_i\!\geq\!0$ is a tunable attention parameter. The gains $\nu\!>\!0$ and $\beta\!>\!0$ weigh the self- and opponent-coupling terms, and $b_i$ is a constant bias encoding agent~$i$'s prior preference. Here $\Gamma_i(h)\,z_{j}\!\in\!\mathbb{R}^{n_i}$ is the target opinion that agent $i$ should align with, where $\Gamma_i(h)$ is the opinion-pairing map from Def.~\ref{def:opinion_guided}. The two layers evolve on separated time scales: the strategy layer acts in discrete time, at decision instants $t_k$ indexed by $k\!\in\!\mathbb{N}$, while the opinion layer evolves in continuous time $t\!\in\!\mathbb{R}_{\geq0}$. Each decision step begins at $t_k$ with a reserved window of length $\Delta t_{\mathrm{op}}$ over which the opinion converges; the action is then committed from the settled opinion at $t_k\!+\!\Delta t_{\mathrm{op}}$ and executed for the remainder of the step.

\section{Analysis for Opinion-Guided Coordination}\label{sec:analysis}

In this section, we analyze formal guarantees for the layered opinion-guided strategy. The analysis is organized layer by layer. We first analyze the strategy layer, characterizing how to design an opinion-guided strategy that achieves robust coordination against rational opponents and admits homogeneous deployment. We then analyze the continuous opinion layer and close with a layered guarantee, showing that the nonlinear opinion dynamics realize the desired coordination. 

\subsection{Robust Coordination at Strategy Layer}

For each $i\!\in\!\{1,2\}$, the function $\O_i\!: \!A_i\!\to\!\mathbb{B}^{n_i}$ partitions the action set $A_i$ into the subsets $A_i^{(\tilde{z}_i)}$ defined earlier. Next, we introduce the partition-restricted robust action sets $A_i^\rob(h;\tilde{z}_{j}, \Sigma_j)$, which refine $A_i^\rob(h;\Sigma_j)$ in Def.~\ref{def:robust_set} by quantifying over a single cell $A_{j}^{(\tilde{z}_j)}$ of agent $j$ rather than the full selectable set $A^{\Sigma_j}(h)$, as follows:
\begin{equation}\label{eq:rob_partition}
\begin{aligned}
A_i^\rob(h;\tilde{z}_{j}, \Sigma_j) \!:=\! \{ a_i \;\big|\; & \forall a_{j}\!\in\!A_{j}^{(\tilde{z}_{j})}\!\cap\! A^{\Sigma_j}(h): \\
& (a_1,a_2)\!\in\!\pi(h) \}.
\end{aligned}
\end{equation}
Two degenerate cases make agent $i$'s choice immaterial. If $A_{j}^{(\tilde{z}_{j})} \cap A^{\Sigma_j}(h)\!=\!\emptyset$, agent $j$ has no selectable action in the cell $A_{j}^{(\tilde{z}_{j})}$, so there is nothing to coordinate against. If $\pi(h)\!=\!\emptyset$, no joint action is required at $h$. In both cases, we set $A_i^\rob(h;\tilde{z}_{j}, \Sigma_j)\!=\!A_i$, so the robust set imposes no restriction. When the selectable set is fixed to the rational set $\Sigma_j^\rat$, we abbreviate $A_i^\rob(h;\tilde{z}_{j})\!:=\!A_i^\rob(h;\tilde{z}_{j},\Sigma_j^\rat)$.

To illustrate how the partitions induced by $\O_1,\O_2$ enable robust coordination, we consider the permissive action set $\pi_4(h)$ in Fig.~\ref{fig:coordination_analysis}(d), in which neither agent has a non-empty robust action set, i.e., $A_i^\rob(h)\!=\!\emptyset$ for both $i\!\in\!\{1,2\}$. As shown in Fig.~\ref{fig:partition}(a), partitioning $A_{1}$ via $\O_{1}$ makes $A_2^\rob(h;\tilde{z}_{1})\!\neq\!\emptyset$ on each cell $A_{1}^{(\tilde{z}_{1})}$, $\tilde{z}_{1}\!\in\!\{-1,1\}$, so agent $2$ can guarantee coordination on its own once $\tilde{z}_{1}$ is known. As shown in Fig.~\ref{fig:partition}(b), when both agents partition their action sets via $\O_1$ and $\O_2$, $A_i^\rob(h;\tilde{z}_{j})\!\neq\!\emptyset$ holds for each agent $i$ and every opinion state $\tilde{z}_j$ of the other agent, so each agent can guarantee coordination on its own once the other agent's opinion state is known. The conditions under which an opinion-guided strategy achieves $\pi$-robust coordination are formalized in the following theorems.

\begin{figure}[htb]
    \centering
    \includegraphics[width=\linewidth]{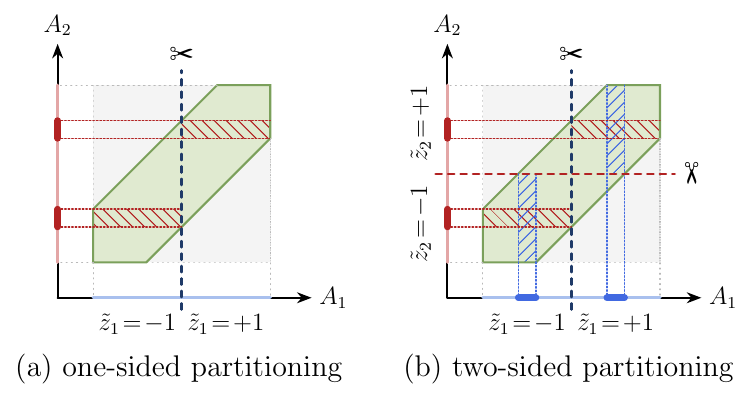}
    \caption{Illustration of the opinion-guided strategy from the perspective of agent $i$, $i\!\in\!\{1,2\}$. The dashed line with a scissor icon denotes the partition of $A_{j}$ induced by $\O_{j}$. In~(a), only $A_{j}$ is partitioned, indexed by $\tilde{z}_{j}\!\in\!\{-1,1\}$. In~(b), the action sets of both agents are mutually partitioned according to their respective discrete opinion states $\tilde{z}_1,\tilde{z}_2\!\in\!\mathbb{B}^n$.}
    \label{fig:partition}
\end{figure}



\begin{theorem}[Robust coordination against a set of non-reactive opponents]\label{th:op_coord_1}
Consider a two-agent game $\mathcal{G}$ with permissive joint strategy $\pi$, given partition functions $\O_{i}, \O_{j}$, and a set of non-reactive rational strategies $\Sigma_j \!\subseteq\! \Sigma_j^{\rat}$ for agent $j$. There exists an opinion-guided strategy $\Opol_i\!\in\!\Sigma_i^{\text{op}}$ for agent $i$ that achieves $(\pi, \Sigma_{j})$-robust coordination \textbf{if} there exists an opinion-pairing map $\Gamma_i$ such that for every $h\!\in\!\mathcal{H}$ and every opinion state $\tilde{z}_{j}\!\in\!\mathbb{B}^{n_j}$:
\begin{equation}\label{eq:coord1_cond}
A_i^\rob(h;\tilde{z}_{j}, \Sigma_j) \cap A_i^{(\tilde{z}_i)}\!\neq\!\emptyset, \quad \tilde{z}_i\!=\!\Gamma_i(h,\tilde{z}_{j}).
\end{equation}
\end{theorem}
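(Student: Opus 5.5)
The proof will be constructive: I would build $\Opol_i$ directly from the pairing map $\Gamma_i$ in the hypothesis by choosing, for each history and each discrete opinion, an action in the nonempty intersection~\eqref{eq:coord1_cond}. For every $h\in\mathcal{H}$ and every $\tilde z_i\in\mathbb{B}^{n_i}$ in the image $\{\Gamma_i(h,\tilde z_j)\mid \tilde z_j\in\mathbb{B}^{n_j}\}$, fix a preimage $\tilde z_j$ and pick
\[
\Opol_i(h,\tilde z_i)\in A_i^\rob(h;\tilde z_j,\Sigma_j)\cap A_i^{(\tilde z_i)} .
\]
For $\tilde z_i$ outside the image, set $\Opol_i(h,\tilde z_i)$ to any element of $A_i^{(\tilde z_i)}$, or to any action if that cell is empty. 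This construction uses the axiom of choice only over a set of finitely many opinions per history, so it is harmless. By construction, $\O_i(\Opol_i(h,\tilde z_i))=\tilde z_i$ whenever the intersection is used, so the consistency requirement of Def.~\ref{def:opinion_guided} holds.

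\textbf{Well-definedness.} Two different opinions $\tilde z_j\neq\tilde z_j'$ may map to the same $\tilde z_i$. The selected action must then lie in the robust set for both cells. I would handle this in one of two ways. The first is to strengthen the selection to $\bigcap_{\tilde z_j\in\Gamma_i(h,\cdot)^{-1}(\tilde z_i)}A_i^\rob(h;\tilde z_j,\Sigma_j)\cap A_i^{(\tilde z_i)}$. The second, which I prefer, is to note that the joint outcome $(\Opol_i\|\sigma_j)(h)$ is determined by the realized $\tilde z_j=\O_j(\sigma_j(h))$. The opinion-guided strategy can therefore be read, as in Def.~\ref{def:opinion_guided}, as a response to $\tilde z_j$ through $\Gamma_i$: define the action as a function of $(h,\tilde z_j)$ via $\tilde z_i=\Gamma_i(h,\tilde z_j)$, and the hypothesis applies cell by cell. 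This is the main obstacle. I expect the intended argument to take the second reading, so I would state explicitly that $\Opol_i$ is evaluated at $\tilde z_i=\Gamma_i(h,\O_j(a_j))$ and treat the selection as indexed by $\tilde z_j$.

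\textbf{Verification.} Fix $h$ with $\pi(h)\neq\emptyset$ and any $\sigma_j\in\Sigma_j$. Let $a_j=\sigma_j(h)$, so that $a_j\in A^{\Sigma_j}(h)$, and let $\tilde z_j=\O_j(a_j)$, so that $a_j\in A_j^{(\tilde z_j)}$. Hence $a_j\in A_j^{(\tilde z_j)}\cap A^{\Sigma_j}(h)$, which is nonempty; this rules out the degenerate convention in~\eqref{eq:rob_partition}. Agent $i$ then plays $a_i=\Opol_i(h,\Gamma_i(h,\tilde z_j))$. This action lies in $A_i^\rob(h;\tilde z_j,\Sigma_j)$, so by~\eqref{eq:rob_partition} we get $(a_1,a_2)\in\pi(h)$. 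Since $h$ and $\sigma_j$ were arbitrary, Def.~\ref{def:robust_coord} is satisfied.

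Finally, the joint outcome is exactly the singleton described after Assumption~\ref{asm:linear_pairing} for an opinion-guided agent facing a non-reactive opponent. So no fixed-point or multiplicity issue arises, unlike in~\eqref{eq:opinion_joint}, and the guarantee is complete.
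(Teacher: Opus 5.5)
Your construction and verification are the paper's proof. The paper likewise selects, for each $h$ and each $\tilde z_j$, an action $\Opol_i(h,\Gamma_i(h,\tilde z_j))$ in the nonempty intersection~\eqref{eq:coord1_cond}. It then reads $\tilde z_j=\O_j(\sigma_j(h))$ off the opponent's action and concludes $(a_1,a_2)\in\pi(h)$ directly from~\eqref{eq:rob_partition}.

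The well-definedness issue you raise is real, and the paper passes over it by silently taking your second reading. Your first fix, however, is not licensed by the hypothesis, because nonemptiness is given per opponent cell, not over a whole fiber of $\Gamma_i(h,\cdot)$. Two cells paired to the same $\tilde z_i$ whose robust sets meet $A_i^{(\tilde z_i)}$ in disjoint sets admit no common value $\Opol_i(h,\tilde z_i)$. So under the literal signature $\mathcal{H}\times\mathbb{B}^{n_i}\to A_i$ of Def.~\ref{def:opinion_guided}, both your argument and the paper's tacitly need $\Gamma_i(h,\cdot)$ to be injective, or those fiber-wise intersections to be nonempty.
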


\begin{proof}
Let $\Gamma_i$ be an opinion-pairing map satisfying~\eqref{eq:coord1_cond}. We first build an opinion-guided strategy with $\Gamma_i$ and then show it is robust. For every history $h\!\in\!\mathcal{H}$ and opponent opinion $\tilde{z}_{j}\!\in\!\mathbb{B}^{n_j}$, let $\tilde{z}_i\!=\!\Gamma_i(h,\tilde{z}_{j})$ and choose $\Opol_i(h,\tilde{z}_i)\!\in\!A_i^\rob(h;\tilde{z}_{j}, \Sigma_j)\cap A_i^{(\tilde{z}_i)}$, which is non-empty by~\eqref{eq:coord1_cond}. Next, we show that the designed $\Opol_i$ is $(\pi,\Sigma_{j})$-robust. Let $\sigma_{j}\!\in\!\Sigma_{j}$ and $h\!\in\!\mathcal{H}$ with $\pi(h)\!\neq\!\emptyset$ be arbitrary. The opponent plays $a_{j}\!=\!\sigma_{j}(h)$, which reveals its opinion $\tilde{z}_{j}\!=\!\O_{j}(a_{j})$ and places $a_{j}\!\in\!A_{j}^{(\tilde{z}_{j})}\!\cap\!A^{\Sigma_j}(h)$. Agent $i$ reacts with $\tilde{z}_i\!=\!\Gamma_i(h,\tilde{z}_{j})$ and plays $a_i\!=\!\Opol_i(h,\tilde{z}_i)$, which by construction lies in $A_i^\rob(h;\tilde{z}_{j}, \Sigma_j)$. By the definition of the robust action set~\eqref{eq:rob_partition}, such an action coordinates with every action agent $j$ may select in the cell $A_{j}^{(\tilde{z}_{j})}$, in particular with $a_{j}$, so $(a_i,a_{j})\!\in\!\pi(h)$. As $\sigma_{j}$ and $h$ were arbitrary, coordination holds against every rational strategy in $\Sigma_{j}$ at every history with $\pi(h)\!\neq\!\emptyset$, i.e.\ $\Opol_i$ is $(\pi,\Sigma_{j})$-robust.
\end{proof}


Theorem~\ref{th:op_coord_1} gives a sufficient condition for a $(\pi, \Sigma_{j})$-robust opinion-guided strategy to exist: there is an opinion-pairing map $\Gamma_i$ under which, whenever agent $j$ commits to an intent $\tilde{z}_{j}$, the paired opinion state $\tilde{z}_i\!=\!\Gamma_i(h,\tilde{z}_{j})$ of agent $i$ contains an action that handles every action agent $j$ may select in $A_{j}^{(\tilde{z}_{j})} \!\cap\! A^{\Sigma_j}(h)$. As shown in the proof, the strategy is then constructed by letting agent $i$ commit to an action of $A_i^\rob(h;\tilde{z}_{j}, \Sigma_j)\cap A_i^{(\tilde{z}_i)}$.

Theorem~\ref{th:op_coord_1} holds for any fixed $\Sigma_j\!\subseteq\!\Sigma_j^\rat$; taking $\Sigma_{j}\!=\!\Sigma_{j}^{\rat}$, the full rational strategy set of agent $j$, specializes it to the strongest guarantee, robustness against every non-reactive rational opponent at once.

\begin{theorem}[Robust coordination against all rational opponents]\label{th:op_coord_full}
Consider a two-agent game $\mathcal{G}$ with permissive joint strategy $\pi$ and given partition functions $\O_{i},\O_{j}$. There exists an opinion-guided strategy $\Opol_i\!\in\!\Sigma_i^{\text{op}}$ for agent $i$ that achieves $(\pi,\Sigma_{j}^{\rat})$-robust coordination \textbf{if and only if} there exists an opinion-pairing map $\Gamma_i$ such that for every $h\!\in\!\mathcal{H}$ and every opinion state $\tilde{z}_{j}\!\in\!\mathbb{B}^{n_j}$:
\begin{equation}\label{eq:coordfull_cond}
A_i^\rob(h;\tilde{z}_{j}) \cap A_i^{(\tilde{z}_i)}\!\neq\!\emptyset, \quad \tilde{z}_i\!=\!\Gamma_i(h,\tilde{z}_{j}).
\end{equation}
\end{theorem}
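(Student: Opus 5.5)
Sufficiency (``if'') follows by specializing Theorem~\ref{th:op_coord_1} to $\Sigma_j=\Sigma_j^\rat$. Every rational strategy of agent $j$ is non-reactive, since it maps $\mathcal{H}\to A_j$. Also, $A_i^\rob(h;\tilde z_j)$ is by definition $A_i^\rob(h;\tilde z_j,\Sigma_j^\rat)$. So condition~\eqref{eq:coordfull_cond} is exactly~\eqref{eq:coord1_cond}, and the strategy constructed in that proof is $(\pi,\Sigma_j^\rat)$-robust. The substance of the proof is therefore necessity.

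For necessity, I would argue by contraposition from a robust $\Opol_i$. Suppose $\Opol_i$ with pairing map $\Gamma_i$ achieves $(\pi,\Sigma_j^\rat)$-robust coordination. Fix $h$ and $\tilde z_j\in\mathbb{B}^{n_j}$, and set $\tilde z_i=\Gamma_i(h,\tilde z_j)$ and $a_i^\star=\Opol_i(h,\tilde z_i)$. By Definition~\ref{def:opinion_guided}, $a_i^\star\in A_i^{(\tilde z_i)}$, so it suffices to show $a_i^\star\in A_i^\rob(h;\tilde z_j)$.

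\emph{Degenerate cases.} If $\pi(h)=\emptyset$, or if $A_j^{(\tilde z_j)}\cap A^{\Sigma_j^\rat}(h)=\emptyset$, the convention sets $A_i^\rob(h;\tilde z_j)=A_i$. The intersection is then $A_i^{(\tilde z_i)}$, which contains $a_i^\star$.

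\emph{Main case.} Take any $a_j\in A_j^{(\tilde z_j)}\cap A^{\Sigma_j^\rat}(h)$. I need a rational non-reactive opponent $\sigma_j$ with $\sigma_j(h)=a_j$; this exists by the definition of $A^{\Sigma_j^\rat}(h)$. Against it, agent $j$'s revealed opinion is $\O_j(a_j)=\tilde z_j$, so agent $i$ plays exactly $a_i^\star$. Robustness then gives $(a_i^\star,a_j)\in\pi(h)$. Since $a_j$ was arbitrary in the cell, $a_i^\star\in A_i^\rob(h;\tilde z_j)$, and~\eqref{eq:coordfull_cond} holds with the same $\Gamma_i$.

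The step I expect to need the most care is matching the quantifiers between Definition~\ref{def:robust_coord} and the cellwise robust set~\eqref{eq:rob_partition}. This matching relies on two facts. First, $\Sigma_j^\rat$ is rich enough that every action in $A^\rat_j(h)$ is realized at $h$ by some rational strategy; one can define it arbitrarily rationally at other histories, and $A^\rat_j(h')\neq\emptyset$ whenever $\pi(h')\neq\emptyset$. Second, the action $\Opol_i$ plays against $\sigma_j$ depends on $\sigma_j$ only through $\tilde z_j=\O_j(\sigma_j(h))$. That second fact is what lets one action serve the whole cell, and it is why the biconditional holds for the full rational class but only the ``if'' direction in Theorem~\ref{th:op_coord_1}.
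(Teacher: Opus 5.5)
Your proof is correct and is essentially the paper's: sufficiency comes from specializing Theorem~\ref{th:op_coord_1} to $\Sigma_j=\Sigma_j^\rat$, and necessity comes from showing that $a_i^\star=\Opol_i(h,\Gamma_i(h,\tilde z_j))$ lies in $A_i^\rob(h;\tilde z_j)$, because each action of the cell is played at $h$ by some rational opponent, against which agent $i$ plays $a_i^\star$ (you argue directly and handle the degenerate cases explicitly, whereas the paper argues by contradiction and leaves them implicit). One small correction to your closing aside: the same necessity argument works verbatim for any $\Sigma_j$, because $A_i^\rob(h;\tilde z_j,\Sigma_j)$ quantifies only over $A^{\Sigma_j}(h)$, so the one-directional form of Theorem~\ref{th:op_coord_1} is a choice of presentation rather than something that distinguishes the full rational class.
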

\begin{proof}
\emph{($\Leftarrow$) Sufficiency.} This follows from Theorem~\ref{th:op_coord_1} with $\Sigma_{j}\!=\!\Sigma_{j}^{\rat}$.

\emph{($\Rightarrow$) Necessity.} We argue by contradiction. Suppose $\Opol_i\!\in\!\Sigma_i^{\text{op}}$ is $(\pi,\Sigma_{j}^{\rat})$-robust with pairing $\Gamma_i$, yet at some history $h^\star\!\in\!\mathcal{H}$ and opinion state $\tilde{z}_{j}^\star$, $A_i^\rob(h^\star;\tilde{z}_{j}^\star)\cap A_i^{(\tilde{z}_i^\star)}\!=\!\emptyset$, where $\tilde{z}_i^\star\!=\!\Gamma_i(h^\star,\tilde{z}_{j}^\star)$. Consider the action $a_i^\star\!=\!\Opol_i(h^\star,\tilde{z}_i^\star)$, which lies in $A_i^{(\tilde{z}_i^\star)}$ by Def.~\ref{def:opinion_guided}. For every rational action $a_{j}\!\in\!A_{j}^{(\tilde{z}_{j}^\star)}\!\cap\!A_{j}^\rat(h^\star)$, some $\sigma_{j}\!\in\!\Sigma_{j}^{\rat}$ plays $a_{j}$ at $h^\star$; since $\O_{j}(a_{j})\!=\!\tilde{z}_{j}^\star$, agent $i$ reacts with $\tilde{z}_i^\star$ and plays $a_i^\star$, and robustness gives $(a_i^\star,a_{j})\!\in\!\pi(h^\star)$. By~\eqref{eq:rob_partition}, $a_i^\star\!\in\!A_i^\rob(h^\star;\tilde{z}_{j}^\star)$, so $a_i^\star\!\in\!A_i^\rob(h^\star;\tilde{z}_{j}^\star)\cap A_i^{(\tilde{z}_i^\star)}$, contradicting the emptiness of this intersection. Hence, if there exists
a $(\pi,\Sigma_{j}^{\rat})$-robust opinion-guided strategy, then no pair $(h^\star,\tilde{z}_{j}^\star)$ with $A_i^\rob(h^\star;\tilde{z}_{j}^\star)\cap A_i^{(\tilde{z}_i^\star)}\!=\!\emptyset$ exists. Thus, the necessity is proved.
\end{proof}

\begin{tcolorbox}[runningexample]
\textbf{Running example}: Recall that $\pi(s_0)\!=\!\{(a_{p_1}, a_{p_3}), (a_{p_3}, a_{p_1})\}$ and $A_i^\rat(s_0)\!=\!\{a_{p_1}, a_{p_3}\}$. Let each robot hold a scalar opinion state ($n_1\!=\!n_2\!=\!1$), and let both share the partition function $\O(a_{p_1})\!=\!+1$, $\O(a_{p_2})\!=\!\O(a_{p_3})\!=\!-1$ and the opinion-pairing map $\Gamma(s_0)\!=\!-1$. For $\tilde{z}_{j}\!=\!+1$, the cell $A_{j}^{(+1)}\!=\!\{a_{p_1}\}$ contains the single rational action $a_{p_1}$, and only $a_{p_3}$ coordinates with it, so $A_i^\rob(s_0;+1)\!=\!\{a_{p_3}\}$ by~\eqref{eq:rob_partition}; the paired opinion state $\tilde{z}_i\!=\!-1$ then gives $A_i^\rob(s_0;+1)\cap A_i^{(-1)}\!=\!\{a_{p_3}\}\!\neq\!\emptyset$. The same computation for $\tilde{z}_{j}\!=\!-1$ gives $A_i^\rob(s_0;-1)\cap A_i^{(+1)}\!=\!\{a_{p_1}\}\!\neq\!\emptyset$. Condition~\eqref{eq:coord1_cond} thus holds at $s_0$, and the robust opinion-guided strategy can be designed as
\begin{equation}\label{eq:example_strategy}
	\Opol(s_0, +1)\!=\!a_{p_1}, \quad \Opol(s_0, -1)\!=\!a_{p_3}.
\end{equation}
With this strategy, each robot can guarantee coordination at $s_0$ with any rational opponent, regardless of which rational action it selects, as established by Theorem~\ref{th:op_coord_full}. 
\end{tcolorbox}

We now extend Theorem~\ref{th:op_coord_1} to the setting in which both agents employ opinion-guided strategies. Extending Def.~\ref{def:rational}, we call an opinion-guided strategy $\Opol_j\!\in\!\Sigma_j^{\text{op}}$ \emph{rational} if $\Opol_j(h,\tilde{z}_{j})\!\in\!A_j^\rat(h)$ for every $h\!\in\!\mathcal{H}$ and every $\tilde{z}_{j}\!\in\!\mathbb{B}^{n_j}$, and we denote by $\Sigma_{j}^{\text{op},\rat}$ the set of all such strategies. The following theorem characterizes robust coordination against $\Sigma_{j}^{\text{op},\rat}$.

\begin{theorem}[Robust coordination against opinion-guided opponents]~\label{th:op_coord_2}
For a two-agent game $\mathcal{G}$ with permissive joint strategy $\pi$, suppose the partition functions $\O_i,\O_{j}$ and the other agent's opinion-pairing map $\Gamma_j$ are given. There exists an opinion-guided strategy $\Opol_i\!\in\!\Sigma_i^{\text{op}}$ for agent $i$ that achieves $(\pi,\Sigma_j^{\text{op},\rat})$-robust coordination if there exists an opinion-pairing map $\Gamma_i$ such that, for every $h\!\in\!\mathcal{H}$ with $\pi(h) \!\neq\! \emptyset$, the coupling
\begin{equation}~\label{eq:coupling}
\begin{bmatrix}
	I_{n_i} & -\Gamma_i(h) \\
	-\Gamma_j(h) & I_{n_j}
\end{bmatrix} \begin{bmatrix}
\tilde{z}_i \\
\tilde{z}_{j}
\end{bmatrix} = 0
\end{equation}
admits at least one solution $(\tilde{z}_i,\tilde{z}_{j})$ with $A_{j}^{(\tilde{z}_{j})}\!\cap\!A_{j}^\rat(h)\!\neq\!\emptyset$, and every such solution satisfies:
\begin{equation}\label{eq:coord2_cond}
A_i^\rob(h;\tilde{z}_{j}) \cap A_i^{(\tilde{z}_i)}\!\neq\!\emptyset.
\end{equation}
Here, $I_{n_i}$ and $I_{n_j}$ denote the identity matrices of size $n_i$ and $n_j$.\end{theorem}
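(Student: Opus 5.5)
The plan mirrors the constructive argument of Theorem~\ref{th:op_coord_1}, with the one-sided reaction $\tilde{z}_i=\Gamma_i(h,\tilde{z}_j)$ replaced by the two-sided coupling~\eqref{eq:coupling}. Robustness is required only against the set-valued joint outcome~\eqref{eq:opinion_joint}. So it suffices to show two things for every $\Opol_j\in\Sigma_j^{\text{op},\rat}$ and every $h$ with $\pi(h)\neq\emptyset$: the set $(\Opol_i\,\|\,\Opol_j)(h)$ is non-empty, and it is contained in $\pi(h)$.

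\emph{Construction.} Fix $\Gamma_i$ as in the hypothesis and a history $h$ with $\pi(h)\neq\emptyset$. Call a pair $(\tilde{z}_i,\tilde{z}_j)$ \emph{admissible} if it solves~\eqref{eq:coupling} and $A_j^{(\tilde{z}_j)}\cap A_j^\rat(h)\neq\emptyset$. For each admissible pair, define $\Opol_i(h,\tilde{z}_i)$ to be an element of $A_i^\rob(h;\tilde{z}_j)\cap A_i^{(\tilde{z}_i)}$, which is non-empty by~\eqref{eq:coord2_cond}. For all other arguments, choose any action in $A_i^{(\tilde{z}_i)}$.

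\emph{Main obstacle: well-definedness.} The same $\tilde{z}_i$ could arise from two admissible pairs with different $\tilde{z}_j$. I would try to rule this out using the partial signed permutation structure of Assumption~\ref{asm:linear_pairing}. From $\tilde{z}_j=\Gamma_j(h)\tilde{z}_i$, the component $\tilde{z}_j$ is determined by $\tilde{z}_i$ on every row of $\Gamma_j$ that has a nonzero entry. If $\Gamma_j$ has zero rows, the free components of $\tilde{z}_j$ are forced by $\tilde{z}_j=\Gamma_j(h)\tilde{z}_i$ to equal $0$. Since $\tilde{z}_j\in\mathbb{B}^{n_j}$, these rows cannot be zero at all for any admissible pair. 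Hence $\tilde{z}_j$ is a function of $\tilde{z}_i$, and the choice above is unambiguous. If this argument fails, the fallback is to intersect the robust sets $A_i^\rob(h;\tilde{z}_j)$ over all partners $\tilde{z}_j$ compatible with the given $\tilde{z}_i$; this would need a slightly strengthened reading of~\eqref{eq:coord2_cond}.

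\emph{Robustness.} Take any $(a_i,a_j)\in(\Opol_i\,\|\,\Opol_j)(h)$. By~\eqref{eq:opinion_joint}, its witnesses satisfy $\tilde{z}_j=\O_j(a_j)$ and $\tilde{z}_i=\Gamma_i(h)\tilde{z}_j$. Because Def.~\ref{def:opinion_guided} forces $\O_i(a_i)=\tilde{z}_i$ and $\O_j(a_j)=\tilde{z}_j$, the second coupling $\tilde{z}_j=\Gamma_j(h)\tilde{z}_i$ also holds, so $(\tilde{z}_i,\tilde{z}_j)$ solves~\eqref{eq:coupling}. Rationality of $\Opol_j$ gives $a_j\in A_j^{(\tilde{z}_j)}\cap A_j^\rat(h)$, so the pair is admissible. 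Therefore $a_i\in A_i^\rob(h;\tilde{z}_j)$, and by~\eqref{eq:rob_partition} we get $(a_1,a_2)\in\pi(h)$.

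\emph{Non-emptiness.} Take an admissible solution, which exists by hypothesis, and set $a_i=\Opol_i(h,\tilde{z}_i)$ and $a_j=\Opol_j(h,\tilde{z}_j)$. Checking that this pair lies in~\eqref{eq:opinion_joint} relies on the same label consistency $\O_j(a_j)=\tilde{z}_j$, which is imposed in Def.~\ref{def:opinion_guided}. Since $h$ and $\Opol_j$ were arbitrary, this establishes $(\pi,\Sigma_j^{\text{op},\rat})$-robustness.
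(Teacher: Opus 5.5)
Your proposal is correct and follows essentially the same route as the paper's proof. The construction on admissible coupling solutions, the robustness step via rationality of $\Opol_j$ and the partition-restricted robust set, and the non-emptiness step all match. The well-definedness worry is settled exactly as the paper does: take the partner $\tilde z_j=\Gamma_j(h)\tilde z_i$, which is a function of $\tilde z_i$ for any matrix, so the zero-row detour through Assumption~\ref{asm:linear_pairing} and the fallback are unnecessary.
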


\begin{proof}
Let $\Gamma_i$ be an opinion-pairing map satisfying the conditions of the theorem. We first build an opinion-guided strategy $\Opol_i$ from $\Gamma_i$ and then show it coordinates robustly with every rational opinion-guided opponent. For every history $h\!\in\!\mathcal{H}$ and opinion state $\tilde{z}_i\!\in\!\mathbb{B}^{n_i}$, let $\tilde{z}_{j}\!=\!\Gamma_j(h)\tilde{z}_i$ be the opinion state paired with $\tilde{z}_i$ through the opponent's map. If $(\tilde{z}_i,\tilde{z}_{j})$ solves the coupling~\eqref{eq:coupling} with $A_{j}^{(\tilde{z}_{j})}\!\cap\!A_{j}^\rat(h)\!\neq\!\emptyset$, then $\tilde{z}_i\!=\!\Gamma_i(h)\tilde{z}_{j}$ and we choose
\begin{equation}\label{eq:coord2_construct}
\Opol_i(h,\tilde{z}_i)\!\in\!A_i^\rob(h;\tilde{z}_{j}) \cap A_i^{(\tilde{z}_i)},
\end{equation}
which is non-empty by~\eqref{eq:coord2_cond}; otherwise let $\Opol_i(h,\tilde{z}_i)$ be any action of $A_i^{(\tilde{z}_i)}$.

Next, we show that the designed $\Opol_i$ is $(\pi,\Sigma_{j}^{\text{op},\rat})$-robust. Let $\Opol_{j}\!\in\!\Sigma_{j}^{\text{op},\rat}$ and $h\!\in\!\mathcal{H}$ with $\pi(h)\!\neq\!\emptyset$ be arbitrary.
Since the coupling~\eqref{eq:coupling} admits a solution $(\tilde{z}_i,\tilde{z}_{j})$ with $A_{j}^{(\tilde{z}_{j})}\!\cap\!A_{j}^\rat(h)\!\neq\!\emptyset$, at which the rational opponent has a feasible action to play, the joint outcome $(\Opol_i\,\|\,\Opol_{j})(h)$ is non-empty. Let $(a_i,a_{j})\!\in\!(\Opol_i\,\|\,\Opol_{j})(h)$ be any realized joint action. By~\eqref{eq:opinion_joint}, the induced opinions $\tilde{z}_i\!=\!\O_i(a_i)$ and $\tilde{z}_{j}\!=\!\O_{j}(a_{j})$ solve the coupling~\eqref{eq:coupling}, with $a_i\!=\!\Opol_i(h,\tilde{z}_i)$ and $a_{j}\!=\!\Opol_{j}(h,\tilde{z}_{j})$. Since $\Opol_{j}$ is rational, $a_{j}\!\in\!A_{j}^{(\tilde{z}_{j})}\!\cap\!A_{j}^\rat(h)$, so this coupling solution has $A_{j}^{(\tilde{z}_{j})}\!\cap\!A_{j}^\rat(h)\!\neq\!\emptyset$ and the hypothesis~\eqref{eq:coord2_cond} applies, giving $A_i^\rob(h;\tilde{z}_{j}) \cap A_i^{(\tilde{z}_i)}\!\neq\!\emptyset$, and the construction~\eqref{eq:coord2_construct} sets $a_i\!=\!\Opol_i(h,\tilde{z}_i)\!\in\!A_i^\rob(h;\tilde{z}_{j})$. By~\eqref{eq:rob_partition}, $a_i$ coordinates with $a_{j}$, hence $(a_i,a_{j})\!\in\!\pi(h)$. As $\Opol_{j}$, $h$, and the realized pair were arbitrary, $\Opol_i$ is $(\pi,\Sigma_{j}^{\text{op},\rat})$-robust.
\end{proof}

Compared with Theorem~\ref{th:op_coord_1}, Theorem~\ref{th:op_coord_2} imposes a more restrictive coupling between the two agents' opinions to ensure their mutual compatibility. Intuitively, when the opponent commits to a fixed action, agent $i$ only needs to adapt to it; when both agents react to each other's opinions, their reactions must be mutually compatible for the coordination prescribed by $\pi$. Under this condition, a $(\pi,\Sigma_{j}^{\text{op},\rat})$-robust strategy can be designed as illustrated in the proof.

In fact, the two theorems can be linked because a non-reactive rational opponent is a limiting case of an opinion-guided one. When $\Gamma_j(h)\!=\!\mathbf{0}$, agent $j$'s opinion state $\tilde z_j$ is decoupled from agent $i$'s opinion $\tilde z_i$ and stays fixed, so agent $j$ commits to a fixed action, i.e., a non-reactive opponent. Moreover, since real agents carry varying degrees of prior preference, encoded in the bias $b_{j}$ of the opinion dynamics~\eqref{eq:layered_opinion}, a single opinion-guided model spans the range from a fully reactive to a fully committed opponent. A strategy satisfying Theorem~\ref{th:op_coord_2} therefore achieves robust coordination against both reactive and non-reactive opponents, as formalized in the following corollary.



\begin{corollary}[Robust coordination against both reactive and non-reactive opponents]\label{cor:op_type_agnostic}
For a two-agent game $\mathcal{G}$ with permissive joint strategy $\pi$, suppose the partition functions $\O_i,\O_{j}$ and the other agent's opinion-pairing map $\Gamma_j$ are given. Suppose there exists an opinion-pairing map $\Gamma_i$ such that, for every $h\!\in\!\mathcal{H}$ with $\pi(h)\!\neq\!\emptyset$ and every opinion state $\tilde{z}_{j}\!\in\!\mathbb{B}^{n_j}$, the coupling~\eqref{eq:coupling} admits a solution $(\tilde{z}_i,\tilde{z}_{j})$, and every such solution satisfies~\eqref{eq:coord2_cond}. Then the opinion-guided strategy $\Opol_i$ constructed from $\Gamma_i$ via~\eqref{eq:coord2_construct} achieves robust coordination not only against reactive opinion-guided opponents $\Opol_j\!\in\!\Sigma_{j}^{\text{op},\rat}$ but also against non-reactive rational opponents $\Opol_j\!\in\!\Sigma_{j}^\rat$.
\end{corollary}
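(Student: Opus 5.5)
The proof has two parts. The reactive part follows from Theorem~\ref{th:op_coord_2}. The non-reactive part treats a non-reactive opponent as the limiting opinion-guided opponent with $\Gamma_j(h)=\mathbf{0}$, and then checks the condition of Theorem~\ref{th:op_coord_1} for $\Sigma_j=\Sigma_j^\rat$.

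\emph{Reactive opponents.} The hypothesis of the corollary implies that of Theorem~\ref{th:op_coord_2}. For every $h$ with $\pi(h)\neq\emptyset$, the coupling~\eqref{eq:coupling} has a solution. We still need a solution whose cell $A_j^{(\tilde z_j)}$ meets $A_j^\rat(h)$. Since $\pi(h)\neq\emptyset$, some rational action $a_j$ exists; set $\tilde z_j=\O_j(a_j)$. The corollary guarantees a coupling solution for this particular $\tilde z_j$, so the requirement holds. Every solution satisfies~\eqref{eq:coord2_cond} by assumption. Theorem~\ref{th:op_coord_2} then applies, and $\Opol_i$ built via~\eqref{eq:coord2_construct} is $(\pi,\Sigma_j^{\text{op},\rat})$-robust.

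\emph{Non-reactive opponents.} Let $\sigma_j\in\Sigma_j^\rat$ be arbitrary, fix $h$ with $\pi(h)\neq\emptyset$, and put $a_j=\sigma_j(h)$ and $\tilde z_j=\O_j(a_j)$. The joint outcome is $(\Opol_i\,\|\,\sigma_j)(h)=\{(\Opol_i(h,\tilde z_i),a_j)\}$ with $\tilde z_i=\Gamma_i(h)\tilde z_j$. This is the case $\Gamma_j(h)=\mathbf{0}$, where $\tilde z_j$ is an exogenous fixed opinion. The key step is to show that $(\tilde z_i,\tilde z_j)$ is a coupling solution of the kind the construction~\eqref{eq:coord2_construct} covers, so that $\Opol_i(h,\tilde z_i)\in A_i^\rob(h;\tilde z_j)$. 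The corollary quantifies over every $\tilde z_j\in\mathbb{B}^{n_j}$, not only those produced by $\Gamma_j$. I read this as: for each $\tilde z_j$, the pair $(\Gamma_i(h)\tilde z_j,\tilde z_j)$ satisfies~\eqref{eq:coord2_cond}. That is exactly condition~\eqref{eq:coordfull_cond} of Theorem~\ref{th:op_coord_full}.

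Granting this, Theorem~\ref{th:op_coord_1} with $\Sigma_j=\Sigma_j^\rat$ gives that choosing $\Opol_i(h,\tilde z_i)\in A_i^\rob(h;\tilde z_j)\cap A_i^{(\tilde z_i)}$ is robust. It remains to check that this choice agrees with~\eqref{eq:coord2_construct} on the opinion states that matter. Then $a_j\in A_j^{(\tilde z_j)}\cap A_j^\rat(h)$, and by~\eqref{eq:rob_partition} we get $(a_i,a_j)\in\pi(h)$.

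\emph{Main obstacle.} The hardest step is reconciling the two constructions. \eqref{eq:coord2_construct} defines $\Opol_i(h,\tilde z_i)$ through $\tilde z_j=\Gamma_j(h)\tilde z_i$. A non-reactive opponent may instead hold any $\tilde z_j$ with $\Gamma_i(h)\tilde z_j=\tilde z_i$, and several such $\tilde z_j$ can map to the same $\tilde z_i$. My first approach is to show that the required action exists in every case. Each pair $(\Gamma_i(h)\tilde z_j,\tilde z_j)$ is treated as a solution of the coupling in the degenerate case $\Gamma_j=\mathbf{0}$, which is how the corollary's ``every $\tilde z_j$'' clause is meant. Then~\eqref{eq:coord2_cond} holds for it. I would then refine the choice in~\eqref{eq:coord2_construct} to an action in $A_i^{(\tilde z_i)}\cap\bigcap_{\tilde z_j:\Gamma_i(h)\tilde z_j=\tilde z_i}A_i^\rob(h;\tilde z_j)$. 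The fallback is to use the partial-signed-permutation structure from Assumption~\ref{asm:linear_pairing} to show this intersection is nonempty. If that fails, I would state the corollary for the strategy built directly by Theorem~\ref{th:op_coord_1}. That strategy coincides with~\eqref{eq:coord2_construct} on every coupling solution, so reactive robustness is preserved.
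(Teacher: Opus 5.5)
\textbf{Verdict: genuine gap, but a small one.} Your decomposition matches the paper's. The reactive case follows from Theorem~\ref{th:op_coord_2}. For the non-reactive case, the hypothesis yields condition~\eqref{eq:coordfull_cond}, so Theorem~\ref{th:op_coord_full} applies via the construction of Theorem~\ref{th:op_coord_1}. Your reactive part is correct, and it supplies a check the paper leaves implicit: some coupling solution has $A_j^{(\tilde z_j)}\cap A_j^\rat(h)\neq\emptyset$.

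\textbf{Where the gap is.} The corollary concerns the particular strategy built by~\eqref{eq:coord2_construct}. At $\tilde z_i$ that strategy is chosen robust against $\Gamma_j(h)\tilde z_i$. You never establish that this equals the opinion $\tilde z_j=\O_j(a_j)$ a non-reactive opponent actually holds; the paper's proof covers the same step only with the phrase ``the same $\Opol_i$''. None of your proposed fixes closes it:
\begin{itemize}
\item Reading $(\Gamma_i(h)\tilde z_j,\tilde z_j)$ as a coupling solution ``in the degenerate case $\Gamma_j=\mathbf 0$'' replaces the hypothesis. $\Gamma_j$ is fixed in the statement and is exactly what~\eqref{eq:coord2_construct} uses; with $\Gamma_j(h)=\mathbf 0$ you would get $\Gamma_j(h)\tilde z_i=\mathbf 0\notin\mathbb{B}^{n_j}$.
\item Assumption~\ref{asm:linear_pairing} alone does not make your intersection over preimages nonempty.
\item Restating the corollary for Theorem~\ref{th:op_coord_1}'s strategy proves a different claim.
\end{itemize}

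\textbf{The missing observation.} Take any solution of~\eqref{eq:coupling} with a prescribed $\tilde z_j$. The first block row gives $\tilde z_i=\Gamma_i(h)\tilde z_j$, and the second gives $\tilde z_j=\Gamma_j(h)\tilde z_i$. So solvability for every $\tilde z_j\in\mathbb{B}^{n_j}$, with the given $\Gamma_j$, is exactly the identity $\Gamma_j(h)\Gamma_i(h)\tilde z_j=\tilde z_j$. Two consequences follow:
\begin{itemize}
\item $\Gamma_i(h)$ is injective on $\mathbb{B}^{n_j}$. Your scenario of several $\tilde z_j$ sharing one $\tilde z_i$ cannot occur, and every preimage is a singleton. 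This, not Assumption~\ref{asm:linear_pairing}, is what would make your intersection nonempty.
\item $(\Gamma_i(h)\tilde z_j,\tilde z_j)$ is a genuine coupling solution, not a degenerate one.
\end{itemize}

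\textbf{Finishing the argument.} Take $\sigma_j\in\Sigma_j^\rat$, $a_j=\sigma_j(h)$, $\tilde z_j=\O_j(a_j)$ and $\tilde z_i=\Gamma_i(h)\tilde z_j$. Then:
\begin{itemize}
\item \eqref{eq:coord2_construct} at $\tilde z_i$ computes $\Gamma_j(h)\tilde z_i=\tilde z_j$.
\item The pair solves the coupling, and $a_j\in A_j^{(\tilde z_j)}\cap A_j^\rat(h)$, so the first branch of~\eqref{eq:coord2_construct} applies.
\item Hence $\Opol_i(h,\tilde z_i)\in A_i^\rob(h;\tilde z_j)\cap A_i^{(\tilde z_i)}$.
\item By~\eqref{eq:rob_partition}, $(\Opol_i(h,\tilde z_i),a_j)\in\pi(h)$.
\end{itemize}
No refinement of the construction is needed.

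Your collision worry is legitimate only if all-zero rows are read as leaving components free. Under that reading, \eqref{eq:coord2_construct} is itself ill-defined. Under the literal coupling over $\mathbb{B}$ that the corollary states, the collision cannot arise.
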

\begin{proof}
By assumption $\Gamma_i$ satisfies the conditions of Theorem~\ref{th:op_coord_2}, so $\Opol_i$ is $(\pi,\Sigma_{j}^{\text{op},\rat})$-robust. Because the coupling~\eqref{eq:coupling} is solvable at every $\tilde{z}_{j}\!\in\!\mathbb{B}^{n_j}$ and~\eqref{eq:coord2_cond} holds at every solution, the pairing $\tilde{z}_i\!=\!\Gamma_i(h)\tilde{z}_{j}$ satisfies~\eqref{eq:coord2_cond} for every $\tilde{z}_{j}$, which is exactly the full-rational condition~\eqref{eq:coordfull_cond} of Theorem~\ref{th:op_coord_full}. Hence $\Gamma_i$ also satisfies Theorem~\ref{th:op_coord_full}, so the same $\Opol_i$ is $(\pi,\Sigma_{j}^\rat)$-robust.
\end{proof}

\begin{tcolorbox}[runningexample]
\textbf{Running example}: When both robots run the strategy~\eqref{eq:example_strategy}, so that $\Gamma_i(s_0)\!=\!\Gamma_j(s_0)\!=\!-1$, the coupling~\eqref{eq:coupling} has two solutions, $(\tilde{z}_i, \tilde{z}_{j})\!=\!(+1,-1)$ and $(-1,+1)$. At both solutions, condition~\eqref{eq:coord2_cond} holds by the robust-set computation above, so Theorem~\ref{th:op_coord_2} guarantees that every joint action realized at $s_0$ lies in $\pi(s_0)$. Since the coupling is solvable at every $\tilde{z}_{j}$, Corollary~\ref{cor:op_type_agnostic} extends this guarantee to a non-reactive rational opponent as well, so the strategy is robust against any rational opponent, reactive or not. Finally, since the game is homogeneous, the strategy is also $\pi$-homogeneously implementable: both robots can run it and still realize only pairs in $\pi(s_0)$.
\end{tcolorbox}

Theorems~\ref{th:op_coord_1} and~\ref{th:op_coord_2} are stated for predefined $(\pi,\O_i, \O_{j}, \Gamma_j)$ and allow the two agents to be heterogeneous. When both agents are homogeneous, they run the same $\Opol$, and share a single partition $\O_i\!=\!\O_j\!=\!\O$ and a single opinion-pairing map $\Gamma_i(h)\!=\!\Gamma_j(h)\!=\!\Gamma(h)$ with identical opinion dimension $n_1\!=\!n_2$. Expanding Eq.~\eqref{eq:coupling} and eliminating $\tilde{z}_j$ gives
\[
\tilde{z}_i\!=\!\Gamma_i(h)\tilde{z}_j,\quad \tilde{z}_j\!=\!\Gamma_j(h)\tilde{z}_i \;\Longrightarrow\; \tilde{z}_i\!=\!\Gamma_i(h)\Gamma_j(h)\,\tilde{z}_i.
\]
Whatever opinion state the other agent holds, the two couplings must admit a matched pair $(\tilde{z}_i,\tilde{z}_{j})$, as Corollary~\ref{cor:op_type_agnostic} requires. For the shared map $\Gamma(h)$, such a pair exists at every $\tilde{z}_{j}\!\in\!\mathbb{B}^n$ when $\Gamma(h)^2$ acts as the identity on the components that $\Gamma(h)$ couples: an all-zero row leaves its component of $\tilde{z}_i$ free and constrains nothing. A nonzero entry in every row couples them all, and the condition is then involutivity, $\Gamma(h)^2\!=\!I$ (equivalently, $\Gamma(h)\!=\!\Gamma(h)^{-1}$), at every $h$.

\begin{corollary}[Robust coordination among homogeneous agents]\label{cor:op_homo}
Let $\mathcal{G}$ be a homogeneous game (Def.~\ref{def:symmetric}) with permissive joint strategy $\pi$, and suppose a shared partition $\O$ and an involutive pairing map $\Gamma$ (i.e., $\Gamma(h)\!=\!\Gamma(h)^{-1}$ for all $h$) satisfy the robust-coordination condition~\eqref{eq:coord2_cond}. Then the $(\pi,\Sigma_j^{\text{op},\rat})$-robust strategy $\Opol\!\in\!\Sigma^{\text{op},\rat}$ constructed from~\eqref{eq:coord2_construct} is also $\pi$-homogeneously implementable (Def.~\ref{def:robust_coord_homo}):
\begin{equation}\label{eq:op_homo}
\emptyset \neq (\Opol \,\|\, \Opol\!\circ\!\bar\tau)(h)\!\subseteq\!\pi(h), \quad \forall h\!\in\!\mathcal{H} \text{ with } \pi(h)\!\neq\!\emptyset.
\end{equation}
\end{corollary}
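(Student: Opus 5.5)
The plan is to reduce the homogeneous statement~\eqref{eq:op_homo} to Theorem~\ref{th:op_coord_2}, instantiated with $\Opol_i=\Opol$ and $\Opol_j=\Opol\circ\bar\tau$. The set inclusion will come from robustness. Non-emptiness will come from involutivity of $\Gamma$.

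First, I will check that the relabeled copy $\Opol\circ\bar\tau$ is a legitimate opponent for Theorem~\ref{th:op_coord_2}. Homogeneity of $\mathcal{G}$ (Def.~\ref{def:symmetric}) together with the shared partition $\O$ and shared map $\Gamma$ means agent $j$ runs the same strategy from its first-person view. Hence $\Opol\circ\bar\tau$ is an opinion-guided strategy with partition $\O$ and pairing $\Gamma_j(h)=\Gamma(h)$. It is rational because the construction~\eqref{eq:coord2_construct} picks actions in $A^\rob\subseteq A^\rat$ whenever $\pi(h)\neq\emptyset$. For the default branch I will take the action in $A^{(\tilde z)}\cap A^\rat(h)$ when that set is non-empty, and I will note that this is a convention needed to place the strategy in $\Sigma^{\text{op},\rat}$. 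I also need $\pi(\bar\tau(h))$ to correspond to $\pi(h)$ with the components swapped. I will take this as part of the homogeneous setting, since $\pi$ is shared common knowledge and labels carry no meaning. With these checks, $\Opol\circ\bar\tau\in\Sigma_j^{\text{op},\rat}$.

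Second, I will verify the hypotheses of Theorem~\ref{th:op_coord_2} for $\Gamma_i=\Gamma_j=\Gamma(h)$. The coupling~\eqref{eq:coupling} reduces to $\tilde z_i=\Gamma\tilde z_j$ and $\tilde z_j=\Gamma\tilde z_i$. Because $\Gamma^2=I$, any $\tilde z_j$ together with $\tilde z_i:=\Gamma\tilde z_j$ solves the coupling, since $\Gamma\tilde z_i=\Gamma^2\tilde z_j=\tilde z_j$. I will pick $\tilde z_j=\O(a)$ for some $a\in A^\rat_j(h)$, which is non-empty because $\pi(h)\neq\emptyset$. This gives a solution with $A_j^{(\tilde z_j)}\cap A_j^\rat(h)\neq\emptyset$. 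Condition~\eqref{eq:coord2_cond} at every solution is assumed. Theorem~\ref{th:op_coord_2} then gives $(\Opol\,\|\,\Opol\circ\bar\tau)(h)\subseteq\pi(h)$.

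Third, and this is the main obstacle, I must show non-emptiness: there is a mutually consistent pair in the sense of~\eqref{eq:opinion_joint}. The proof of Theorem~\ref{th:op_coord_2} asserts this from solvability, but I will make it explicit. Take the solution $(\tilde z_i,\tilde z_j)$ above. Set $a_i=\Opol(h,\tilde z_i)$ and $a_j=(\Opol\circ\bar\tau)(h,\tilde z_j)$. By Def.~\ref{def:opinion_guided} these actions lie in their cells, so $\O(a_i)=\tilde z_i$ and $\O(a_j)=\tilde z_j$. Both pairings hold by involutivity, so $(a_i,a_j)$ belongs to the joint outcome. The delicate point is confirming that agent $j$'s chosen action actually lies in the cell $\tilde z_j$ and is rational. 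For this I would apply the construction~\eqref{eq:coord2_construct} symmetrically from $j$'s viewpoint: $(\tilde z_j,\tilde z_i)$ also solves the swapped coupling, and~\eqref{eq:coord2_cond} holds for it by the shared assumption. With that, both non-emptiness and inclusion are established, which proves~\eqref{eq:op_homo}.
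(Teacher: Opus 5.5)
Your proposal is correct and follows essentially the same route as the paper: it invokes Theorem~\ref{th:op_coord_2} with the relabeled copy $\Opol\circ\bar\tau$ as the opponent, and it uses $\Gamma(h)^2=I$ to obtain coupling solutions, the rationality of robust actions, and the relabeling symmetry of the homogeneous game to place $\Opol\circ\bar\tau$ in $\Sigma_j^{\text{op},\rat}$. Your explicit construction of a mutually consistent pair for non-emptiness states openly what the paper's proof leaves tacit, as does your flagging of two implicit assumptions: that $\pi$ is invariant under $\bar\tau$, and that the default branch of~\eqref{eq:coord2_construct} is chosen rational.
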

\begin{proof}
Write $\O\!:=\!\O_1\!=\!\O_2$ and $\Gamma\!:=\!\Gamma_1\!=\!\Gamma_2$ for the shared partition and pairing map, with $n_1\!=\!n_2\!=:\!n$. The involutive $\Gamma$ makes the coupling~\eqref{eq:coupling} solvable for every $\tilde{z}_{j}\!\in\!\mathbb{B}^n$, since $\tilde{z}_i\!=\!\Gamma(h)\tilde{z}_{j}$ gives $\Gamma(h)\tilde{z}_i\!=\!\Gamma(h)^2\tilde{z}_{j}\!=\!\tilde{z}_{j}$. Together with the assumed condition~\eqref{eq:coord2_cond}, $\Gamma$ satisfies the requirements of Theorem~\ref{th:op_coord_2}, so the strategy $\Opol$ built from it via~\eqref{eq:coord2_construct} is $(\pi,\Sigma_{j}^{\text{op},\rat})$-robust. By the definition of the robust action set~\eqref{eq:rob_partition}, a robust action pairs with the opponent's rational action to form a joint action in $\pi(h)$, and is therefore itself rational (Def.~\ref{def:rational}); hence the robust $\Opol$ must be rational for agent~$i$, i.e., $\Opol\!\in\!\Sigma_i^{\text{op},\rat}$.

The relabeling $\bar\tau$ is a symmetry of the homogeneous game, so the relabeled copy $\Opol\!\circ\!\bar\tau$ plays agent $j$ exactly as $\Opol$ plays agent $i$. Since $\Opol$ is rational for agent~$i$, this symmetry makes $\Opol\!\circ\!\bar\tau$ rational for agent~$j$: every action it selects lies in $A_j^\rat(h)$ at every history $h$. Hence $\Opol\!\circ\!\bar\tau$ is a rational opinion-guided strategy for agent $j$, i.e., $\Opol\!\circ\!\bar\tau \in \Sigma_{j}^{\text{op},\rat}$. Consequently, for every $h\!\in\!\mathcal{H}$ with $\pi(h)\!\neq\!\emptyset$, the joint outcome $(\Opol\,\|\,\Opol\!\circ\!\bar\tau)(h)$ is non-empty, and the robustness of $\Opol$ against $\Sigma_{j}^{\text{op},\rat}$ places every realized pair in $\pi(h)$. This is~\eqref{eq:op_homo}, so $\Opol$ is $\pi$-homogeneously implementable.
\end{proof}

Corollary~\ref{cor:op_homo} shows that in a homogeneous game, once a shared partition $\O$ and an involutive pairing map $\Gamma$ satisfy the robust-coordination condition~\eqref{eq:coord2_cond}, the resulting robust opinion-guided strategy is automatically homogeneously implementable. Combining this with Corollary~\ref{cor:op_type_agnostic}, a strategy built on the shared $(\O,\Gamma)$ achieves the coordination prescribed by $\pi$ against any rational opponent, whether reactive or non-reactive, and whether the opponent runs the identical or a different strategy.

In summary, this section establishes, step by step, the conditions under which an opinion-guided strategy can resolve the decentralized coordination challenges. The triple $(\pi, \O, \Gamma)$ plays a central role in these conditions. Given a scenario, $\pi$ specifies which joint actions count as successful coordination; $\O$ partitions each agent's (possibly infinite) action set into finitely many cells, each indexed by a discrete opinion state; and $\Gamma$ specifies how an agent sets its own opinion in reaction to the opponent's opinion. Intuitively, these three objects represent common knowledge shared by agents, analogous to the traffic conventions that drivers learn from experience.

\subsection{Convergence Analysis of Opinion Dynamics}

The decision-layer analysis above establishes robust coordination, provided the two agents settle on a matched pair of opinion states that solves the coupling~\eqref{eq:opinion_coupling}. However, it does not explain how the agents arrive at such a pair. We now show how this discrete coupling is realized by the continuous opinion dynamics in~\eqref{eq:layered_opinion}.

Under the linear-pairing Assumption~\ref{asm:linear_pairing} on $\Gamma_i(h)$, each component of agent $i$'s opinion state couples to at most one component of agent $j$'s. The full opinion dynamics~\eqref{eq:layered_opinion} therefore separate into independent scalar subsystems, and their convergence analysis reduces to that of a single scalar subsystem. Each subsystem inherits a single coupling entry $\gamma\!\in\!\{+1,0,-1\}$ from $\Gamma_i(h)$.

For analytical simplicity, we further consider agents with no prior preference, $b_1\!=\!b_2\!=\!0$, set the self- and opponent-coupling gains to a common value $\kappa\!>\!0$, and take $\rho_1\!=\!\rho_2\!=\!\rho$. Under these assumptions, the two-agent dynamics for scalar opinions $(z_1, z_2)$ are given by
\begin{equation}\label{eq:NOD}
\left\{ \! \begin{array}{l} \textstyle
\dot{z}_1\!=\!-d z_1 + \rho \tanh\!\left(\kappa(z_1\!+\!\gamma z_2)\right) \\
\dot{z}_2\!=\!-d z_2 + \rho \tanh\!\left(\kappa(z_2\!+\!\gamma z_1)\right)
\end{array}\right.,
\end{equation}
whose adjacency matrix is $A\!=\!\textstyle\begin{bmatrix}0 & \gamma\\ \gamma & 0\end{bmatrix}$.

If $\gamma\!=\!0$, this component is decoupled from the opponent's opinion. Starting from the neutral initial value $z_i\!=\!0$, it remains at $z_i\!=\!0$, leaving the agent free to assign either discrete opinion state to it regardless of the opponent's choice. When $\gamma\!=\!\pm1$, we analyze how the two opinions evolve and converge to a matched pair of discrete opinion states.  By~\cite[Cor.~IV.1.2]{bizyaeva2022nonlinear} applied to system~\eqref{eq:NOD}, the neutral opinion $\boldsymbol{z}\!=\![z_1\ z_2]^\top\!=\!\mathbf{0}$ is locally exponentially stable for $\rho\!<\!\rho^\star$ and loses stability for $\rho\!>\!\rho^\star$, with critical value $\rho^\star\!=\!\frac{d}{2\kappa}$. At $\rho\!=\!\rho^\star$, the system undergoes a supercritical pitchfork bifurcation: two branches emerge from the origin, tangent to $\boldsymbol{v}_{\max}\!=\![1\ \gamma]^\top$, the eigenvector associated with the largest eigenvalue of $A$ (Fig.~\ref{fig:bifur}). To activate the bifurcation for decisive coordination, we set
\begin{equation}
\rho = \frac{d}{2\kappa} + \epsilon
\label{eq:rho_set}
\end{equation}
with $\epsilon\!>\!0$ small, placing $\rho$ just past $\rho^\star$. The opinion state $\boldsymbol{z}$ then converges to one of the two stable equilibria $\boldsymbol{z}^+$ or $\boldsymbol{z}^-$, small positive and negative multiples of $\boldsymbol{v}_{\max}$ respectively, corresponding to a matched pair of discrete opinion states $(\tilde{z}_1,\tilde{z}_2)$ with $\tilde{z}_1\!=\!\gamma\,\tilde{z}_2\!\in\!\{+1,-1\}$, the role assignment prescribed by $\gamma$. The convergence time can be reduced by increasing $\epsilon$, allowing the opinion dynamics to settle on such a matched pair, after which the strategy layer commits to the corresponding joint action in $\pi(h)$.

\begin{figure}[t]
    \centering
    \subfigure[$\gamma\!=\!+1$]{\label{fig:bifur_pos}
        \includegraphics[width=0.465\linewidth]{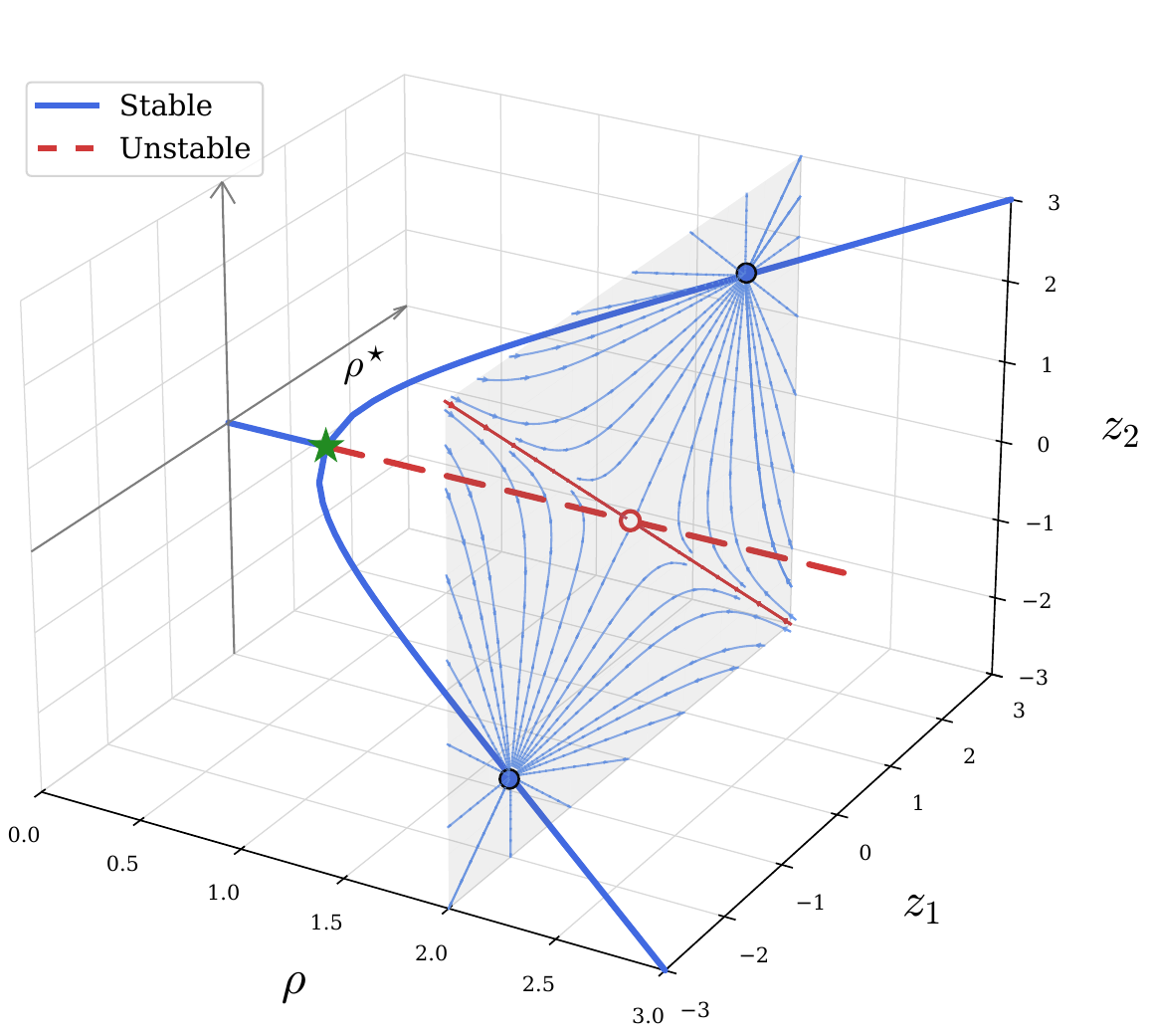}
    }
    \subfigure[$\gamma\!=\!-1$]{\label{fig:bifur_neg}
        \includegraphics[width=0.465\linewidth]{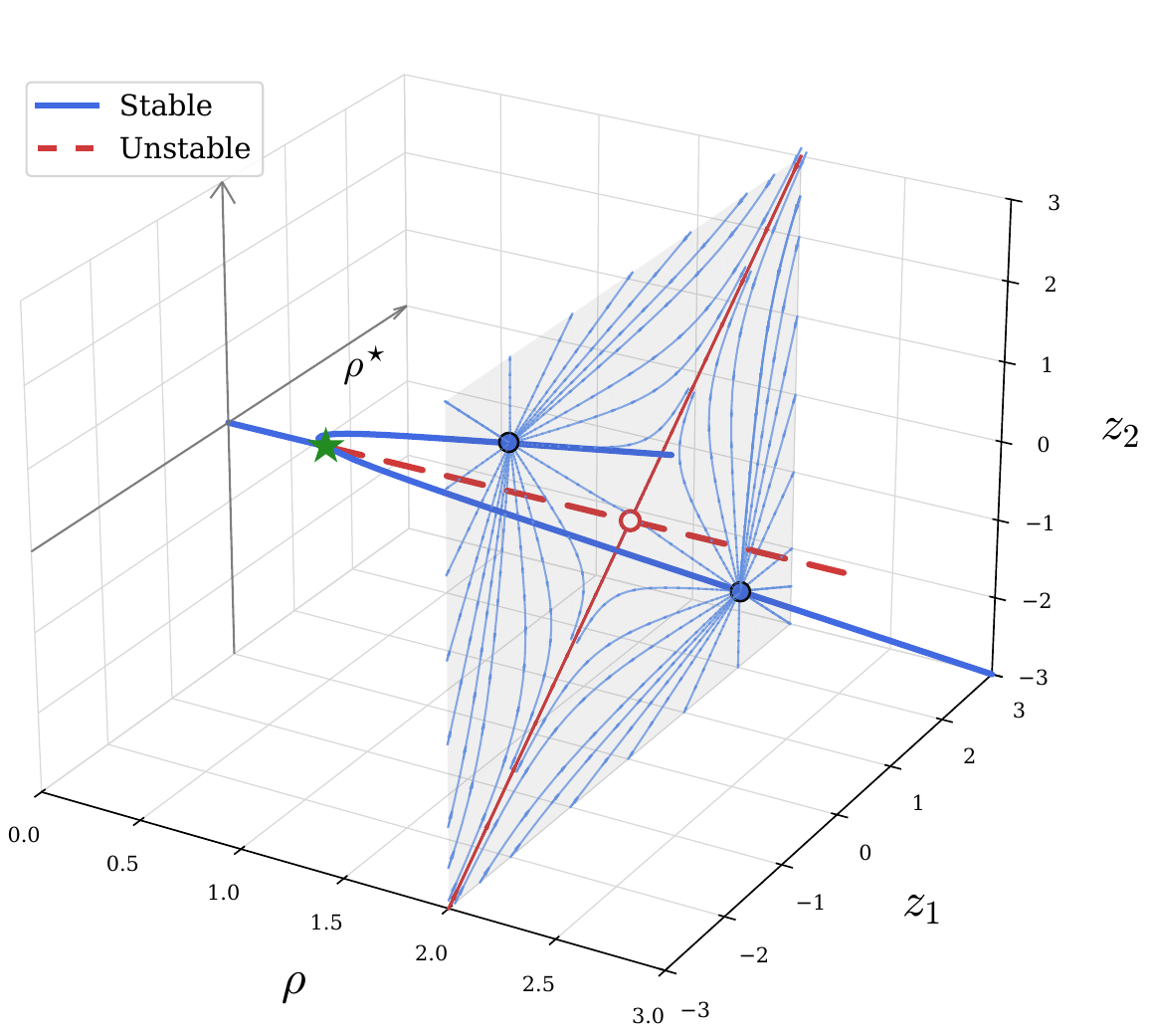}
    }
    \caption{Bifurcation diagrams of the two-agent scalar opinion dynamics~\eqref{eq:NOD} as the attention gain $\rho$ increases, shown for the two coupling signs $\gamma$. \textbf{(a)}~$\gamma\!=\!+1$: the two stable branches emerge along $\pm[1,1]^\top$, pairing same-sign opinions. \textbf{(b)}~$\gamma\!=\!-1$: the stable branches emerge along $\pm[1,-1]^\top$, pairing opposite-sign opinions, as in the running example~\eqref{eq:od_running}. In both panels the neutral equilibrium $\boldsymbol{z}\!=\!\mathbf{0}$ is stable for $\rho\!<\!\rho^\star$ and undergoes a supercritical pitchfork at the threshold $\rho^\star\!=\!d/(2\kappa)\!=\!0.5$ (green star); solid blue curves are the stable equilibrium branches and the dashed red curve is the unstable branch. The vertical slice $\rho\!=\!2$ carries the phase portrait, whose trajectories diverge from the unstable equilibrium (open circle) and converge to the two stable equilibria (filled circles), with the red curves marking the separatrix.}
    \label{fig:bifur}
\end{figure}

\subsection{Layered Guarantee}
Given the above analysis of the two layers, a guarantee of robust coordination with respect to $\pi$ can be obtained for the whole layered opinion-guided framework, based on a single time-scale separation between them:
\begin{assumption}[Time-scale separation]\label{asm:timescale}
The strategy layer acts at times $t_k$, $k\!\in\!\mathbb{N}$. Over each period $[t_k,t_{k+1})$:
\begin{enumerate}[leftmargin=*, align=left]
\item the history $h$ of $\mathcal{G}$ remains constant;
\item the window $\Delta t_{\mathrm{op}}\!<\! t_{k+1} - t_k$ is long enough that the opinion state $\tilde{z}_i\!=\!\mathrm{sgn}(z_i)$ remains constant over $[t_k\!+\!\Delta t_{\mathrm{op}},t_{k+1})$.
\end{enumerate}
\end{assumption}

With this assumption, the strategy layer selects a joint action at each decision time $t_k$ based on the opinion states at that time, and the opinion dynamics have enough time to settle on a matched pair of opinion states before the next decision time. The following theorem then establishes a layered guarantee for robust coordination.

\begin{theorem}[Layered guarantee for opinion-guided coordination]\label{th:opinion_robust_coord}
Consider a two-agent game $\mathcal{G}$ with permissive joint strategy $\pi$, where both agents share the partition map $\O$ and an opinion-pairing map $\Gamma$ satisfying Assumption~\ref{asm:linear_pairing}. If $\Gamma$ satisfies the conditions of Corollary~\ref{cor:op_type_agnostic}, the attention gain obeys $\rho\!>\!\rho^\star\!=\!\tfrac{d}{2\kappa}$, and the time-scale separation of Assumption~\ref{asm:timescale} holds, then the layered opinion-guided strategy $\Opol_i$ built from $\Gamma$ via~\eqref{eq:coord2_construct} achieves robust coordination with respect to $\pi$ against every $\pi$-rational opponent, whether reactive or non-reactive, and whether or not it runs the identical strategy as $\Opol_i$.
\end{theorem}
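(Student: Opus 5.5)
The proof composes the strategy-layer guarantees with the opinion-layer convergence, glued by Assumption~\ref{asm:timescale}. I fix an arbitrary decision instant $t_k$ with history $h$ satisfying $\pi(h)\neq\emptyset$. By item~1 of Assumption~\ref{asm:timescale}, $h$, and hence $\Gamma(h)$, $\O$ and $\pi(h)$, are constant over $[t_k,t_{k+1})$. It therefore suffices to show that the action committed at $t_k+\Delta t_{\mathrm{op}}$ forms, with the opponent's action, a pair in $\pi(h)$. I split into three opponent types and treat each through the discrete abstraction $\tilde z_i=\mathrm{sgn}(z_i(t_k+\Delta t_{\mathrm{op}}))$.

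\textbf{Step 1 (opinion layer yields a coupling solution).} By Assumption~\ref{asm:linear_pairing}, the dynamics~\eqref{eq:layered_opinion} decouple into scalar subsystems of the form~\eqref{eq:NOD}, each with $\gamma\in\{-1,0,+1\}$.
\begin{itemize}
\item For $\gamma=\pm1$ and $\rho>\rho^\star$, the convergence analysis above (via \cite[Cor.~IV.1.2]{bizyaeva2022nonlinear}) shows that the origin is unstable and generic trajectories converge to $\boldsymbol z^{\pm}\propto\pm[1\ \gamma]^\top$. Hence the signs satisfy $\tilde z_{i,k}=\gamma\tilde z_{j,\ell}$ for the matched components.
\item For $\gamma=0$ the component is free, and any sign is admissible for the coupling.
\end{itemize}
Item~2 of Assumption~\ref{asm:timescale} guarantees that these signs are already settled at $t_k+\Delta t_{\mathrm{op}}$ and remain constant. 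Collecting components, $(\tilde z_i,\tilde z_j)$ solves the coupling~\eqref{eq:coupling}.

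\textbf{Step 2 (reactive opponent, different or identical).} If the opponent runs a rational opinion-guided strategy, Step~1 places the realized discrete opinions in the solution set of~\eqref{eq:coupling}. The agent's action is then chosen via~\eqref{eq:coord2_construct}, so the argument of Theorem~\ref{th:op_coord_2} (invoked through Corollary~\ref{cor:op_type_agnostic}) gives $(a_i,a_j)\in\pi(h)$.

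For the identical-strategy case, both agents use the shared $(\O,\Gamma)$. The conditions of Corollary~\ref{cor:op_type_agnostic} make the coupling solvable for every $\tilde z_j$; since each nonzero row of $\Gamma$ is matched, this is the involutivity needed in Corollary~\ref{cor:op_homo}. That corollary then gives nonemptiness and inclusion in $\pi(h)$ for $(\Opol\,\|\,\Opol\circ\bar\tau)(h)$. The dynamical symmetry breaking of Step~1 selects one element of this set even from symmetric initial conditions $z_1=z_2=0$, because the origin is unstable and generic perturbations escape to $\boldsymbol z^\pm$.

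\textbf{Step 3 (non-reactive opponent).} I model a non-reactive opponent as $\Gamma_j=\mathbf 0$ with a nonzero bias $b_j$, so that $z_j$ settles at a sign equal to $\O_j(a_j)$. Agent $i$'s matched components are then driven by a fixed-sign input. I claim they converge to sign $\gamma\tilde z_j$, since the pitchfork is unfolded by the input and only the branch aligned with it survives. Corollary~\ref{cor:op_type_agnostic}, which reduces to Theorem~\ref{th:op_coord_full}, then yields $(a_i,a_j)\in\pi(h)$. Since $t_k$ and $h$ were arbitrary, robustness holds at every history.

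The main obstacle is Step~1 together with the unfolding claim of Step~3. The cited pitchfork result is local and concerns symmetric, unbiased dynamics; it does not by itself ensure convergence to the correct branch from arbitrary initial data or under an opponent whose opinion is committed. I would first restrict to the paper's setting (neutral initial condition, $b_i=0$ for the ego agent) and argue for the unbiased case by a sign-invariance/monotonicity argument on the scalar system. Specifically, the set $\{z_1,\gamma z_2\text{ same sign}\}$ is forward invariant and excludes the origin's stable manifold. For the committed case I would treat $z_j$ as an exogenous input of fixed sign, for which $\dot z_i=-dz_i+\rho\tanh(\kappa(z_i+\gamma z_j))$ has a unique attracting equilibrium of sign $\gamma\,\mathrm{sgn}(z_j)$ once $z_i$ starts at $0$. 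Non-generic initial conditions lying on the stable manifold of the origin must be excluded, or absorbed into ``sufficiently large $\Delta t_{\mathrm{op}}$'' via Assumption~\ref{asm:timescale}.
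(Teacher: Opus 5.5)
Your proposal follows the paper's own proof: strategy-layer robustness from Corollary~\ref{cor:op_type_agnostic}, and convergence of the opinion layer to a sign pair with $\tilde z_i=\gamma\tilde z_j$ from almost every initial condition (a committed opponent simply fixes $\tilde z_j$), with Assumption~\ref{asm:timescale} ensuring these settled signs are the ones acted on. Your explicit case split, using Corollary~\ref{cor:op_homo} for the identical copy and an input-driven scalar argument for a committed opponent, only unpacks what the paper compresses into one parenthetical; one correction to your closing remark is that the exceptional initial conditions cannot be absorbed into a longer $\Delta t_{\mathrm{op}}$: since $w=z_1+\gamma z_2$ obeys the scalar pitchfork $\dot w=-dw+2\rho\tanh(\kappa w)$ while $z_1-\gamma z_2$ decays exponentially, any start with $w(0)=0$ never reaches a matched pair. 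Such starts include the symmetric neutral start, and any $z_1=z_2$ when $\gamma=-1$; off the origin they keep constant, unmatched signs that still satisfy Assumption~\ref{asm:timescale}, so they must be excluded by genericity or asymmetry (bias, noise), exactly as the paper's ``almost every initial condition'' does.
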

\begin{proof}
By Corollary~\ref{cor:op_type_agnostic}, $\Opol_i$ is robust against every $\pi$-rational opponent, reactive or non-reactive: any joint action realized at a history $h$ with $\pi(h)\!\neq\!\emptyset$ lies in $\pi(h)$. It thus remains to show that the opinion layer realizes one within each period. With $\rho\!>\!\rho^\star\!=\!\tfrac{d}{2\kappa}$, the neutral opinion $\boldsymbol{z}\!=\!\mathbf{0}$ of~\eqref{eq:NOD} is unstable, so from almost every initial condition the opinions converge to a stable equilibrium aligned with $\pm\boldsymbol{v}_{\max}\!=\!\pm[1\ \gamma]^\top$, whose signs form a matched pair $\tilde{z}_i\!=\!\gamma\tilde{z}_{j}$ that solves the coupling~\eqref{eq:coupling} (against a non-reactive opponent, its committed action fixes $\tilde{z}_{j}$ and $z_i$ converges to the matched sign). By Assumption~\ref{asm:timescale}, these signs settle within the reserved window and the action executes before the next decision time, so a joint action is realized at $h$; by robustness, $(a_i,a_{j})\!\in\!\pi(h)$.
\end{proof}

\section{Practical Implementation and Case Studies} \label{sec:case_study}

\begin{figure*}[t]
    \centering
    \includegraphics[width=\textwidth]{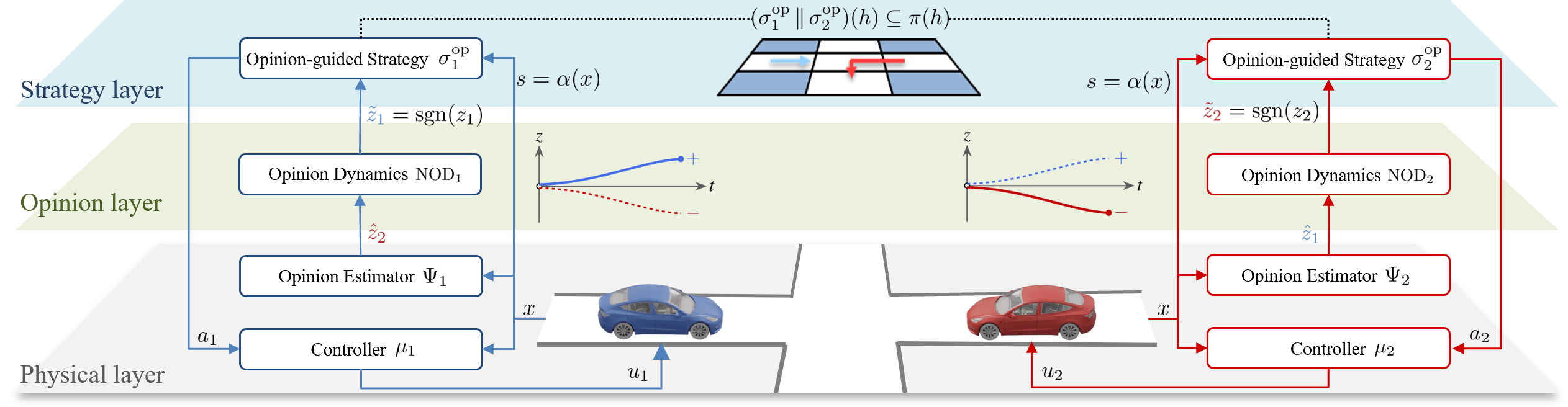}%
    \caption{Block-diagram view of the communication-free opinion-guided coordination framework. Each agent (agent~1 in blue, agent~2 in red) runs an opinion-guided strategy ($\Opol_i$) in the strategy layer, opinion dynamics (NOD$_i$) in the opinion layer, and an opinion estimator ($\Psi_i$) together with a tracking controller ($\mu_i$) acting on the physical plant ($f_i$) in the physical layer. The two control stacks are coupled only through the jointly observable physical state $x\!=\!(x_1,x_2)$: each opinion estimator $\Psi_i$ infers the opponent's opinion $\hat z_{j}$ from $x$ and feeds it into NOD$_i$, so the opinion loop closes through shared observation rather than direct inter-agent communication.}
    \label{fig:opinion-block}
\end{figure*}

\subsection{Communication-Free Implementation}\label{subsec:comm_free} The opponent-coupling term in~\eqref{eq:layered_opinion} requires agent $i$ to read its opponent's opinion state $z_{j}$ directly, which presumes inter-agent communication. In practice, such communication is often impractical or unavailable. Yet human drivers coordinate without it, because each agent's continuous motion already discloses the action it is committing to. The opinion-guided strategy admits a communication-free implementation, depicted in Fig.~\ref{fig:opinion-block}.

The communication-free implementation extends the layered opinion-guided strategy in~\eqref{eq:layered_opinion} by adding a physical layer. In the physical layer, agent $i$ has a physical state $x_i$ governed by the plant dynamics $\dot{x}_i\!=\!f_i(x_i,u_i)$, and a low-level tracking controller $\mu_i$ maps the committed action $a_i$ to an input $u_i\!=\!\mu_i(x,a_i)$ that steers $x_i$ along the motion realizing $a_i$. The two agents share only the jointly observable state $x\!=\!(x_1,x_2)\!\in\!\mathcal{X}$. From $x$, each opinion estimator obtains an estimate $\hat{z}_{j}\!=\!\Psi_i(x,h)$ of the opponent's opinion. In practice, the estimator $\Psi_i$ can be implemented with a variety of techniques, such as Kalman filtering, particle filtering, or learning-based methods. 

Unlike the realization in~\eqref{eq:layered_opinion}, the communication-free implementation replaces the directly read opponent's opinion $z_{j}$ with the estimate $\hat{z}_{j}$. The loop therefore closes without any inter-agent communication: the strategy layer commits an action from the current opinion state $\tilde{z}_i$ and the history $h$ assembled from the abstracted states $s\!=\!\alpha(x)$ via a state abstraction map $\alpha\!:\!\mathcal{X}\!\to\!\mathcal{S}$; the tracking controller drives the plant to realize it, so the agent's own motion discloses that action through the shared state $x$; the opinion estimator infers the opponent's opinion $\hat{z}_{j}\!=\!\Psi_i(x,h)$ from $x$; and the opinion dynamics evolve under this estimated coupling signal, feeding the resulting opinion state back to the strategy layer for the next decision. The coordination guarantees of Section~\ref{sec:analysis} hold if $\hat{z}_{j}$ is accurate enough for the opinion dynamics to converge to the same matched pair of signs. Owing to the fast and robust bifurcation of the opinion dynamics, the two agents are driven to achieve a pair of actions in the permissive set $\pi(h)$.

To deploy the proposed opinion-guided layered control framework in a specific application, the following components must be instantiated. (i) Strategy layer: from the desired joint task and its specifications, first construct a permissive joint strategy $\pi$ collecting every joint action pair that accomplishes the task, which in turn determines the rational strategy set $\Sigma_j^\rat$ of agent $j$; then design each agent's strategy $\Opol_i$ from a partition $\O$ of the opponent's action set and an opinion-pairing map $\Gamma$ that together satisfy the conditions of Corollary~\ref{cor:op_type_agnostic}, so that $\Opol_i$ coordinates robustly with respect to $\pi$ and $\Sigma_j^\rat$. (ii) Opinion layer: opinion dynamics that undergo the intended bifurcation and converge quickly and robustly, as required by the time-scale separation of Assumption~\ref{asm:timescale}. (iii) Physical layer, which realizes the committed actions on the plant: a state abstraction map $\alpha$ that recovers the discrete game state from the shared physical state, a low-level controller $\mu_i$ that tracks and executes the committed action, and an opinion estimator $\Psi_i$ that infers the opponent's opinion from the shared physical observation. 

To demonstrate how the layered framework coordinates decentralized agents, the three case studies below ground it in settings of growing complexity: Case-1 resolves a single conflict over a continuous action set with a scalar opinion; Case-2 spans multiple task stages with a multi-dimensional opinion; and Case-3 extends to a sequential general-sum game played over many decision steps.

\subsection{Case-1: On-Ramp Merging}

In the single-lane ramp merging scenario as shown in Fig.~\ref{fig:ramp_merge}, the ego vehicle faces an opponent vehicle whose behavior is uncertain: the opponent may either yield or maintain its speed, which determines whether the ego vehicle should merge in front of it or behind it. For simplicity, we consider only longitudinal dynamics and model each agent $i\!\in\!\{e,o\}$ as a single integrator $\dot{p}_i\!=\!v_i$, where $p_i$ is the longitudinal position along its lane and the speed $v_i\!\in\![v_i^{\min}, v_i^{\max}]$ is the control input.

In this case, the joint state is the pair of vehicle positions $s\!=\!(p_e,p_o)$, which coincides with the physical state; the abstraction map $\alpha$ is therefore the identity, and no state abstraction is required.  Each agent's action is a constant speed held over one decision interval, and its action set is the continuous feasible range $A_i\!=\![v_i^{\min}, v_i^{\max}]$, $i\!\in\!\{e,o\}$. We take both the permissive strategy $\pi$ and each agent's individual strategy to be memoryless (Markov), so every decision depends only on the current state $s$. The permissive set $\pi(s)\!\subseteq\!A_e\!\times\!A_o$ is determined by the merge geometry. The time for vehicle $i$ to reach its merge point $p_i^{\mathrm{m}}$ at constant speed is $\tau_i\!=\!(p_i^{\mathrm{m}}\!-\!p_i)/v_i$. The joint speed $(v_e, v_o)$ is permissive when the following vehicle, on reaching the merge point, trails the leader by at least a safety margin $d_{\text{safe}}$,
\begin{equation}\label{eq:merge_permissive}
\begin{aligned}
\pi(s) \!=\! \big\{ (v_e,v_o)\!\in\!A_e\!\times\!A_o \mid\ & v_e(\tau_o\!-\!\tau_e)\!\ge\!d_{\text{safe}} \\[-1pt]
\text{or}\ \ & v_o(\tau_e\!-\!\tau_o)\!\ge\!d_{\text{safe}} \big\}.
\end{aligned}
\end{equation}
The first inequality gives a front merge for the ego vehicle, and the second a rear merge. Fig.~\ref{fig:ramp_partition} draws $\pi(s)$ at a representative state as the green region. The rational action set $A_i^\rat(s)$ of each agent is the projection of $\pi(s)$ onto that agent's speed axis, drawn as the light-colored line along each axis.

To design the opinion-guided strategy, we first let the partition function $\O_i$ split each agent's action set at its midpoint $v_i^{\mathrm{th}}\!=\!\frac{1}{2}(v_i^{\min} + v_i^{\max})$, $i\!\in\!\{e,o\}$, into the two cells $A_{i}^{(-)} \!=\! [v_i^{\min},\, v_i^{\mathrm{th}})$ and $A_{i}^{(+)} \!=\! [v_i^{\mathrm{th}},\, v_i^{\max}]$. Merging is an anti-coordination task: the two vehicles resolve the conflict by taking opposite roles, one accelerating to pass through the merge point first and the other yielding to merge behind. We encode this structure with the constant opinion-pairing map $\Gamma\!=\!-1$, i.e.\ $\gamma\!=\!-1$ in~\eqref{eq:NOD}. Evaluating the robust action set~\eqref{eq:rob_partition} on the inferred opponent opinion gives $A_{e}^\rob(s;\tilde{z}_{o})$, the ego speeds that remain permissive against every rational opponent speed in that cell, shown in Fig.~\ref{fig:ramp_partition}. The ego estimates the opponent's speed by differencing the observed position over a window $\delta$, $\bar{v}_{o}\!=\![p_o(t)\!-\!p_o(t\!-\!\delta)]/\delta$, and thresholds it to recover the opponent's opinion state, $\hat{z}_{o}\!=\!\mathrm{sgn}(\bar{v}_{o}\!-\!v_o^{\mathrm{th}})\!=\!\mathrm{sgn}(p_o(t)\!-\!p_o(t\!-\!\delta)\!-\!v_o^{\mathrm{th}}\delta)$. Driven by this estimate, the opinion layer runs the scalar dynamics~\eqref{eq:NOD} with $\rho$ past the bifurcation threshold~\eqref{eq:rho_set}, converging to a matched opinion state that commits the ego to $A_{e}^\rob(s;\tilde{z}_{o})$. The ego then takes the speed within it closest to its reference $v_e^{\mathrm{ref}}$,
\begin{equation}\label{eq:merge_select}
v_e \;=\; \operatorname*{arg\,min}_{v \in A_{e}^\rob(s;\tilde{z}_{o})} \, \lvert v - v_e^{\mathrm{ref}} \rvert,
\end{equation}
so it stays near the reference speed whenever coordination does not force it away.

To verify that the designed strategy is effective across the range of rational opponents established in Corollary~\ref{cor:op_type_agnostic}, we evaluate it against two families of opponents. The first is a non-reactive individual strategy that commits to a fixed speed sampled uniformly at random from $A_o$. The second runs the same opinion-guided strategy as the ego but with a randomly sampled prior bias $b_o$; as noted in Section~\ref{sec:analysis}, this bias spans the range from a fully reactive to an effectively committed opponent, so the two families together cover the range of Corollary~\ref{cor:op_type_agnostic}. Over $100$ trials in each setting, the ego completes a safe merge in every trial, keeping the joint speed within the permissive set and the merge separation above $d_{\text{safe}}$, and it commits to the role compatible with the opponent: against a fixed-speed strategy it adapts to the committed role, and against a biased opinion-guided opponent it commits to the role opposite the one the opponent's bias drives it toward. To stress-test the strategy further, we consider a time-varying opponent that alternates between high and low speeds. As shown in Fig.~\ref{fig:ramp_merge}, the ego adaptively selects its speed from the robust action set of the opponent's current behavior; despite the rapid switching, it reacts promptly and keeps the joint state within the permissive set throughout (Fig.~\ref{fig:ramp_partition}).

\begin{figure}[t]
    \centering
    \includegraphics[width=\linewidth]{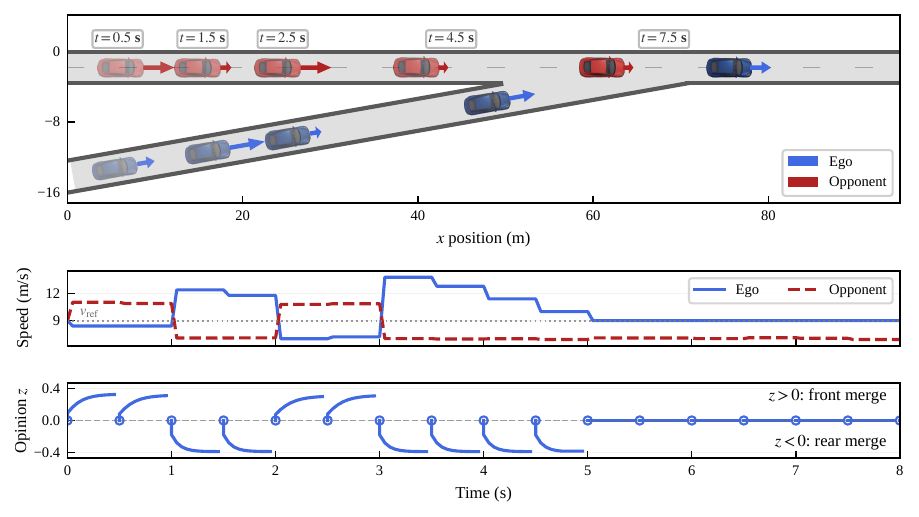}
    \caption{One-lane ramp merging against a speed-switching opponent. Top: road scene at successive time instants, where the ego (blue) merges into the lane of the fluctuating opponent (red). Middle: ego and opponent speeds, with the ego reference speed marked by the dotted line. Bottom: the ego opinion state $z$, whose sign commits the ego to a front ($z>0$) or rear ($z<0$) merge.}
    \label{fig:ramp_merge}
\end{figure}

\begin{figure}[htb]
    \centering
    \subfigure[$t=1.5$ s]{\label{fig:ramp_part_1}
        \includegraphics[width=0.46\linewidth]{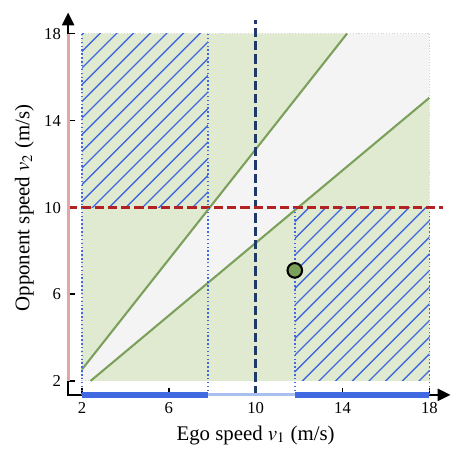}
    }
    \subfigure[$t=2.5$ s]{\label{fig:ramp_part_2}
        \includegraphics[width=0.46\linewidth]{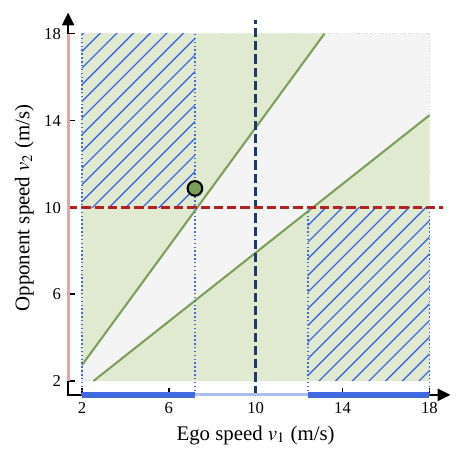}
    }
    \caption{Permissive and robust action sets in the joint speed space $A_e\!\times\!A_o$, at two instants of the run in Fig.~\ref{fig:ramp_merge}. The green region is the permissive set $\pi(s)$; the grey diagonal band is the unsafe region where the vehicles reach the merge point closer than $d_{\text{safe}}$. Dashed lines mark the partition thresholds $v_o^{\mathrm{th}}$ and $v_e^{\mathrm{th}}$; the light segments on the two axes are the agents' rational sets $A_e^\rat(s)$ and $A_o^\rat(s)$, while the dark segments on the ego axis are its robust action sets $A_e^\rob(s;\tilde{z}_{o})$. The green dot is the executed joint speed at this state.}
    \label{fig:ramp_partition}
\end{figure}

\subsection{Case-2: Decentralized Dual-Arm Manipulation}

\begin{figure*}[thb]
    \centering
    \includegraphics[width=0.97\textwidth, trim={55 0 0 0}, clip]{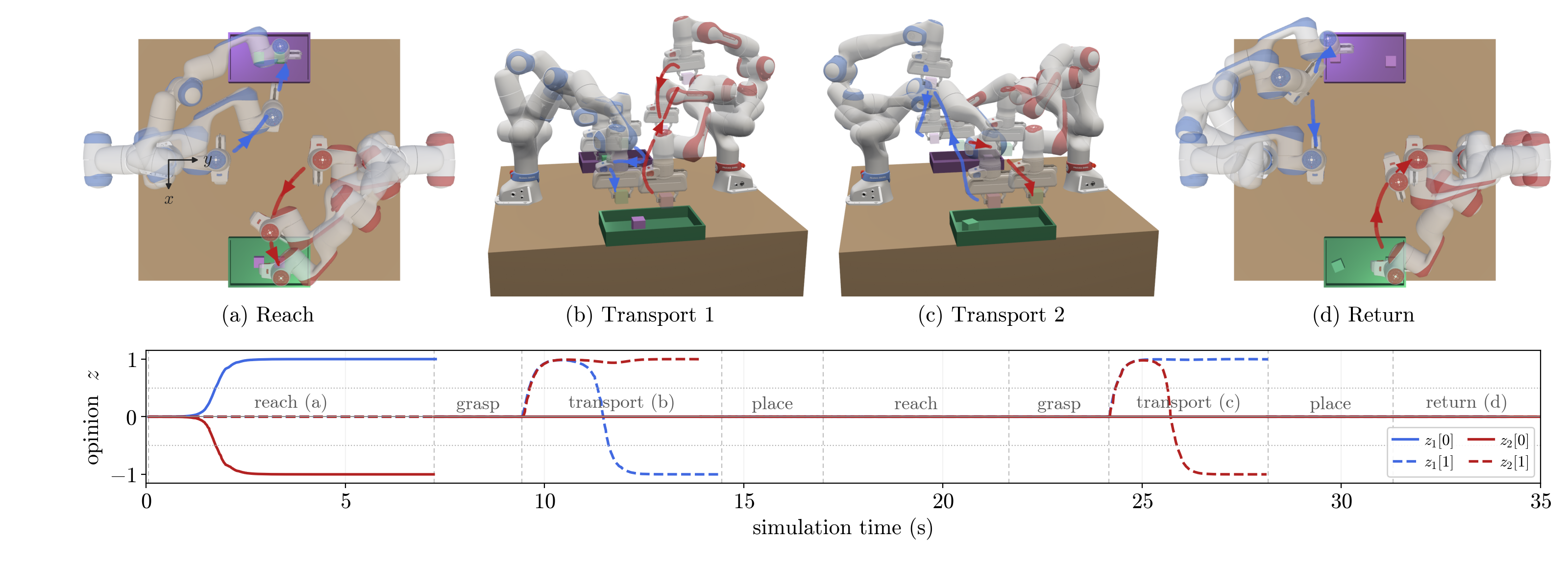}
    \caption{Decentralized dual-arm manipulation guided by the opinion state. Top: translucent snapshots of the two arms (arm~1 blue, arm~2 red) at successive instants along their trajectories. (a)~Reaching: the arms commit to opposite left/right halves of the table to grasp their cubes. (b)--(c)~Transporting: the arms carry the cubes through the shared central space toward the baskets, additionally separating along the upper/lower axis. (d)~Returning: the arms retract to their initial poses. Bottom: evolution of each arm's two opinions, with the solid curves the left/right opinion $z_i[0]$ and the dashed curves the upper/lower opinion $z_i[1]$.}
    \label{fig:dualarm}
\end{figure*}

As a second case study, we evaluate the proposed framework in a tabletop manipulation scenario, simulated in Gazebo~\cite{koenig2004gazebo}, in which two Franka Research~3 (FR3) arms~\cite{haddadin2022franka} operate concurrently within a shared workspace. The two arms are mounted on opposite sides of a table and controlled independently, as shown in Fig.~\ref{fig:dualarm}(a). Each of the two baskets starts with two cubes of the other basket's color, and the task is to sort them so that every cube ends in the basket matching its own color. Since an arm carries one cube at a time, it transports its two cubes in succession, and every transport crosses the whole table, so the two arms may enter the shared central space at the same time and collide or block each other. A natural way to coordinate the arms in this shared space is to keep them as far apart as possible, so that at any instant they occupy complementary parts of the workspace rather than the same region.

To make this precise, we fix a table frame common to both arms and partition the workspace along two independent axes of this frame. We write $c_i\!\in\!\mathbb{B}^2$ for the region arm $i$ occupies: the first entry encodes which lateral half of the table the arm lies in, left or right in this common frame, and the second whether it keeps to the upper or lower part of the shared central space, so that one arm passes above the other where the two cross. Each arm's action is the region it commits to next, so an action carries the same encoding as $c_i$ and $A_i\!=\!\mathbb{B}^2$. The partition $\O_i$ is then the identity and each cell $A_i^{(\tilde{z}_i)}$ holds a single action, in contrast to Case-1, where $\O_i$ collapses a continuum of speeds onto two cells. The permissive set $\pi(s)$ collects the joint actions in which the two arms hold complementary regions along the axis on which they must separate, and the committed region is realized in the physical layer, where a sampling-based planner generates a path confined to that region for the low-level controller $\mu_i$ to track. Each arm maintains an opinion $z_i\!\in\!\mathbb{R}^2$, a pair of scalar opinions, one per axis, whose signs form the opinion state $\tilde{z}_i\!=\!\mathrm{sgn}(z_i)\!\in\!\mathbb{B}^2$ that commits the arm to a region along the left/right and upper/lower axes.

Each arm's task unfolds in five stages: reach, grasp, transport, place, and return, with the first four repeated once per cube, so the arms complete two rounds before the final return. The arms coordinate through a shared game state: the abstraction $\alpha$ maps their configurations and grasp status to $s\!=\!(k, c_1, c_2)$, where $k$ is the current stage and $c_i$ is the region of the partitioned workspace that arm $i$'s end-effector lies in. Different stage requirements impose different separations on the two arms. While reaching, they must separate laterally to grasp from opposite halves of the table; while transporting, they must separate vertically to cross the shared central space without blocking each other; while grasping, placing, and returning, they need not separate at all, since their workspaces do not overlap. The arms therefore share one pairing map throughout, $\Gamma\!=\!-I$, and both run the opinion dynamics~\eqref{eq:layered_opinion} with a per-component attention $\rho_i\!=\!\mathrm{diag}(\rho_{i,1},\rho_{i,2})$ that decides which of its two separations is produced: a component with $\rho_{i,k}\!>\!\rho^\star$ bifurcates and the arms take opposite sides along it, while a component with $\rho_{i,k}\!<\!\rho^\star$ stays at the neutral opinion and no separation is made.


Fig.~\ref{fig:dualarm} shows how the opinion guides the two arms to coordinate in the shared workspace. In (a), the left/right opinion bifurcates and drives the arms to opposite lateral halves of the table to grasp their cubes, rather than reaching into the same half. In (b) and (c), the upper/lower opinion bifurcates and drives the arms to opposite halves of the shared central space, so that one passes above and the other below, and they cross without blocking each other. Notably, the arm passing above in the first crossing passes below in the second, so the upper/lower assignment is resolved anew at each crossing rather than fixed in advance. The opinion estimator recovers the other arm's opinion from the observed motion of its end-effector, so the two arms can coordinate without any communication. The estimator's design is not a contribution of this paper, so we omit the details for space.

We compare the proposed strategy against two baselines, running $20$ trials of each. The first removes the opinion guidance by fixing $\Gamma(s)$ to the zero matrix in every stage; with the two arms no longer coupled, they contend for the same region and block each other, and the task succeeds in only $3$ of the $20$ trials. The second replaces the opinion guidance with a fixed priority: one arm is the designated leader that always takes its preferred region while the other yields to an alternative. This breaks the symmetry, but does so rigidly: the follower must yield even when it is better placed to proceed, which wastes motion whenever the two arms prefer the same region, and stalls the task entirely if the leader is delayed, since the follower keeps waiting for an arm that cannot move; it succeeds in $11$ of the $20$ trials. The proposed strategy succeeds in all $20$.

\subsection{Case-3: Autonomous Navigation in Constrained Space}

In the first two case studies, the application itself dictates which joint behaviors are admissible, so the permissive strategy is constructed from an intuitive understanding of the practical task. In this case study, we instead ground the framework in a classic problem of game theory, a sequential general-sum game, where the admissible joint behaviors correspond to a well-defined solution concept, the Nash equilibria of the game. For such a well-defined game problem, we showcase how the proposed opinion-guided strategy and its layered realization resolve the long-standing equilibrium selection problem noted in the introduction, which arises whenever the game admits more than one equilibrium. 

\subsubsection{Navigation as a General-Sum Game} Specifically, we consider a two-agent navigation problem in a constrained environment (Fig.~\ref{fig:hutong}), the setting of the autonomous-vehicle deadlocks reported in recent news~\cite{YouTubeVideo_jam, YouTubeVideo_honk}. We formulate this problem as a sequential general-sum game as follows. Let $x_i\!\in\!X_i\!\subseteq\!\mathbb{R}^2$ denote the planar position of vehicle $i$, which serves as its physical state. Both vehicles operate in the same constrained environment, so $X_1\!=\!X_2\!=\!X$ and the joint physical state space is $\mathcal{X}\!=\!X\!\times\!X$. A straightforward abstraction is a grid, as illustrated in Fig.~\ref{fig:hutong_outcomes} and Fig.~\ref{fig:hutong_overall}: both vehicles share one map $\alpha_0\!:\!X\!\to\!S$ that sends a position to its grid state, and applying it to both gives the joint abstraction $\alpha\!:\!\mathcal{X}\!\to\!\mathcal{S}$ with $\mathcal{S}\!=\!S\!\times\!S$, which returns the pair of grid states $(s_1, s_2)$ of the two vehicles. For a two-vehicle navigation, the grid abstraction need not cover the whole environment: a $3\!\times\!3$ map suffices to capture the local interaction.

Each vehicle's action set consists of five semantic maneuvers, $A_i = \{\texttt{forward},\, \texttt{back},\, \texttt{left},\, \texttt{right},\, \texttt{wait}\}$, defined in a map frame common to both vehicles as the unit displacements $(0,1)$, $(0,-1)$, $(-1,0)$, $(1,0)$, and $(0,0)$; the transition relation $\rightarrow$ is thus deterministic: from each state $s$, each joint action $(a_1, a_2)$ leads to a unique successor $s'$, written $s \xrightarrow{(a_1, a_2)} s'$. Let $S_{\mathrm{obs}}\!\subset\!S$ denote the obstacle grid states, which no vehicle can drive through. The task fails if a vehicle enters an obstacle grid state or the two vehicles collide, and the states in which either has occurred form the failure set,
\begin{equation}\label{eq:sfail}
\mathcal{S}_{\mathrm{fail}} \!:=\! \bigl\{\, s\!\in\!\mathcal{S} \;\big|\; s_1 \!=\! s_2 \ \text{ or } \ \{s_1, s_2\} \cap S_{\mathrm{obs}} \!\neq\! \emptyset \,\bigr\}.
\end{equation}

\subsubsection{Nash Equilibrium Set as the Permissive Strategy} As in real-world deployments, each vehicle has a distinct navigation goal $g_i \!\in\! S$, assumed known to both vehicles. We carry it in the state: the augmented state of vehicle $i$ is $\bar{s}_i\!=\!(s_i, g_i)$, and the joint state is the pair $s\!=\!(\bar{s}_1, \bar{s}_2)$, on which each vehicle plays a standard Markov strategy $a_i\!=\!\sigma_i(s)$. We keep writing $\mathcal{S}$ for the set of such augmented joint states, and $s_1, s_2$ for the grid states of the two vehicles, as in~\eqref{eq:sfail}. Note that each vehicle's goal $g_i$ is constant throughout an interaction. Both vehicles share the same reward function $R$, each evaluated with respect to its own goal, which rewards goal attainment, penalizes collisions, and charges a constant step cost that discourages delay. Its full form is given in Appendix~\ref{app:reward}. We assume a shared $R$ to make the game homogeneous in its payoffs as well as its transitions (Def.~\ref{def:symmetric}), so that two strategies can differ only in how they select among equilibria, which keeps the later discussion clean. Since each vehicle optimizes the shared reward under its own goal, the two vehicles are neither fully cooperative nor purely adversarial: the game is general-sum. In a general-sum game, no single quantity ranks a joint maneuver as desirable, and all that can be required is that the two actions be best responses to each other, that is, a Nash equilibrium (NE), at which neither vehicle can improve its own return by changing its action alone. We therefore resolve the interaction by seeking NEs, as in other game-theoretic planning approaches for interactive driving~\cite{fisac2019hierarchical, li2018game, huang2024integrated, fridovich2020efficient}.

As noted above, general-sum games may admit multiple NEs. To retain this multiplicity, the permissive joint strategy $\pi$ is computed to collect every Nash action pair of the stage game at each state. Standard Nash value iteration resolves this multiplicity at synthesis, through a selection rule that is left arbitrary~\cite{kearns2000fast, hu2003nash}. Algorithm~\ref{alg:vi_ipc}, in which $\lambda\!\in\!(0,1)$ is the discount factor, departs from it in two ways. First, the algorithm returns a permissive strategy that retains the whole set of equilibria rather than committing to one of them. The operator $\textsc{Nash}(Q_1, Q_2)$ in Line~\ref{line:policy_extract} enumerates every pure-strategy Nash action pair of the stage game $(Q_1, Q_2)$ by exhaustive best-response checking over the $|A_1|\!\times\!|A_2|$ joint actions~\cite{bacsar1998dynamic, hespanha2017noncooperative}, and Line~\ref{line:value_avg} averages their returns, treating each equilibrium as equally plausible. This preserves behavioral flexibility at execution: a vehicle selects among the retained Nash actions according to the opponent's opinion, rather than being locked into a single equilibrium the opponent may not share. Second, the algorithm does not distinguish the two vehicles by their labels, which avoids selecting an equilibrium during the iteration. Where standard Nash value iteration keeps one value function per player, Algorithm~\ref{alg:vi_ipc} keeps a single $V$, because relabeling the two vehicles leaves the game unchanged (Def.~\ref{def:symmetric}). Accordingly, vehicle~2's value at $s$ is not a second function but the same one evaluated at $\tau(s)$. Because of these two points, the resulting permissive strategy privileges neither vehicle and retains every equilibrium, so the choice among them is left to execution rather than made at synthesis. The algorithm terminates when the value function converges to within a threshold $\theta$ or after $k_{\max}$ sweeps through the state space. 

\begin{algorithm}[h]
\caption{Nash value iteration}\label{alg:vi_ipc}
Initialize $V(s) \leftarrow 0$ for every state $s = (\bar{s}_1, \bar{s}_2)$\;
\For{sweep $k = 1, \ldots, k_{\max}$}{
    $V_{\mathrm{old}} \leftarrow V$\;
    \ForEach{state $s = (\bar{s}_1, \bar{s}_2)$}{
        \lIf{$s \in \mathcal{S}_{\mathrm{fail}}$}{$V(s) \!\leftarrow\! 0$, \textbf{continue}}
        $Q_1, Q_2 \leftarrow \mathbf{0}^{|A_1| \times |A_2|}$\;
        \ForEach{$a = (a_1, a_2) \in A_1 \times A_2$}{
            $s \!\xrightarrow{a}\! s'$\;
            $Q_1(a) \leftarrow R(s, a) + \lambda\, V_{\mathrm{old}}(s')$\;
            $Q_2(a) \leftarrow R(\tau(s), \tau(a)) + \lambda\, V_{\mathrm{old}}(\tau(s'))$\; \label{line:swap}
        }
        $\pi(s) \leftarrow \textsc{Nash}(Q_1, Q_2)$, \ $K(s) \leftarrow |\pi(s)|$\; \label{line:policy_extract}
        $V(s) \leftarrow \dfrac{1}{K(s)} \sum_{(a_1^*,\, a_2^*) \in \pi(s)} Q_1(a_1^*, a_2^*)$\; \label{line:value_avg}
    }
    \lIf{$\max_s |V(s) \!-\! V_{\mathrm{old}}(s)| \!<\! \theta$}{\textbf{break}}
}
\Return{$V$, $\pi$}
\end{algorithm}

\begin{remark}\label{rem:vi_assumptions}
Algorithm~\ref{alg:vi_ipc} returns a well-defined permissive strategy only if the iteration converges and every state outside $\mathcal{S}_{\mathrm{fail}}$ retains at least one pure equilibrium, $K(s)\!\geq\!1$. Neither is guaranteed for general-sum stochastic games~\cite{kearns2000fast, zinkevich2005cyclic}, and characterizing when they hold is outside the scope of this paper. We therefore assume both hold at every state and every iteration in our case study.
\end{remark}

\subsubsection{Opinion-Guided Strategy Design} The permissive joint strategy is the equilibrium set Algorithm~\ref{alg:vi_ipc} returns at each state,
\begin{equation}\label{eq:pi_case3}
\pi(s) \;=\; \bigl\{(a_k^{1*}, a_k^{2*})\bigr\}_{k=1}^{K(s)} \;\subseteq\; A_1\!\times\!A_2,
\end{equation}
where $K(s)\!=\!|\pi(s)|$ counts the equilibria at $s$. Each vehicle's rational action set is then the projection of $\pi(s)$ onto its own action axis, $A_i^\rat(s)\!=\!\{a_k^{i*}\}_{k=1}^{K(s)}$. Given $\pi(s)$, the pair $(\O, \Gamma)$ can be designed to satisfy the conditions of Corollary~\ref{cor:op_type_agnostic} in each state. On the $3\!\times\!3$ grid, every feasible state typically retains one or two equilibria. Where there is a single NE, each vehicle's rational set is a singleton, and no opinion guidance is needed. Where there are two, $\O$ partitions each vehicle's action set into two cells, one for each equilibrium, and $\Gamma$ couples the two opinions so that the committed pair forms one of the retained equilibria. The resulting opinion-guided strategy then selects between the two equilibria according to the inferred opinion of the opponent. Finally, the attention parameter is held constant above the bifurcation threshold~\eqref{eq:rho_set}, $\rho_1\!=\!\rho_2\!>\!\rho^\star$.

\subsubsection{Communication-Free Implementation} To achieve the communication-free realization, we instantiate the tracking controller and the opinion estimator in the physical layer. Specifically, each vehicle's dynamics are represented by a kinematic bicycle model, and the tracking controller $\mu_i$ is realized as a model predictive controller~(MPC) that tracks the reference state associated with the committed action $a_i$; the model and the MPC formulation are given in Appendix~\ref{app:mpc}. The opinion estimator $\Psi_{i}(x(t), h(t))$ is designed around a simple idea: check which of vehicle $j$'s rational actions best matches its observed velocity. The full form is given in Appendix~\ref{app:estimator}.

\begin{figure}[htb]
    \centering
    \subfigure[Resolution A.]{\label{fig:hutong_a}
        \includegraphics[width=0.46\linewidth]{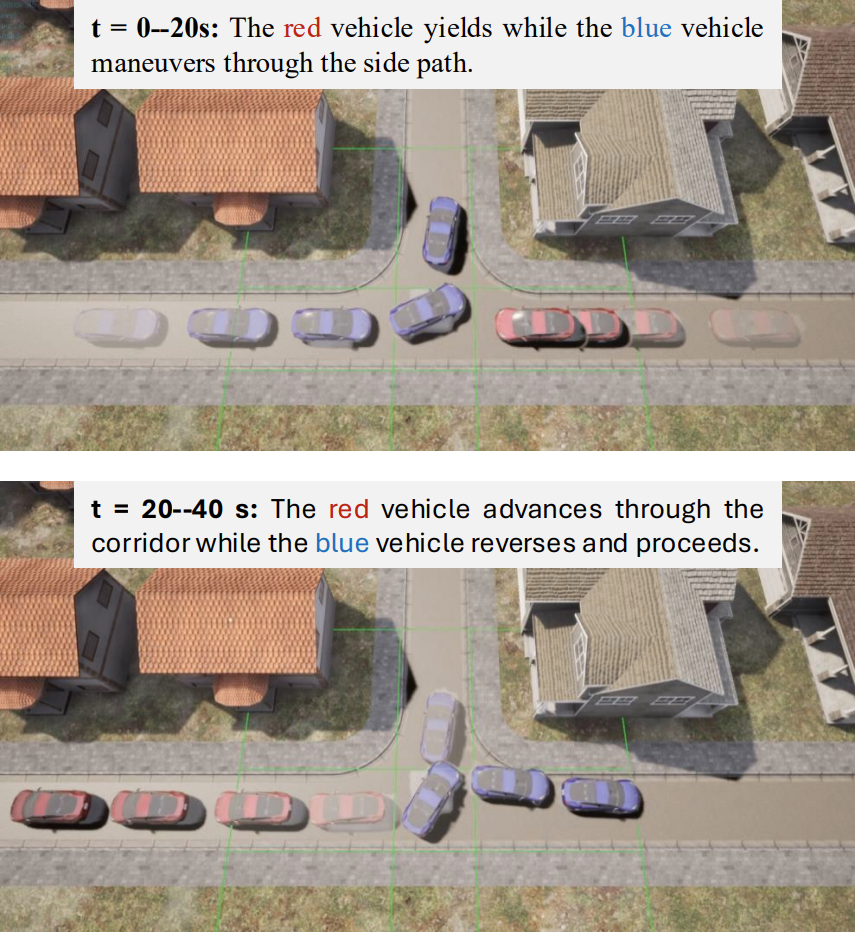}
    }
    \subfigure[Resolution B.]{\label{fig:hutong_b}
        \includegraphics[width=0.46\linewidth]{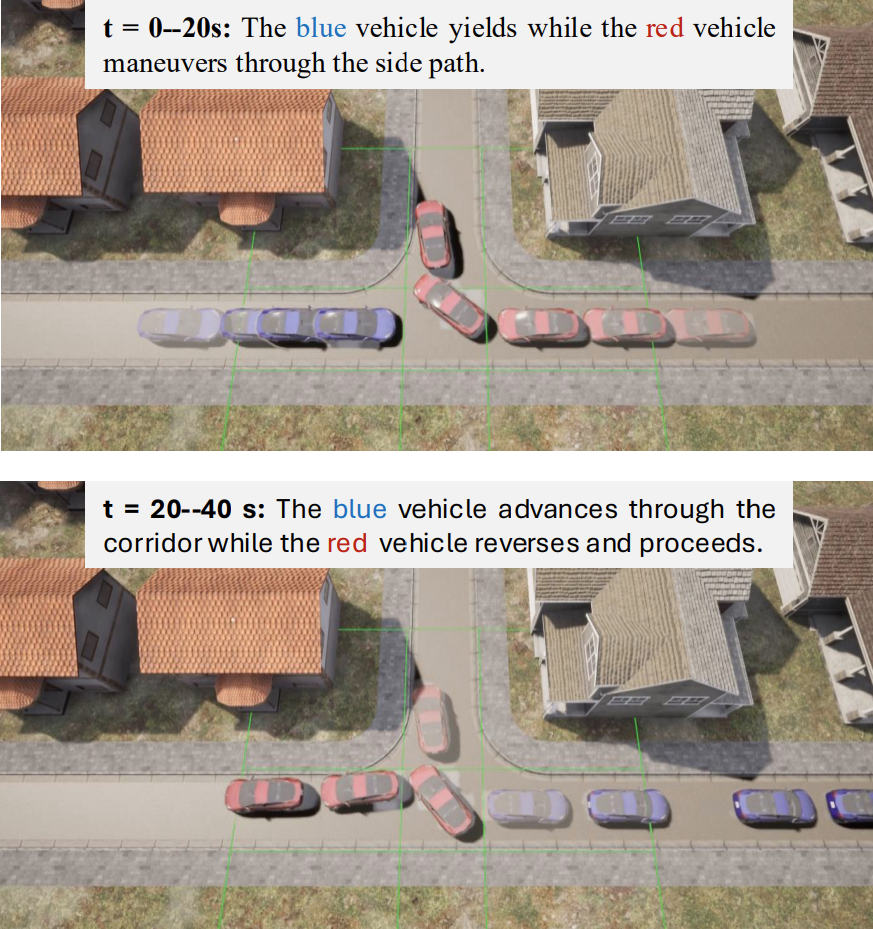}
    }
    \caption{Two symmetric coordination outcomes produced by the proposed framework. Despite sharing an identical strategy, the vehicles spontaneously assume distinct roles via opinion-dynamics-driven symmetry breaking. The green lines mark the boundaries of the grid abstraction.}
    \label{fig:hutong_outcomes}
\end{figure}


\subsubsection{Results}\label{sec:case3_results} Fig.~\ref{fig:hutong_outcomes} shows the two symmetric resolutions that the framework produces across runs of the same encounter, one for each of the game's two equilibria. In Resolution~A (Fig.~\ref{fig:hutong_a}), the blue vehicle gives way, clearing the corridor by pulling into the side path; the red vehicle drives through, and the blue vehicle then reverses out and continues toward its goal. In Resolution~B (Fig.~\ref{fig:hutong_b}), the roles are exactly reversed. In both runs, the opinion states start near neutrality and bifurcate to opposite signs, committing the two vehicles to one of the two retained equilibria; which one is selected is decided by the real-time evolution of the opinions. No communication is involved: each vehicle infers the other's opinion from its observed motion, so the agreement on roles emerges through shared observation alone. These results demonstrate the effectiveness of the opinion-guided strategy in the two-agent game and of its communication-free realization.

Furthermore, we evaluate coordination across all pairings of four strategy types in the scenario of Fig.~\ref{fig:hutong_outcomes}. The three baselines select the equilibrium without regard to the encountered opponent: \emph{NE-A} and \emph{NE-B} commit a vehicle to its role in the corresponding resolution of that figure, and \emph{rule-based} applies a priority rule, giving right of way to the vehicle closer to the conflict region. The proposed \emph{opinion-guided} strategy selects during the interaction, with each vehicle's prior bias $b_i$ sampled independently. Table~\ref{tab:results} reports the success rate of every pairing of these strategy types, over $100$ independent runs each, with vehicle~1 running the row strategy and vehicle~2 the column strategy. As expected, two vehicles committed to the same equilibrium achieve the desired coordination in every run ($100\%$), whereas an NE-A vehicle meeting an NE-B vehicle clashes in every run ($0\%$). The rule-based strategy selects its equilibrium adaptively from the real-time situation, but each vehicle evaluates a rigid criterion on its own observations. Against a fixed-equilibrium opponent, that criterion takes no account of what the opponent has committed to, so the two may settle on different equilibria and the success rate falls to $48$--$65\%$. Against itself the rule does better ($76\%$), since both vehicles evaluate the same condition and usually reach the same equilibrium; the failures that remain occur when the two sit close to that condition's decision boundary, where noise and delay can pull their verdicts apart. The rule-based strategy is thus not robust to the opponent it meets.
The opinion-guided strategy instead adapts to whichever opponent it meets, and the bifurcation is what makes that adaptation robust: the neutral opinion state is unstable, and the only stable ones are the two that assign the vehicles complementary roles, so the opinions are driven into one of them and stay there. A transient misreading of the opponent perturbs that trajectory without changing where it settles, which is exactly what a rigid criterion cannot do. The strategy succeeds in all $100$ runs of every pairing. This result demonstrates that the opinion-guided strategy is robust to the opponent's strategy type as long as its behavior is rational (Corollary~\ref{cor:op_type_agnostic}).

\begin{table}[h]
\centering
\caption{Success rate (\%) between different strategy types, over $100$ independent runs per pairing.}
\label{tab:results}
\begin{tabular}{lcccc}
\toprule
 & \multicolumn{4}{c}{Vehicle~2} \\
\cmidrule(lr){2-5}
Vehicle~1 & NE-A & NE-B & Rule-based & Opinion-guided \\
\midrule
NE-A           & 100 & 0 & 52 & 100 \\
NE-B           & 0 & 100 & 65 & 100 \\
Rule-based     & 48 & 51 & 76 & 100 \\
Opinion-guided & 100 & 100 & 100 & 100 \\
\bottomrule
\end{tabular}
\end{table}

\begin{figure}
	\centering
	\subfigure[Overall trajectory]{\label{fig:hutong_overall}
		\includegraphics[width=0.92\linewidth]{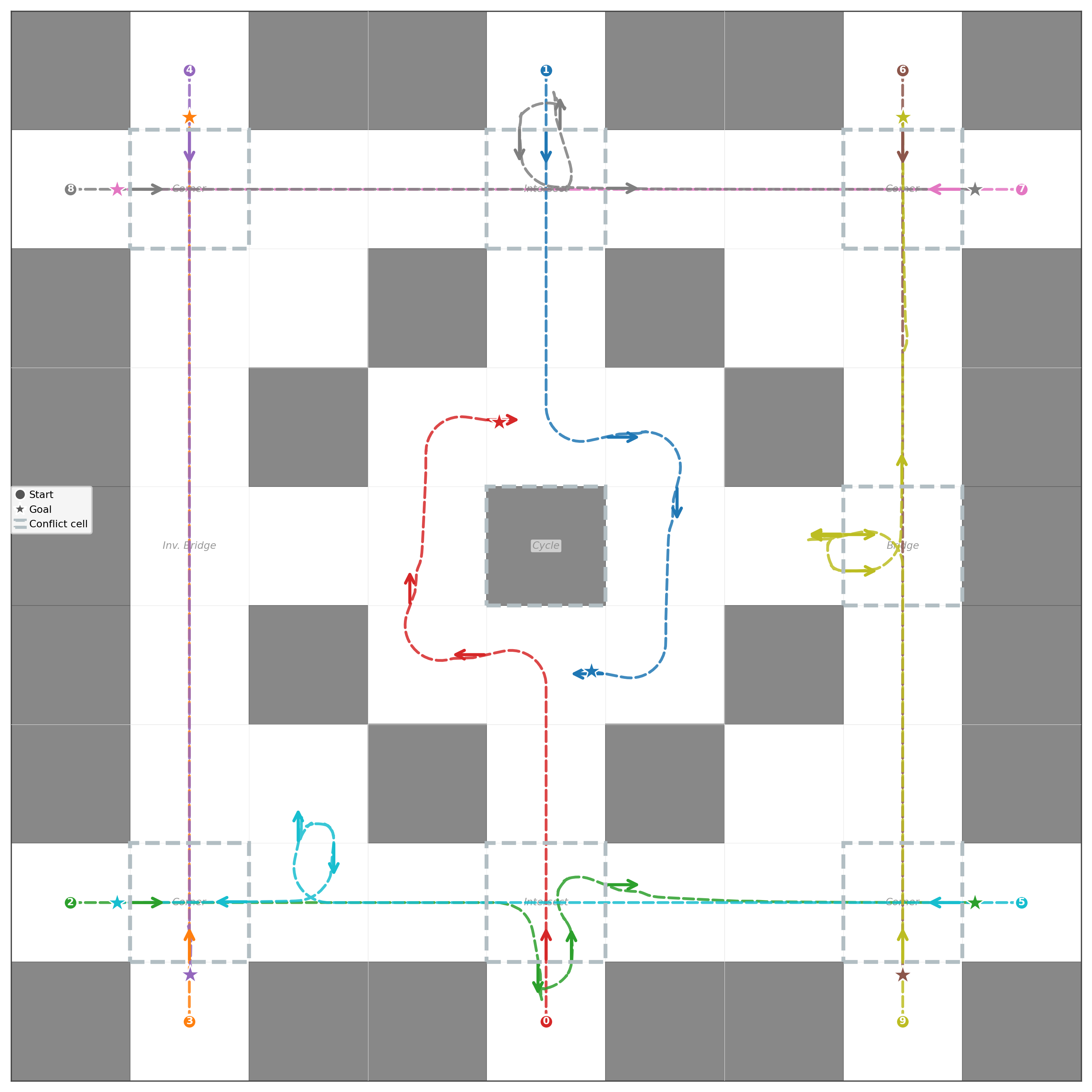}
	}\\
	\subfigure[Submap 1]{\label{fig:hutong_sub1}
		\includegraphics[width=0.44\linewidth]{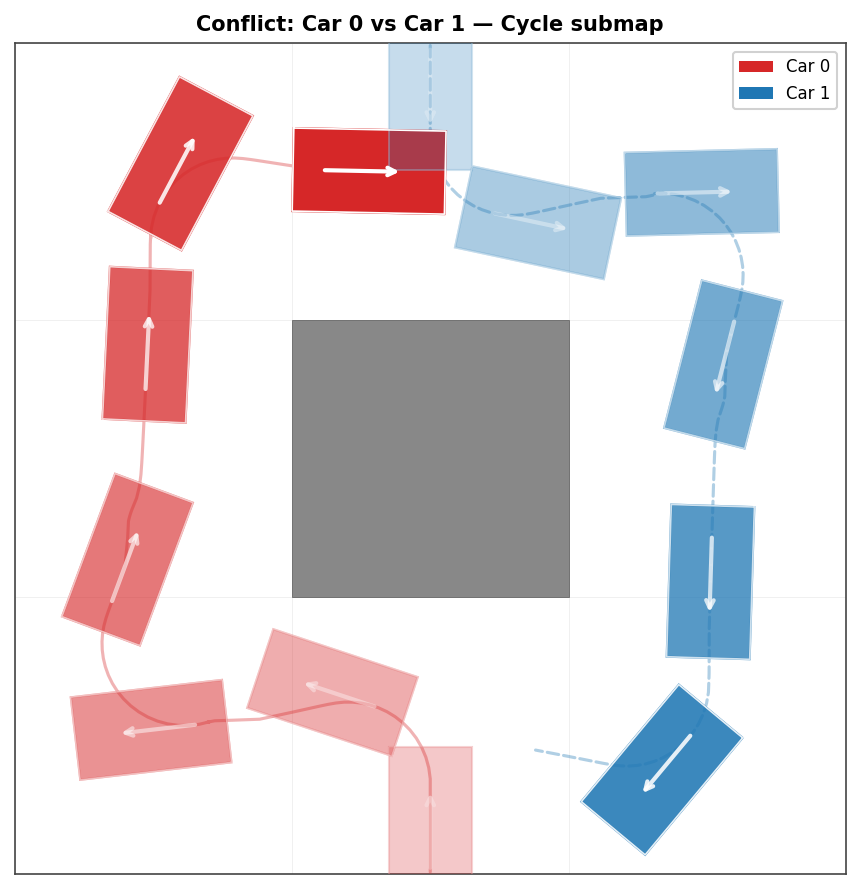}
	}
	\subfigure[Submap 2]{\label{fig:hutong_sub2}
		\includegraphics[width=0.44\linewidth]{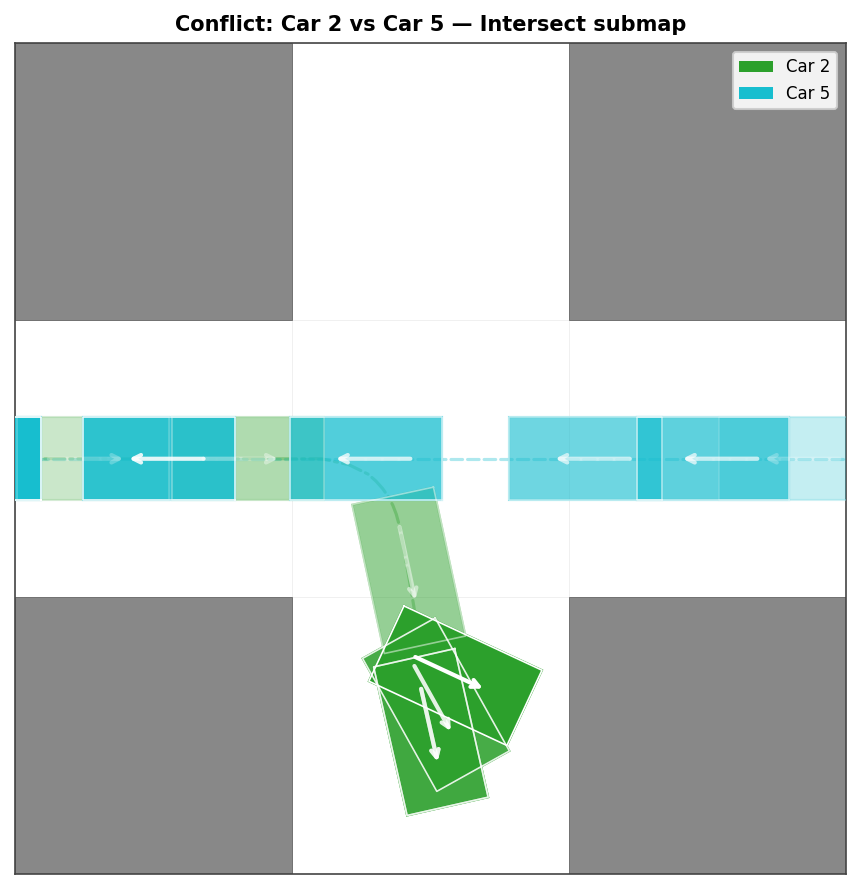}
	}
	\caption{Large-scale validation in a constrained environment: (a) overall trajectory of the fleet; (b)--(c) zoomed-in views of two representative interaction segments.}
	\label{fig:hutong_large}
\end{figure}

Finally, we scale the scenario from a single encounter to a fleet of ten vehicles navigating a constrained environment that resembles a dense residential area or parking lot, each with its own start and goal. Fig.~\ref{fig:hutong_overall} shows the executed trajectories: drivable grid states are white, and grid states blocked by buildings are grey. Each vehicle is assumed to meet at most one other at a time, within the $3\!\times\!3$ neighborhood map around the contested passage, and the permissive strategy for each such neighborhood is generated by Algorithm~\ref{alg:vi_ipc}. The opinion-guided strategy runs on top of it without communication. These neighborhoods are not alike: a corner, an intersection, a one-lane bridge, and the cycle around a central block each induce a different local game, with its own permissive set and its own number of equilibria. The same procedure synthesizes the opinion-guided strategy in every case, so the framework is not tuned to one geometry. From the trajectories, we can see that every vehicle reaches its goal without collision or deadlock, and every passage that admits more than one resolution is settled during the encounter itself. Two of them are enlarged. In Fig.~\ref{fig:hutong_sub1}, vehicles~0 and~1 round the central block and contend for the same grid state; their opinions bifurcate to opposite signs, so one claims the grid state and drives on while the other lets it pass before continuing. In Fig.~\ref{fig:hutong_sub2}, vehicles~2 and~5 meet at an intersection, where vehicle~5 claims the passage and drives straight through while vehicle~2 concedes and proceeds once the intersection clears. Throughout the run, the layered opinion-guided strategy decides on its own when coordination is needed, which equilibrium the two vehicles settle on, and which role each of them takes. The effectiveness across these different geometries demonstrates the generality of the approach, which resolves various encounter scenarios without fixed priorities, hand-designed rules, or communication.

\subsection{Discussion}

This paper is motivated by a critical observation: in practical interaction between agents, there is rarely a single admissible joint behavior, and when each agent selects the one it prefers, the two choices may be incompatible and coordination fails. This work departs from the common premise of a known or communicating opponent and adopts a weaker assumption: the opponent is rational, in the sense that at every moment it is willing to take an action that can be part of an admissible joint behavior. On this basis, we formalized the decentralized coordination problem between two agents through a series of definitions, among them the permissive joint strategy and rationality. We then introduced the requirements that a strategy must satisfy when agents carrying different strategies meet at random: robust coordination, which asks the strategy to remain compatible with every rational opponent that respects the shared permissive set, and homogeneous implementability, which asks it to remain effective when both agents run that same strategy. To meet both requirements, we presented one key insight for designing such a strategy: keep every admissible joint behavior open, preserving the flexibility to adapt to whichever opponent is encountered, and decide the specific one at runtime from how that opponent behaves. Building on this insight, we proposed the opinion-guided strategy and its layered realization. As the case studies show, the framework resolves miscoordination among agents with differing preferences and breaks symmetry among structurally identical agents, without fixed rules, priorities, or communication.  

In addition, several problems remain open for future work: (i)~The present analysis addresses the interaction of two agents over a binary decision, and extending the opinion-guided strategy to more agents and more options is not straightforward. For more than two agents, the bifurcation of the opinion dynamics no longer guarantees that the agents will settle on complementary roles. Although a vector opinion can in principle encode more than two options, Assumption~\ref{asm:linear_pairing} couples each of its components to at most one component of the other agent's opinion, so the components evolve as independent binary decisions; a single choice among more than two mutually related options, which requires dependence among the components, is therefore out of its reach. (ii)~The geometry of the permissive set varies from state to state, and with it what the interaction demands of the partition and the pairing map. So far the two have been designed either from a simple principle, as in Cases~1 and~3, or by hand, as in Case~2. How to design the opinion-pairing map $\Gamma$ and the partition $\O$ so that they satisfy the conditions of Corollary~\ref{cor:op_type_agnostic} for an arbitrary permissive geometry, and in a general multi-agent setting, remains open.
Resolving these two points would generalize the opinion-guided strategy to interactions among more agents and to more complex scenarios.


\section{Conclusions}

This paper studied decentralized coordination between agents that meet at random, without communication and without knowledge of each other's strategy, so that the strategies they carry may prove incompatible. To make the problem precise, we introduced a formal formulation built on two notions: the permissive joint strategy, which collects all the admissible joint behaviors, and rationality, which constrains an agent's individual behavior. On that basis, we stated the requirements that a strategy must meet in decentralized coordination: robust coordination, which asks it to remain compatible with every rational opponent that respects the shared permissive set, and homogeneous implementability, which asks it to remain effective when both agents run that same strategy. 

To satisfy both requirements, we proposed an opinion-guided strategy that couples the two agents through an auxiliary opinion state, realized in a layered framework: the strategy layer keeps every admissible joint behavior available and commits to an action once the other agent's opinion is inferred, while coupled opinion dynamics drive the two agents to a matched pair of opinions. A layer-by-layer analysis established the conditions under which two individually synthesized strategies are compatible, and thereby guarantees that randomly encountered agents realize an acceptable joint behavior. Three case studies of increasing decision complexity, ramp merging, dual-arm manipulation, and navigation in a constrained environment, showed the framework resolving conflicts and bringing the agents to a common admissible joint behavior, without relying on communication or predesigned coordination. Notably, the last of these corresponds to a typical general-sum game admitting multiple Nash equilibria: there the opinion dynamics guided the two agents to a common equilibrium from the opponent's real-time behavior alone, and kept them compatible whether that opponent ran a different strategy or the identical one, provided only that it was rational.

\appendices
\section{Implementation Details of Case-3}
\subsection{Reward Function}\label{app:reward}
The joint state $s\!=\!(\bar{s}_i, \bar{s}_j)$ collects the augmented states $\bar{s}_i\!=\!(s_i, g_i)$ and $\bar{s}_j\!=\!(s_j, g_j)$, each pairing a vehicle's grid state with its goal, and the joint action is $a\!=\!(a_i, a_j)$; the vehicle listed first is taken as the ego. Let $s_i'$ denote the grid state that vehicle $i$ reaches from $s_i$ under action $a_i$, and likewise $s_j'$. The shared reward of Section~\ref{sec:case_study} is
\begin{equation}\label{eq:reward_case3}
\begin{aligned}
R(s, a) = & -1 + 20\cdot\mathbf{1}[s_i'\!=\!g_i]  - 50\cdot\mathbf{1}[s_i'\!\notin\!S] \\
& - 100\bigl(\mathbf{1}[s_i'\!\in\!S_{\mathrm{obs}}] + \mathbf{1}[s_i'\!=\!s_j'] \\ & +  \mathbf{1}[s_j\!=\!s_i' \wedge s_i\!=\!s_j']\bigr).
\end{aligned}
\end{equation}
The constant step cost encourages progress over delay, reaching the goal pays $20$, and leaving the grid costs $-50$. The three $-100$ charges cover the collision events: entering an obstacle grid state, both vehicles entering the same grid state, and the head-on exchange in which each enters the grid state the other vacates. Since $R$ reads only the ego goal $g_i$, it is the swap $\tau$ that makes the two vehicles' payoffs differ: the same function, evaluated at $s$ for one vehicle and at $\tau(s)$ for the other, scores the interaction in terms of their respective goals.

\subsection{Tracking Controller of Case-3}\label{app:mpc}
The full physical state of vehicle $i$ is $\xi_i = (x_i, \varphi_i, v_i)$, which extends the planar position $x_i\!\in\!\mathbb{R}^2$ with the heading $\varphi_i$ and the speed $v_i$. Its motion follows the kinematic bicycle model
\begin{equation}\label{eq:bicycle}
\dot{x}_i = v_i \begin{bmatrix} \cos\varphi_i \\ \sin\varphi_i \end{bmatrix}, \quad
\dot{\varphi}_i = \frac{v_i}{L}\tan\delta_i, \quad
\dot{v}_i = \eta_i,
\end{equation}
where $L$ is the wheelbase and the input $u_i = (\eta_i, \delta_i)$ collects the longitudinal acceleration and the steering angle. 
For digital implementation, the dynamics~\eqref{eq:bicycle} are discretized into $f_d$ at the sampling time $\Delta t_{\mathrm{c}}$ of the physical layer, with $\Delta t_{\mathrm{c}} \!\ll\! t_{k+1} \!-\! t_k$. Given the action $a_i$, the tracking controller $\mu_i$ solves a finite-horizon problem over $N$ steps of $f_d$ to track the reference state $\xi_i^r$ associated with $a_i$:
\begin{equation}\label{eq:mpc_case3}
\begin{aligned}
\min_{u_{i,0:N-1}} \ & \sum_{n=0}^{N-1} \Bigl( \|\xi_{i,n} - \xi_i^r\|_{W}^2 + \|u_{i,n}\|_{W_u}^2 \Bigr) + \|\xi_{i,N} - \xi_i^r\|_{W_f}^2 \\
\text{s.t.} \ & \xi_{i,n+1} = f_d(\xi_{i,n}, u_{i,n}), \quad \xi_{i,0} = \xi_i(t), \\
& u_{i,n} \in \mathcal{U}, \quad n = 0, \ldots, N-1,
\end{aligned}
\end{equation}
where $W, W_f \!\succeq\! 0$ and $W_u \!\succ\! 0$ are weighting matrices, and $\mathcal{U}$ bounds the acceleration and the steering angle. The reference $\xi_i^r$ encodes the committed action $a_i$. The first input of the minimizing sequence is applied and the problem is re-solved at the next sampling instant, in a receding-horizon fashion.

\subsection{Opinion Estimator of Case-3}\label{app:estimator}
Each maneuver in the action set $A_j\!=\!\{\texttt{forward},\, \texttt{back},\, \texttt{left},\, \texttt{right},\, \texttt{wait}\}$ induces a nominal motion at the current state; let $v_{j}^{(a)}$ denote the reference velocity of maneuver $a\!\in\!A_j$ for vehicle $j$, and $v_{j}(t)$ the observed velocity of vehicle $j$ at time $t$. The similarity of the observed velocity to maneuver $a$ is quantified on a common $[0,1]$ scale as
\begin{equation}\label{eq:similarity_case} d_{j}^{a}(t) = \begin{cases} \max\Bigl\{0,\ \dfrac{\langle v_{j}(t),\, v_{j}^{(a)}\rangle}{\|v_{j}(t)\|\,\|v_{j}^{(a)}\| + \epsilon}\Bigr\}, & a \neq \texttt{wait},\\[8pt] \max\Bigl\{0,\ 1 - \dfrac{\|v_{j}(t)\|}{\bar{v}}\Bigr\}, & a = \texttt{wait}. \end{cases} \end{equation}
Here $\bar{v}$ is the nominal speed. Thus, a moving maneuver is scored by direction alone and $\texttt{wait}$ by proximity to rest, placing the two on the same scale. Each opinion state $\tilde{z}$ is then scored by the best-matching rational action in $A_j^{(\tilde{z})}$: $d_{j}^{(\tilde{z})}(t) = \max_{a \in A_j^\rat(s) \cap A_j^{(\tilde{z})}} d_{j}^{a}(t)$. Thus, the opinion estimate is
\begin{equation}\label{eq:estimator_case} \hat{z}_{j}(t) = k_z\, \frac{d_{j}^{(+1)}(t) - d_{j}^{(-1)}(t)}{\sqrt{\bigl(d_{j}^{(+1)}(t) + d_{j}^{(-1)}(t)\bigr)^2 + \epsilon}}, \end{equation}
where $k_z > 0$ is a scaling factor and $\epsilon > 0$ ensures numerical stability. Both cell scores are non-negative, so $|\hat{z}_{j}(t)|\!<\!k_z$, and the sign of $\hat{z}_{j}(t)$ records which cell the observed motion falls in.

\bibliographystyle{ieeetr}
\bibliography{MyBib}

\end{document}